\documentclass[10pt,reqno,a4]{amsart}

\unitlength 1cm
\textwidth 16cm
\textheight 20.5cm
\topmargin 0.0cm
\oddsidemargin 0cm
\evensidemargin 0cm
\parskip 0.2cm
\usepackage[latin1]{inputenc}
\usepackage{multirow}
\usepackage{graphicx,amssymb,amsmath,amscd,amsfonts,color, amsthm,enumerate}
\usepackage{hyperref,commath}

\title[]{The random matrix regime of Maronna's M-estimator for observations corrupted by elliptical noises }
\author[Kammoun et al]{Mohamed-Slim Alouini and Abla Kammoun}
\date{\today}

\newtheorem{theorem}{Theorem}[section]
\newtheorem{remark}{Remark}[section]
\usepackage{appendix}
\newtheorem{lemma}[theorem]{Lemma}
\newtheorem{corollary}[theorem]{Corollary}

\newtheorem{proposition}[theorem]{Proposition}
\newtheorem{assumption}{Assumption A-\hspace{-0.15cm}}

\setlength{\textwidth}{400pt}

\newcommand{\bdm}{\begin{displaymath}}
\newcommand{\edm}{\end{displaymath}}
\newcommand{\bea}{\begin{eqnarray*}}
\newcommand{\eea}{\end{eqnarray*}}

\numberwithin{equation}{section}
\theoremstyle{remark}

\DeclareMathOperator*{\tr}{Tr}

\newcommand{\asto}{\overset{\rm a.s.}{\longrightarrow}}

\begin{document}

\maketitle
\begin{abstract}
This article studies the behavior of the Maronna robust scatter estimator $\hat{C}_N\in \mathbb{C}^{N\times N}$ of a sequence of observations $y_1,\cdots,y_n$ which is composed of a $K$ dimensional signal drown in a heavy tailed noise,  i.e $y_i=A_N s_i+x_i$ where $A_N \in \mathbb{C}^{N\times K}$ and $x_i$ is drawn from elliptical distribution. 
  In particular, we prove that as the population dimension $N$, the number of observations $n$ and the rank of $A_N$ grow to infinity at the same pace and under some mild assumptions,  the robust scatter matrix can be characterized by a random matrix $\hat{S}_N$ that follows a standard random model. Our analysis can be very useful for many applications of the fields of statistical inference and signal processing.
\end{abstract}

\section{Introduction}
Estimation of covariance matrices is at the heart of the theory of multivariate statistical analysis \cite{murihead}. Its importance can be seen from its broad range of applications including financial data analysis, statistical signal processing, and wireless communication.    A natural way to estimate covariance matrices is represented by the sample covariance matrix. Given $n$ observations  $y_1,\cdots,y_n$, of size $N$, independent, and identically distributed, (i.i.d) then the sample covariance matrix is given by $\frac{1}{n}\sum_{i=1}^n y_iy_i^{*}$.  The popularity of the sample covariance matrix essentially comes from its low-complexity and the existence of a good understanding of its behaviour in two asymptotic regimes: $n$ goes to infinty while $N$ is fixed when $N$ and $n$ go to infinity with the same pace.
Recent advances in the theory of large random matrices have made it clear that in the  second asymptotic regime, the sample covariance matrix is no longer consistent. Conventional estimation methods that are based on the use of the sample covariance matrix are thus inefficient when the number of observations and their dimension become commensurate and large.
Such scenario naturally arises in current array processing applications where the trend is to employ large antenna arrays. 
Based on a deep understanding of the behaviour of the sample covariance matrix, a new wave of detection methods \cite{CAR09,BIA09,nadler-10} and subspace estimation techniques \cite{mestre-08,vallet-10,hachem-13} has recently emerged. Although consistent, these methods are bound by the fact that they still fundamentally rely on the sample covariance matrix, their consistency being obtained by resorting to a deep analysis of its asymptotic behaviour. Nevertheless, the use of the sample covariance matrix can lead to poor performances, especially when observations are drawn from heavy tailed distributions or contain outliers.  
In such situations, the use of robust covariance estimators has been aknowledged as an efficient solution to combat the presence of outliers.  Although references to robust techniques are traced back to the eighties with the works of Huber \cite{huber-81} and Maronna \cite{maronna-76}, the study of their performance has been often restricted to the conventional regime where the number of observations is  too large as compared to their dimensions. It was only recenlty that new tools have been developed in  \cite{couillet-13,couillet-13a,couillet-pascal-2013} which allow to analyse the behaviour of robust Maronna's scatter estimators. The main contributors are Couillet et al. who established that the robust scatter estimator can be well-approximated in the asymptotic regime by a random matrix that follows a standard random model. One of the key advantages of this result, is that it allows to bring back the asymptotic analysis of robust-scatters to that of an other random object for which an important load of results already exist.

Despite their high value, these works have been derived only for the case of pure noise observations. While the case of a low rank signal observations can be dealt with  by resorting to easy adaptations of the approach of \cite{couillet-pascal-2013}, handling high-rank signal observations is much more challenging. Building on the tools developed in \cite{couillet-pascal-2013}, we propose in this work to analyse this difficult scenario. We show that in this case the adaption of the method in \cite{couillet-pascal-2013} is not immediate and necessitates the development of additional appropriate tools. Some of the required results that were of independent interest were submitted in an other work which can be found in \cite{smallest_eigenvalue}.

{\it Notations:} In the remainder of this work, we shall denote $\lambda_1(X)\leq \cdots\leq \lambda_N(X)$ the real eigenvalues of $n\times n$ Hermitian matrix $X$. The  notation  $\|.\|$ will refer to the spectral norm of matrices and Euclidean norm for vectors, while $^*$ sill stand for the complex conjugate operator. The derivative of a differentiable function $f$ will be denoted by $f^{'}$.
 %In the first regime, it is known that this conventional estimation method is consistent in that it tends to the population covariance matrix when $n$ tends to infinity. %This behavior completely changes when $n$ and $N$ are of the same order of magntiude.    
\section{Assumptions and Main results} 
We start by introducing the data model under study. We consider $n$ sample vectors $y_1,\cdots,y_n\in\mathbb{C}^{N}$ satisfying:
$$
y_i=A_N s_i +x_i, i=1,\cdots,n,
$$
where $A_N$ is a $N\times K$ deterministic matrix and $x_1,\cdots,x_n$ are random vectors defined as:
$$
x_i=\sqrt{\tau_i}w_i
$$
with the scalars $\tau_1,\cdots,\tau_n\in\mathbb{R}_{+}$. 
Let $\overline{N}=K+N$.
We denote by $c_N=\frac{N}{n}$   and considers the following assumptions:
\begin{assumption}
\label{ass:regime}
For each $N$, $c_N<1$, $\overline{c}_{{N}}\geq 1$, and 
$$
c_{-} < \lim\inf_n c_N < \lim\sup_n c_N <c_{+},
$$
with $0<c_{-}<c_{+}<1$.
\end{assumption}
This paper studies the asymptotic behaviour of the Maronna's M-robust scatter estimator in the regime  of Assumption \ref{ass:regime}. We recall that the Maronna's M-robust estimator which we denote by $\hat{C}_N$ is given by the unique solution in $Z$ of the following equation:
\begin{equation}
Z=\frac{1}{n}\sum_{i=1}^n u\left(\frac{1}{N}y_i^*Z^{-1}y_i\right) y_iy_i^*.
\label{eq:Z}
\end{equation}
where function $u(\cdot)$ satisfies the following properties:
\begin{assumption}
\begin{enumerate}[i)]
\item Function $u(\cdot):[0,\infty)\to[0,\infty)$ is non-negative continuous and non-increasing,
\item The function $\phi(\cdot):x\mapsto xu(x)$ is increasing, bounded and continuously differentiable with $\lim_{x\to\infty}\phi(x)\triangleq \phi_{\infty} >1$ and $\phi'>0$.
\item $\phi_{\infty}< c_{+}^{-1}$.
\end{enumerate}
\label{ass:u}
\end{assumption}
and the scalars $\tau_i$ are such that:
\begin{assumption}
\begin{enumerate}[i)]
\item The random empirical measure $\nu_n=\frac{1}{n}\sum_{i=1}^n \delta_{\tau_i}$ converges weakly to $\nu$ which satisfies $\int x\nu(x)=1$,
\item There exists $\epsilon <1-\phi_{\infty}^{-1}<1-c_{+}$ and $m>0$ such that for all large $n$ a.s., $\nu\left(\left[0,m\right)\right) <\epsilon$.
\end{enumerate}
\label{ass:measure}
\end{assumption}
The conditions in Assumption \ref{ass:u}  are the same as those considered in \cite{couillet-pascal-2013}. It is worth observing that Assumption \ref{ass:u}-$ii)$ is different from the one considered by Maronna in \cite{maronna-76}, in that $\phi$ is not allowed to be constant on any open interval. However, Assumption \ref{ass:u}:-$iii)$ is much more adapted to the high-dimensional regime than Assumption $(D)$ p.53  of \cite{maronna-76}, which requires that $\phi_{\infty} >N$.

Assumption \ref{ass:measure} is different from the original assumption in \cite{couillet-pascal-2013} as we assume here the weak convergence of the empirical measure $\nu_n$. However, one can easily see by the Portmanteau lemma that Assumption \ref{ass:measure} will bring about the same useful requirements, namely the a.s. tightness of $\left\{\nu_n\right\}_{n=1}^{\infty}$, i.e., for each $\eta>0$, there exists $M>0$ such that with probability one, $\nu_n\left(\left[M,\infty\right)\right)<\eta$, along with the absence of a heavy mass concentrating close to zero ($\nu_n\left[0,m\right)<\epsilon$ for $n$ large enough a.s.).

The statistical hypothesis on $y_1,\cdots,y_n$ is detailed below:
\begin{assumption}
\begin{enumerate}[i)]
\item $w_1,\cdots,w_n\in\mathbb{C}^{{N}}$ are independent invariant complex zero-mean vectors with for each $i$, $\|w_i\|^2=N$ and are independent of $\tau_1,\cdots,\tau_n$,
\item $s_i\sim\mathcal{CN}(0,I_K),i=1,\cdots,n$ are independent standard Gaussian distributed vectors.
\item Define ${B}_N=A_NA_N^{*}$, then $\lim\sup\left\|B_N\right\|<\infty$ and $\lim\inf \frac{1}{N}\tr B_N >0$.
\end{enumerate}
\label{ass:statistical}
\end{assumption}
In addition to the above assumptions, the following hypothesis might be required:
\begin{assumption}
For each $a>b>0$, a.s.,
$$
\lim\sup_{t\to\infty} \frac{\lim\sup_n\nu_n(t,\infty)}{\phi(at)-\phi(bt)}=0.
$$
\label{ass:unbounded}
\end{assumption}

\begin{theorem}[Uniqueness]
Let Assumptions \ref{ass:regime}-\ref{ass:statistical} hold true. Then, for all large $n$, \eqref{eq:Z} admits a unique solution $\hat{C}_N$. Moreover, $\hat{C}_N$ is the limit of of the sequence $Z^{(t)}$ given by:
$$
Z^{(t+1)}=\frac{1}{n}\sum_{i=1}^n u\left(\frac{1}{N}y_i^{*}\left(Z^{t}\right)^{-1}y_i\right)y_iy_i^*,
$$
where $Z^{(0)}\succeq 0$.
\label{th:uniqueness}
\end{theorem}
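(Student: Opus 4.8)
The plan is to follow the classical fixed-point strategy of Maronna, adapted to the high-dimensional regime as in \cite{couillet-pascal-2013}, but with the extra care needed because the observations now carry a high-rank signal component $A_N s_i$. Write $y_i = A_N s_i + x_i$ and set $\hat{R}_n = \frac{1}{n}\sum_{i=1}^n y_i y_i^*$. The first step is to establish the \emph{well-posedness and boundedness} of the map
\[
\Psi(Z) = \frac{1}{n}\sum_{i=1}^n u\!\left(\tfrac{1}{N} y_i^* Z^{-1} y_i\right) y_i y_i^*
\]
on the cone of positive definite matrices. Since $\phi(x)=xu(x)$ is bounded by $\phi_\infty$, one gets $\Psi(Z) \preceq \phi_\infty \cdot \frac{1}{N}\big(\tfrac{1}{n}\sum_i \,\cdot\,\big)$-type bounds; concretely $\Psi(Z) \preceq \frac{N}{n}\phi_\infty \lambda_{\max}$-controlled, so any fixed point has bounded spectral norm. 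For the lower bound one needs $\lambda_{\min}(\hat{C}_N)$ bounded away from $0$; this is where Assumption \ref{ass:measure}-$ii)$ (no mass of the $\tau_i$ near zero) together with Assumption \ref{ass:statistical}-$iii)$ ($\liminf \frac1N\tr B_N>0$, $\limsup\|B_N\|<\infty$) and the control of the smallest eigenvalue of the relevant Gram matrix from \cite{smallest_eigenvalue} enter: one shows that for any $Z$ with $\lambda_{\min}(Z)$ small, enough of the $\frac1N y_i^* Z^{-1} y_i$ are large, but $u$ stays positive, and the sum $\frac1n\sum_i u(\cdot) y_i y_i^*$ cannot collapse because the $y_i y_i^*$ span richly (again via the smallest-eigenvalue estimate applied to the $\overline N \times n$ concatenation of signal and noise).

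The second step is \emph{existence}. Following Maronna, one considers the iteration $Z^{(t+1)} = \Psi(Z^{(t)})$ and shows it is well-defined and stays in a compact subset of the PD cone for $n$ large a.s., using the bounds from step one to pin $c_1 I \preceq Z^{(t)} \preceq c_2 I$ for deterministic $0<c_1<c_2<\infty$ once $t\geq 1$, regardless of $Z^{(0)}\succeq 0$ (the condition $\phi_\infty < c_+^{-1}$ in Assumption \ref{ass:u}-$iii)$ is precisely what keeps $c_2$ finite in the regime $c_N<c_+$). Compactness plus continuity of $\Psi$ then yields a fixed point by Brouwer, or alternatively one extracts a convergent subsequence.

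The third step is \emph{uniqueness and convergence of the iteration}, which is the heart of the matter. The standard trick is to pass to the order induced by the cone and exploit a concavity/monotonicity structure: if $Z_1 \preceq Z_2$ then $Z_1^{-1}\succeq Z_2^{-1}$, so $y_i^* Z_1^{-1} y_i \geq y_i^* Z_2^{-1} y_i$, and since $u$ is non-increasing while $\phi$ is increasing, the map $Z \mapsto \Psi(Z)$ has a contraction-type property in the Thompson (part) metric $d(Z_1,Z_2) = \|\log(Z_2^{-1/2} Z_1 Z_2^{-1/2})\|$. One shows $d(\Psi(Z_1),\Psi(Z_2)) \leq \kappa\, d(Z_1,Z_2)$ with $\kappa<1$; the crucial input is that $\sup_x \frac{x\phi'(x)}{\phi(x)} < 1$, which follows from $\phi$ being bounded and increasing with $\phi'>0$ (Assumption \ref{ass:u}-$ii)$) — this bounds the ``logarithmic derivative'' of the scalar map $x\mapsto u(e^{-s}x)$ uniformly. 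Banach's fixed point theorem on the complete metric space $(\mathrm{PD}, d)$ then gives both uniqueness of $\hat{C}_N$ and the convergence $Z^{(t)}\to\hat{C}_N$ for every $Z^{(0)}\succeq 0$.

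The main obstacle, and the reason the signal case is genuinely harder than \cite{couillet-pascal-2013}, is keeping the uniform bounds $c_1 I \preceq Z^{(t)} \preceq c_2 I$ robust under the presence of the high-rank component $A_N s_i$: the quadratic forms $\frac1N y_i^* Z^{-1} y_i$ now mix a genuinely random heavy-tailed part with a rank-$K$ Gaussian part, and controlling their lower tail uniformly over the PD cone requires the sharp smallest-eigenvalue bound for the combined $(K+N)\times n$ data matrix borrowed from \cite{smallest_eigenvalue}; without it the contraction constant $\kappa$ cannot be shown to stay below $1$ with probability one for all large $n$.
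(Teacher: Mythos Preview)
Your contraction step has a genuine gap. You claim that $\sup_x \frac{x\phi'(x)}{\phi(x)} < 1$ follows from Assumption~\ref{ass:u}. It does not: since $\phi(x)=xu(x)$ with $u$ continuous and $u(0)>0$, one has $\frac{x\phi'(x)}{\phi(x)} = 1 + \frac{xu'(x)}{u(x)} \to 1$ as $x\to 0^+$, so the supremum is exactly $1$. Consequently the map $\Psi$ is \emph{not} a strict contraction in the Thompson metric with a constant $\kappa<1$, and Banach's fixed-point theorem does not apply. What the assumptions do give is the strict \emph{scalability} inequality $\Psi(\alpha Z)\prec \alpha\,\Psi(Z)$ for $\alpha>1$ (equivalently $u(\alpha^{-1}t)<\alpha u(t)$, because $\phi$ is strictly increasing), but without a uniform gap.

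This is precisely why the paper takes a different route. It first rewrites the matrix fixed-point problem as an $n$-dimensional scalar system: using $\hat{C}_{(j)}=\hat{C}_N-\tfrac{1}{n}u(\cdot)y_jy_j^*$ and the rank-one identity, one gets $\tfrac{1}{N}y_j^*\hat{C}_N^{-1}y_j = g_N^{-1}\big(\tfrac{1}{N}y_j^*\hat{C}_{(j)}^{-1}y_j\big)$ with $g_N(x)=x/(1-c_N\phi(x))$, so that $\hat{C}_N$ exists and is unique iff the system
\[
q_j \;=\; h_j(q_1,\dots,q_n)\;\triangleq\;\tfrac{1}{N}\,y_j^*\Big(\tfrac{1}{n}\sum_{i\neq j} v(q_i)\,y_iy_i^*\Big)^{-1}y_j,\qquad v=u\circ g_N^{-1},
\]
has a unique positive solution. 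One then checks that $h=(h_1,\dots,h_n)$ is a \emph{standard interference function} in Yates' sense \cite{yates}: positivity and monotonicity are immediate, and scalability $\alpha h_j(q)>h_j(\alpha q)$ follows from $\psi_N(x)=xv(x)$ being strictly increasing. Yates' framework then delivers uniqueness and convergence of the iterates \emph{without} any uniform contraction constant, provided one exhibits a single feasible vector, i.e.\ some $(q_1,\dots,q_n)$ with $q_j>h_j(q)$ for all $j$, for all large $n$ a.s.

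The random-matrix work therefore lives entirely in the \emph{feasibility} step, not in a contraction estimate. The paper produces the explicit supersolution $q_j=q_N^+(\tau_j+\eta_N)$, where $q_N^+$ is chosen so that $\int\psi_N(q_N^+ t)\,\nu(dt)$ exceeds $(1-c_N\phi_\infty)^{-1}$ and $\eta_N$ solves a scalar equation tied to $B_N$ and $\nu$ (Proposition~\ref{prop:inequality}). Showing $h_j(q)/q_j<1$ uniformly in $j$ then reduces to controlling quadratic forms of the type $\tfrac{1}{N(\tau_j+\eta_N)}y_j^*\big(\tfrac{1}{n}\sum_{i\neq j}\frac{f_N(\tau_i)}{\tau_i+\eta_N}y_iy_i^*\big)^{-1}y_j$ uniformly over $j$; this is Lemma~\ref{lemma:quadratic_alpha}, whose proof indeed requires the smallest-eigenvalue bound from \cite{smallest_eigenvalue} (via Lemma~\ref{lemma:smallest_eigenvalue}) together with the deterministic equivalents of \cite{wagner}. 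Your instinct that the high-rank signal makes this step harder is correct, but the difficulty is absorbed into Proposition~\ref{prop:inequality} and Lemma~\ref{lemma:quadratic_alpha}, not into tightening a contraction constant.
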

\begin{theorem}
Let Assumptions \ref{ass:regime}-\ref{ass:unbounded} hold. Let $\hat{C}_N$ be given by Theorem \ref{th:uniqueness} when uniquely defined. Then,
$$
\|\hat{C}_N-\hat{S}_N\|\asto 0
$$
where
$$
\hat{S}_N=\frac{1}{n}\sum_{i=1}^n v(\delta_i)y_iy_i^*
$$
and $\delta_1,\cdots,\delta_n$ are the unique positive solutions in $x_1,\cdots,x_n$ to the following system of equations
\begin{equation}
x_i=\frac{1}{N}\tr \left(B_N+\tau_i I_N\right)\left(\frac{1}{n}\sum_{j=1}^n \frac{v(x_j)(B_N+\tau_jI_N)}{1+c\psi(x_j)}\right),
\label{eq:x}
\end{equation}
with the functions $v:x\mapsto \left(u\circ g^{-1}\right)(x)$, $\psi(\cdot):x\mapsto xv(x)$ and $g(\cdot):\mathbb{R}_{+}\to \mathbb{R}, x\mapsto x/(1-c_N\phi(x))$.
\label{th:asymptotic}
\end{theorem}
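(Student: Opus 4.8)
The plan is to follow the leave--one--out and Sherman--Morrison strategy of \cite{couillet-pascal-2013}, adapting each step to the rank-$K$ signal term $A_N s_i$, whose main effect is to replace the population covariances $\tau_i I_N$ that govern the quadratic forms $\frac1N y_i^* Z^{-1} y_i$ by $\Sigma_i := B_N + \tau_i I_N$. The first step eliminates $u$ in favour of $v$. Writing $\hat{C}_{(i)} := \hat{C}_N - \frac1n u(d_i) y_i y_i^*$ with $d_i := \frac1N y_i^* \hat{C}_N^{-1} y_i$, a Sherman--Morrison identity reduces to $e_i = d_i / (1 - c_N \phi(d_i)) = g(d_i)$, where $e_i := \frac1N y_i^* \hat{C}_{(i)}^{-1} y_i$. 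By Assumption~\ref{ass:u}, $\phi$ is increasing with $\sup \phi = \phi_\infty < c_+^{-1}$, so $1 - c_N \phi > 0$ for all large $n$ and $g$ is an increasing $C^1$ bijection of $[0,\infty)$; hence $d_i = g^{-1}(e_i)$, $u(d_i) = v(e_i)$, and
\[
\hat{C}_N = \frac1n \sum_{i=1}^n v(e_i)\, y_i y_i^*, \qquad \hat{C}_{(i)} \ \text{independent of}\ y_i .
\]

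The heart of the argument is to prove that $e_i$ and $\delta_i$ are uniformly close in the relative sense appropriate when the $\tau_i$ are unbounded, i.e.\ $\max_{1\le i\le n} |e_i/\delta_i - 1| \asto 0$. This requires, first, a uniform lower bound on the $e_i$ and a uniform upper bound on $\|\hat{C}_N^{-1}\|$ and $\|\hat{C}_{(i)}^{-1}\|$: the latter follows, via $\hat{C}_{(i)} \succeq (\min_j v(e_j)) \frac1n \sum_{j \ne i} y_j y_j^*$, from a uniform lower bound on the smallest eigenvalue of the signal--plus--elliptical--noise sample covariance matrix --- the technical input borrowed from \cite{smallest_eigenvalue}, which exploits $c_N < 1$ (Assumption~\ref{ass:regime}) and is genuinely harder here because $\rank{B_N}$ is comparable to $N$ --- combined with the absence of mass of $\nu_n$ near zero (Assumption~\ref{ass:measure}) and with Assumption~\ref{ass:unbounded} to discard atypically large $\tau_i$; the apparent circularity is broken, as in \cite{couillet-pascal-2013}, by propagating these bounds along the iteration $Z^{(t)}$ of Theorem~\ref{th:uniqueness}. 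Second, with these bounds in force, the trace lemma (quadratic--form concentration) applied to $e_i = \frac1N y_i^* \hat{C}_{(i)}^{-1} y_i$, using that $y_i = A_N s_i + \sqrt{\tau_i}\, w_i$ is a sum of a Gaussian vector and a vector uniform on the sphere of radius $\sqrt N$ --- both independent of $\hat{C}_{(i)}$ and enjoying exponential concentration --- together with a union bound over $i = 1, \dots, n$, yields $\max_i |e_i - \frac1N \tr \Sigma_i \hat{C}_{(i)}^{-1}| / (1 + \tau_i) \asto 0$, and a rank-one perturbation bound replaces $\hat{C}_{(i)}$ by $\hat{C}_N$. Third, feeding $\hat{C}_N = \frac1n \sum_j v(e_j) y_j y_j^*$ into the standard deterministic--equivalent calculus for weighted sample covariance matrices --- resolvent identities, Sherman--Morrison term by term, the trace lemma again --- shows that $\frac1N \tr \Sigma_i \hat{C}_N^{-1}$ coincides, uniformly in $i$ up to a relative $o(1)$, with the right-hand side of \eqref{eq:x} evaluated at $(e_1, \dots, e_n)$; thus $(e_i)_i$ is an approximate solution of \eqref{eq:x}. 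Finally one shows that \eqref{eq:x} admits a unique positive solution $(\delta_i)_i$ and that the map it defines is, on the relevant range, a contraction (equivalently, a standard concave self-map), so that the approximate solution is relatively $o(1)$-close to the exact one.

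With $\max_i |e_i/\delta_i - 1| \asto 0$ at hand, the conclusion is immediate. Since $y_i y_i^* \succeq 0$, writing $v(e_i) = (1 + \varepsilon_i) v(\delta_i)$ with $\varepsilon_i := v(e_i)/v(\delta_i) - 1$ gives
\[
\| \hat{C}_N - \hat{S}_N \| = \Big\| \frac1n \sum_{i=1}^n \varepsilon_i\, v(\delta_i)\, y_i y_i^* \Big\| \le \big( \max_{1 \le i \le n} |\varepsilon_i| \big)\, \| \hat{S}_N \| \asto 0 ,
\]
because $\| \hat{S}_N \| = O(1)$ a.s.\ by classical random matrix arguments (the decay $v(x) \sim \phi_\infty/((1 - c\phi_\infty)x)$ at infinity tames the contribution of large $\tau_i$), while $\max_i |\varepsilon_i| \asto 0$ follows from $\max_i |e_i/\delta_i - 1| \asto 0$, the uniform lower bound on the $\delta_i$, and the smoothness of $v$ and $\psi = x v(x)$ (inherited from $\phi \in C^1$ with $\phi' > 0$, which makes $g$ a $C^1$ diffeomorphism and $u(x) = \phi(x)/x$ of class $C^1$ on $(0,\infty)$).

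The main obstacle is the uniform control of $\|\hat{C}_N^{-1}\|$ and $\|\hat{C}_{(i)}^{-1}\|$: in the pure--noise setting of \cite{couillet-pascal-2013} this rested on classical bounds for the smallest eigenvalue of sample covariance matrices, whereas here one genuinely needs the new smallest--eigenvalue estimates of \cite{smallest_eigenvalue}, $B_N$ having rank comparable to $N$. Entangled with it, the deterministic--equivalent bookkeeping must now carry the full matrices $\Sigma_i = B_N + \tau_i I_N$ throughout rather than scalar multiples of the identity, and Assumption~\ref{ass:unbounded} is precisely what keeps the self--consistent system \eqref{eq:x} well posed, with a solution obeying the required uniform lower bound, once unbounded $\tau_i$ are permitted.
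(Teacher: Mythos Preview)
Your high-level plan matches the paper's: rewrite $\hat C_N$ via Sherman--Morrison in terms of $v$ and the leave-one-out quadratic forms $e_i=\frac1N y_i^*\hat C_{(i)}^{-1}y_i$, establish uniform a priori bounds on these and on the $\delta_i$, and compare. But the step you describe as ``feeding $\hat C_N=\frac1n\sum_j v(e_j)y_jy_j^*$ into the standard deterministic-equivalent calculus'' conceals the central difficulty and is where your sketch has a genuine gap. The weights $v(e_j)$ depend on \emph{all} the $y_k$'s through the implicit fixed-point equation, so $\hat C_N$ is not a weighted sample covariance with deterministic (or data-independent) weights, and the standard deterministic-equivalent machinery does not apply to it directly; concluding that $(e_i)$ is an approximate solution of \eqref{eq:x} this way is circular. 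The ``contraction'' fallback you propose does not break that circularity either, since you would first need to know that $(e_i)$ already sits in the basin where the map contracts, which is precisely the conclusion sought.

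The paper resolves this not via contraction but through a monotonicity substitution. One works directly with the ratios $f_i=v(q_i)/v(\delta_i)$ (not with $e_i/\delta_i$), orders $f_1\le\cdots\le f_n$, and replaces each $f_i$ by the extremal $f_n$ (resp.\ $f_1$) inside $\hat C_{(j)}$; because $v$ is non-increasing this yields one-sided matrix inequalities in which the weights become the \emph{deterministic} numbers $v(\delta_i)$. To those matrices the quadratic-form and deterministic-equivalent lemmas (Proposition~\ref{prop:inequality} and Lemma~\ref{lemma:quadratic_alpha}, and their consequences Lemmas~\ref{lemma:quadratic_form_total}--\ref{lemma:quadratic_form_total_kappa}) legitimately apply, producing a self-referencing inequality of the type $f_n\le v\bigl(\delta_n(1-o(1))/f_n\bigr)\big/v(\delta_n)$, from which $\limsup f_n\le 1$ follows by contradiction (symmetrically $\liminf f_1\ge 1$). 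When the $\tau_i$ are unbounded, an additional truncation at level $M_\kappa$ and intermediate fixed points $\delta_i^\kappa$ are introduced, and Assumption~\ref{ass:unbounded} is invoked precisely to control the contribution of indices with $\tau_i>M_\kappa$; this is also where the passage from closeness of $e_i/\delta_i$ to closeness of $v(e_i)/v(\delta_i)$ --- which you dismiss as ``smoothness'' --- actually requires a threshold-splitting argument, since for large $\delta_i$ the Lipschitz bound on $\psi$ alone is not uniform. In short, the decoupling of the random weights from the data via the monotone sandwich $f_1\le f_i\le f_n$ is the heart of the proof, and your outline does not supply a mechanism for it.
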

\begin{corollary}
Let Assumptions \ref{ass:regime}-\ref{ass:unbounded} hold true. Let $\hat{C}_N$ be the  solution of \eqref{eq:Z} when uniquely defined. Assume further that the empirical distribution $F^{B_N}$ converges in distribution to $F^{B}$, a cumulative distribution function and $c_N\to c$. Set $\chi_{\infty}$ and $\gamma_{\infty}$ the unique solutions to the following system of equations:
\begin{align*}
\chi_{\infty}&=\int \frac{y}{\int_0^{+\infty}\frac{v(\chi_{\infty}+t\gamma_{\infty})(y+t)}{1+c\psi(\chi_{\infty}+t\gamma_{\infty})}\nu(dt)}F^{B}(dy) \\
\gamma_{\infty}&=\int \frac{1}{\int_0^{+\infty}\frac{v(\chi_{\infty}+t\gamma_{\infty})(y+t)}{1+c\psi(\chi_{\infty}+t\gamma_{\infty})}\nu(dt)}F^{B}(dy).
\end{align*}
Then,
$$
\left\|\hat{C}_N-S_N\right\|\asto 0
$$
where $S_N=\frac{1}{n}\sum_{i=1}^n v(\chi_{\infty}+\tau_i\gamma_{\infty})y_iy_i^*$.
\end{corollary}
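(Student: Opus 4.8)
The plan is to derive the corollary from Theorem~\ref{th:asymptotic} by a soft perturbation argument on the fixed-point system; essentially all of the analytic work has already been invested in Theorems~\ref{th:uniqueness}--\ref{th:asymptotic}. By Theorem~\ref{th:asymptotic} we already have $\|\hat C_N-\hat S_N\|\asto 0$, where $\hat S_N=\frac1n\sum_{i=1}^n v(\delta_i)y_iy_i^*$ and $\delta_1,\cdots,\delta_n$ solve \eqref{eq:x}; by the triangle inequality it is enough to prove $\|\hat S_N-S_N\|\asto 0$. Writing $\hat S_N-S_N=\frac1n\sum_i\bigl(v(\delta_i)-v(\chi_\infty+\tau_i\gamma_\infty)\bigr)y_iy_i^*$ and using that the weights are real, one gets
$$
\bigl\|\hat S_N-S_N\bigr\|\ \le\ \left(\max_{1\le i\le n}\bigl|v(\delta_i)-v(\chi_\infty+\tau_i\gamma_\infty)\bigr|\right)\left\|\frac1n\sum_{i=1}^n y_iy_i^*\right\| .
$$
Under Assumptions~\ref{ass:regime}--\ref{ass:statistical} the sample covariance matrix $\frac1n\sum_i y_iy_i^*$ has a.s.\ bounded spectral norm (heavy tails are harmless since $\|w_i\|^2=N$ and $\{\nu_n\}$ is a.s.\ tight; this is part of the apparatus underlying Theorem~\ref{th:asymptotic}, see also \cite{smallest_eigenvalue}). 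Hence everything reduces to $\max_{1\le i\le n}\bigl|v(\delta_i)-v(\chi_\infty+\tau_i\gamma_\infty)\bigr|\asto 0$.

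Next I would exploit the structure of \eqref{eq:x}: the matrix appearing in its bracket is a nonnegative combination $a_NB_N+b_NI_N$ of $B_N$ and $I_N$, with $a_N=\frac1n\sum_j\frac{v(\delta_j)}{1+c\psi(\delta_j)}$ and $b_N=\frac1n\sum_j\frac{\tau_j v(\delta_j)}{1+c\psi(\delta_j)}$, so it is a fixed scalar function $\Lambda_N$ of $B_N$. Therefore \eqref{eq:x} forces $\delta_i=\chi_N+\tau_i\gamma_N$ with the $i$-independent scalars $\chi_N=\frac1N\tr(B_N\Lambda_N)$ and $\gamma_N=\frac1N\tr\Lambda_N$; feeding this back shows $(\chi_N,\gamma_N)$ solves a \emph{two-dimensional} system whose right-hand side is built from the $\nu_n$-averages $\frac1n\sum_j f_k(\tau_j;\chi_N,\gamma_N)$, $k\in\{0,1\}$, with $f_k(\tau;\chi,\gamma)=\tau^k v(\chi+\tau\gamma)/(1+c\psi(\chi+\tau\gamma))$, together with the spectral statistics $\frac1N\tr(B_N^k\Lambda_N)$. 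Since $\psi$ is bounded and $v$ is non-increasing, the $f_k$ are bounded, continuous and possess finite limits as $\tau\to\infty$, locally uniformly in $(\chi,\gamma)$; using Assumption~\ref{ass:measure}-$i)$ ($\nu_n\Rightarrow\nu$, which in particular makes $\{\nu_n\}$ tight), the assumed $F^{B_N}\Rightarrow F^{B}$, $c_N\to c$ and $\limsup\|B_N\|<\infty$, the finite-$N$ two-dimensional system converges, locally uniformly, to the limiting system defining $(\chi_\infty,\gamma_\infty)$. Because it is part of the machinery behind Theorem~\ref{th:asymptotic} that $(\chi_N,\gamma_N)$ lies, a.s.\ for all large $n$, in a fixed compact subset of $(0,\infty)^2$, and because the limiting system has a \emph{unique} solution by hypothesis, precompactness together with passing to the limit along an arbitrary subsequence yields $(\chi_N,\gamma_N)\asto(\chi_\infty,\gamma_\infty)$ — no separate contraction estimate being needed, uniqueness being granted.

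Finally I would conclude from $|\delta_i-(\chi_\infty+\tau_i\gamma_\infty)|\le|\chi_N-\chi_\infty|+\tau_i|\gamma_N-\gamma_\infty|$, splitting over $\{\tau_i\le M\}$ and $\{\tau_i>M\}$. On $\{\tau_i\le M\}$ the right-hand side is $\le|\chi_N-\chi_\infty|+M|\gamma_N-\gamma_\infty|\to0$, and since $v$ is uniformly continuous on the compact interval $[c_0,\chi_\infty+M\gamma_\infty+1]$ (for a suitable $c_0>0$ lower-bounding the relevant arguments), the corresponding contribution to $\max_i|v(\delta_i)-v(\chi_\infty+\tau_i\gamma_\infty)|$ vanishes. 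On $\{\tau_i>M\}$ both $\delta_i$ and $\chi_\infty+\tau_i\gamma_\infty$ exceed $c_1M$ for a fixed $c_1>0$, so both $v$-values are at most $v(c_1M)$ ($v$ non-increasing), and $v(c_1M)\to0$ as $M\to\infty$ because $\phi$ bounded forces $v(x)\to0$ as $x\to\infty$. Letting $M\to\infty$ after $n\to\infty$ gives $\max_i|v(\delta_i)-v(\chi_\infty+\tau_i\gamma_\infty)|\asto0$, which finishes the proof. The hard part here is not conceptual but bookkeeping: the maximum runs over an index set of size $n\to\infty$ with the $\tau_i$ not uniformly bounded, which is precisely what forces the split on $\{\tau_i>M\}$ and the use of $v(x)\to0$; establishing the a.s.\ precompactness of $(\chi_N,\gamma_N)$ and the equicontinuity of the finite-$N$ maps needed to pass to the limit are the remaining (routine) technical points.
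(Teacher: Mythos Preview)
Your overall strategy coincides with the paper's: observe that the bracket in \eqref{eq:x} is a polynomial in $B_N$, deduce $\delta_i=\hat\chi_N+\tau_i\hat\gamma_N$ for scalars $\hat\chi_N,\hat\gamma_N$ independent of $i$, reduce \eqref{eq:x} to a two-dimensional fixed-point system, use the a priori bounds on $\delta_i$ (Lemma~\ref{lemma:control_boundedness}) to get precompactness of $(\hat\chi_N,\hat\gamma_N)$, and pass to subsequential limits, identifying the limit by uniqueness of the limiting system. The paper in addition sketches that uniqueness itself follows from the standard-interference-function framework of \cite{yates}, rather than simply reading it off the statement.

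There is, however, one genuine gap in your final step. The inequality
\[
\bigl\|\hat S_N-S_N\bigr\|\ \le\ \Bigl(\max_{1\le i\le n}\bigl|v(\delta_i)-v(\chi_\infty+\tau_i\gamma_\infty)\bigr|\Bigr)\,\Bigl\|\tfrac1n\sum_{i=1}^n y_iy_i^*\Bigr\|
\]
is not usable in the unbounded-$\tau_i$ regime: tightness of $\{\nu_n\}$ does \emph{not} imply that $\|\tfrac1n\sum_iy_iy_i^*\|$ stays bounded. Since $y_i=A_Ns_i+\sqrt{\tau_i}\,w_i$ with $\|w_i\|^2=N$, the single rank-one term indexed by $i^\star=\arg\max_i\tau_i$ already contributes of order $\tfrac1n\tau_{i^\star}\|w_{i^\star}\|^2=c_N\tau_{i^\star}$ to the spectral norm, and $\max_i\tau_i$ is typically unbounded under Assumptions~\ref{ass:measure}--\ref{ass:unbounded}. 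Your subsequent split over $\{\tau_i\le M\}$ and $\{\tau_i>M\}$ controls the \emph{first} factor but leaves the second one untouched; the product need not vanish.

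The repair is to use a multiplicative (sandwich) bound instead of the additive one, exactly as in the proof of Theorem~\ref{th:asymptotic}: from $(\hat\chi_N,\hat\gamma_N)\to(\chi_\infty,\gamma_\infty)$ and the same $\{\tau_i\le M\}$--$\{\tau_i>M\}$ split one gets $\max_i\bigl|v(\delta_i)/v(\chi_\infty+\tau_i\gamma_\infty)-1\bigr|\to0$ (use that both $\delta_i/(\chi_\infty+\tau_i\gamma_\infty)\to1$ and $\psi(\delta_i)/\psi(\chi_\infty+\tau_i\gamma_\infty)\to1$ uniformly in $i$), hence $(1-\ell)S_N\preceq\hat S_N\preceq(1+\ell)S_N$ for any $\ell>0$ and all large $n$. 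Then one only needs $\|S_N\|$ bounded, and this \emph{is} true because the weights temper the heavy tails: $v(\chi_\infty+\tau_i\gamma_\infty)\le\psi_\infty/(\gamma_\infty\tau_i)$, so $v(\chi_\infty+\tau_i\gamma_\infty)\,\tau_iw_iw_i^*\preceq(\psi_\infty/\gamma_\infty)\,w_iw_i^*$, and $\tfrac1n\sum_iw_iw_i^*$ has a.s.\ bounded norm.
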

\begin{proof}
Let $\delta_1,\cdots,\delta_n$ be the solution of the system of equations \eqref{eq:x}. Let $T_N$ be given by:
$$
T_N=\left(\frac{1}{n}\sum_{j=1}^n \frac{v(\delta_j)(B_N+\tau_j I_N)}{1+c_N\psi(\delta_i)}\right)^{-1}.
$$
Let $\hat{\chi}_N=\frac{1}{N}\tr B_N T_N$ and $\hat{\gamma}_N=\frac{1}{N}\tr T_N$. Then,
$$
\delta_j=\hat{\chi}_N+\tau_j \hat{\gamma}_N, \hspace{0.1cm} j=1,\cdots,n.
$$
Noticing that $\hat{\chi}_N$ and $\hat{\gamma}_N$ satisfy:
\begin{align}
\hat{\chi}_N&=\frac{1}{N}\tr B_N\left(\frac{1}{n}\sum_{j=1}^n \frac{v(\hat{\chi}_N+\tau_j \hat{\gamma}_N)(B_N+\tau_j I_N)}{1+c_N \psi(\hat{\chi}_N+\tau_j\hat{\gamma_N})}\right)^{-1} \label{eq:chi}\\
\hat{\gamma}_N&=\frac{1}{N}\tr \left(\frac{1}{n}\sum_{j=1}^n \frac{v(\hat{\chi}_N+\tau_j \hat{\gamma}_N)(B_N+\tau_j I_N)}{1+c_N \psi(\hat{\chi}_N+\tau_j\hat{\gamma_N})}\right)^{-1}, \label{eq:gamma}
\end{align}
it is not difficult to see that solving the system of the $n$ equations in \eqref{eq:x} can be reduced to determining the solutions of a two equations system, whose solutions are $\hat{\chi}_N$ and $\hat{\gamma}_N$.
The control of $\delta_j$ in Lemma \ref{lemma:control_boundedness} allow us to ensure that $\hat{\chi}_N$ and $\hat{\gamma}_N$ are uniformly bounded for enough large $n$ a.s. Hence, there exists a subsequence over which $\hat{\gamma}_N$ and $\hat{\chi}_N$ converge to $\hat{\gamma}_{\infty}$ and $\hat{\chi}_\infty$. Taking the limits of both sides of \eqref{eq:chi} and \eqref{eq:gamma}, we obtain
\begin{align}
\chi_{\infty}&=\int \frac{y}{\int_0^{+\infty} \frac{v(\chi_{\infty}+t\gamma_{\infty})(y+t)}{1+c\psi(\chi_{\infty}+t\gamma_{\infty})}\nu(dt)}F^{B}(dy) \label{eq:chi_infty}\\
\gamma_{\infty}&=\int \frac{1}{\int_0^{+\infty} \frac{v(\chi_{\infty}+t\gamma_{\infty})(y+t)}{1+c\psi(\chi_{\infty}+t\gamma_{\infty})}\nu(dt)}F^{B}(dy) \label{eq:gamma_infty}.
\end{align}
Such limits are unique since the solutions of the systems of equations \eqref{eq:chi_infty} and \eqref{eq:gamma_infty} are unique in case they exist. The existence and unicity of the solutions of   \eqref{eq:chi_infty} and \eqref{eq:gamma_infty} essentially relies on showing that the following function
\begin{align*}
h&:\mathbb{R}_{+}^2 \to \mathbb{R}_{+}^2 \\
(x_1,x_2)& \mapsto (h_1(x_1,x_2),h_2(x_1,x_2))\triangleq\left(\int \frac{y}{\int_0^{+\infty}\frac{v(x_1+tx_2)(y+t)}{1+c\psi(x_1+tx_2)}\nu(dt)}F^{B}(dy),\int \frac{1}{\int_0^{+\infty}\frac{v(x_1+tx_2)(y+t)}{1+c\psi(x_1+tx_2)}\nu(dt)}F^{B}(dy)\right)
\end{align*}
is a standard interference function \cite{yates}, i.e it satisfies the three conditions of positivity, monotonicity and scalability that have been used in the proof of Theorem \ref{th:uniqueness}.
\end{proof}
\section{Numerical analysis}
In order to assess the accuracy of our results, we represent in Fig. \ref{fig:result}, the empirical estimate of the mean squared error (MSE) between the robust scatter estimate and $\hat{S}_N$ with respect to $N$  
$$
{\rm MSE}=\mathbb{E}\left\|\hat{S}_N-\hat{C}_N\right\|^2
$$
when $n=3N$ and $B_N=A_NA_N^{*}$ with $A_N$ is $N\times \frac{N}{2}$ having independent standard Gaussian entries with zero mean and variance $\frac{1}{K}$. We set $u(t)=\frac{1+\alpha}{t+\alpha}$, and $\alpha=0.5$. We note that the MSE decreases with $N$, thereby supporting the convergence of $\hat{C}_N$ to $\hat{S}_N$.
\begin{figure}[h]
\includegraphics[scale=0.5]{./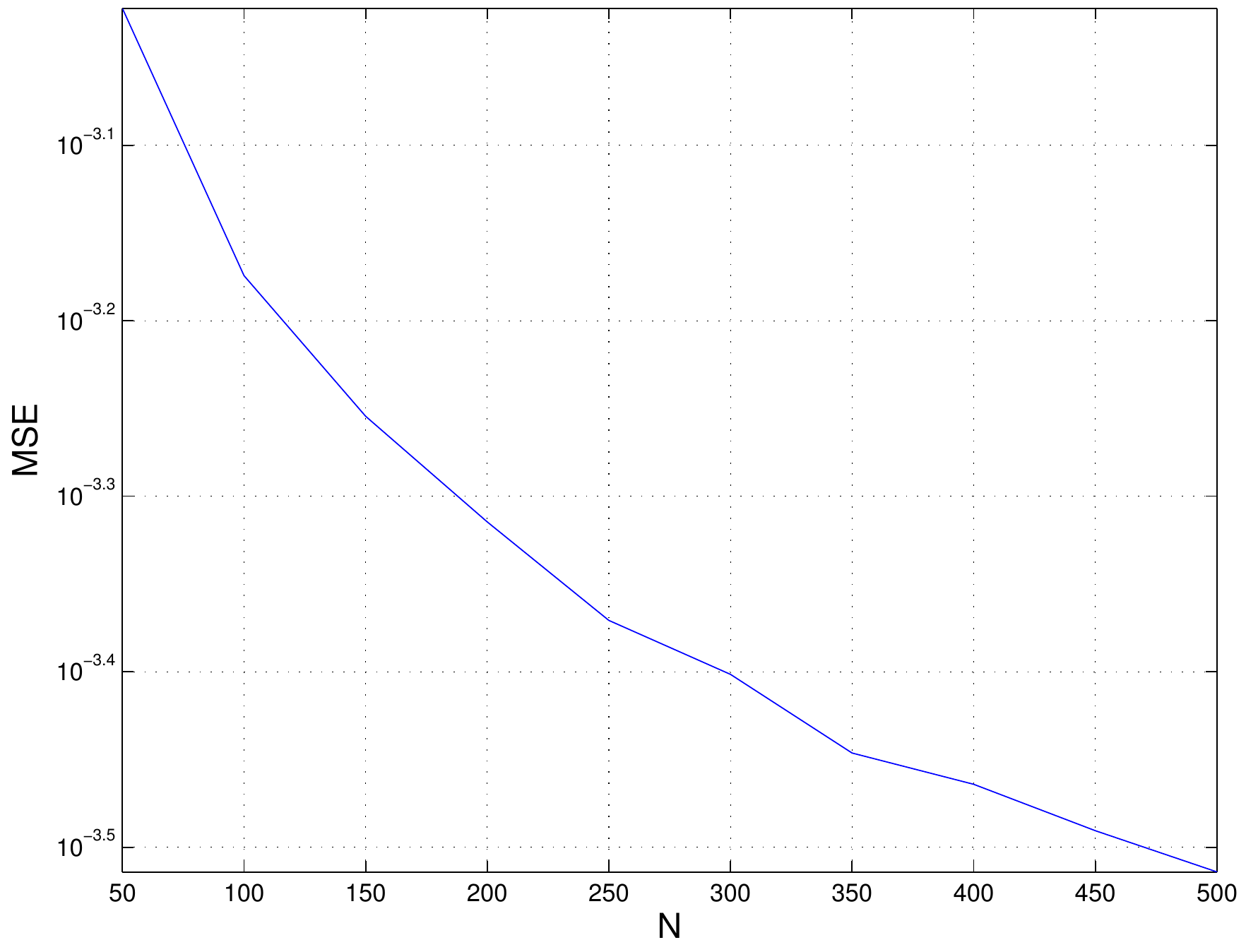}
\caption{MSE with respect to $N$.}
\label{fig:result}
\end{figure}
\section{Proofs}
\subsection{Heuristic Analysis}
The study of the asymptotic behaviour of robust scatter matrices  requires careful attention. The difficulty essentially lies in the rank-1 matrices present in the sum of \eqref{eq:Z} being dependent through  $\hat{C}_N$. At first sight, this observation might make us think that the asymptotic analysis of $\hat{C}_N$ is out of the framework of the standard random matrix theory. However, a careful investigation of the expression of $\hat{C}_N$ can lead us to replace $\hat{C}_N$ by a random object, whose analysis using the theory of random matrices is quite standard.

Hereafter, we develop some heuristics that will lead to determine the asymptotic random equivalent of $\hat{C}_N$. We believe that beyond their interest to the considered scenario, these heuristics can facilitate the understanding of the asymptotic behaviour of robust estimation techniques in the regime where the number of observations is of the same of order of the size of the population covariance matrix.

Building on the ideas of \cite{couillet-pascal-2013}, we will first start by deriving a new rewriting of $\hat{C}_N$ that will also  be extensively used in section \ref{sec:rigorous} devoted to the exposition of the rigorous proofs.  Let $\hat{C}_{(i)}$ be the matrix $\hat{C}_N$ where we remove $\frac{1}{n} u(\frac{1}{N}y_i^*\hat{C}_N^{-1}y_i)y_iy_i^*$, i.e.,
$$
\hat{C}_{(i)}=\hat{C}_N-\frac{1}{n}u\left(\frac{1}{N}y_i^*\hat{C}_N^{-1}y_i\right)y_iy_i^*.
$$
Applying the identity:
$$
(A-t zz^*)^{-1}z=\frac{A^{-1}z}{1-t z^*A^{-1}z}
$$
for any invertible $A$, vector $z$ and scalar $t$ such that $A-t zz^*$  is invertible, we obtain:
\begin{align*}
\frac{1}{N}y_i^*\hat{C}_{(i)}^{-1}y_i&=\frac{\frac{1}{N}y_i^*\hat{C}_N y_i}{1-\frac{1}{n}y_i^*\hat{C}_N^{-1}y_i u(\frac{1}{N}y_i^*\hat{C}_N^{-1}y_i)}\\
&=\frac{\frac{1}{N}y_i^*\hat{C}_N y_i}{1-c_N \phi\left(\frac{1}{N}y_i^*\hat{C}_N^{-1}y_i\right)}\\
&=g_N\left(\frac{1}{N}y_i^*\hat{C}_N^{-1}y_i\right),
\end{align*}
where $g_N:\left[0,\infty\right)\to\left[0,\infty\right), x\mapsto \frac{x}{1-c_N\phi(x)}.$ As $\phi$ is increasing and $\phi_{\infty}<c_N^{-1}$, function $g_N$ is positive increasing and maps $\left[0,\infty\right)$ to $\left[0,\infty\right)$. It is therefore invertible with inverse denoted by $g_N^{-1}$. We have thus:
$$
\frac{1}{N}y_i^*\hat{C}_N^{-1}y_i=g_N^{-1}\left(\frac{1}{N}y_i^*\hat{C}_{(i)}^{-1}y_i\right).
$$
We can therefore express $\hat{C}_N$ as:
\begin{align*}
\hat{C}_N&=\frac{1}{n}\sum_{j=1}^n \left(u\circ g_N^{-1}\right)\left(\frac{1}{N} y_j^*\hat{C}_{(j)}^{-1}y_j\right)y_jy_j^*\\
&=\frac{1}{n}\sum_{j=1}^n v\left(\frac{1}{N} y_j^*\hat{C}_{(j)}^{-1}y_j\right)y_jy_j^*
\end{align*}
with $v=u\circ g_N^{-1}$  positive and non-increasing.

This new rewriting of $\hat{C}_N$ is of fundamental importance.  It has two major advantages. First, it reveals that $\hat{C}_N$ is uniquely determined by $q_j=\frac{1}{N} y_j^*\hat{C}_{(j)}^{-1} y_j, j=1,\cdots,n$. This can be seen by noticing that a solution $\hat{C}_N$ to \eqref{eq:Z} exists and is unique if and only if  the following system of equations in $x_1,\cdots,x_n$:
$$
x_j=y_j^*\left(\frac{1}{n}\sum_{i=1,i\neq j}^n v(x_i) y_iy_i^*\right)^{-1}y_j
$$
admits a unique positive solution  $q_1,\cdots,q_n$.  The estimation of the $N\times N$ robust scatter matrix is then reduced to determining the solutions of a $n$ system of equations. 
The second advantage of this new rewriting is that it can provide, based on some heuristics, interesting insights about the asymptotic behaviour of $\hat{C}_N$. In effect, it is not difficult to understand that $y_i$ is weakly dependent on $\hat{C}_{(i)}$, since $\hat{C}_{(i)}$ depends on $y_i$ only through the terms $\frac{1}{N}y_j^*\hat{C}_N^{-1}y_j, j\neq i$. Standard results from random matrix theory will thus lead to $q_i=\frac{1}{N}y_i^*\hat{C}_{(i)}^{-1}y_i \sim \frac{1}{N}\tr (B_N+\tau_i I_N)\hat{C}_{(i)}^{-1}$, which tends to imply that $q_i$ scales with $\tau_i$. Assume that  $q_i, i=1,\cdots,n$ can be approximated by $\delta_i$ where $\delta_i$ does not depend on the random vector $w_i$. Then, because of rank-1 perturbation arguments leading to replace $\hat{C}_{(i)}^{-1}$ with $\hat{C}_N^{-1}$, we have:
$$
\frac{q_i}{\frac{1}{N}\tr (B_N+\tau_i I_N)\hat{C}_{(i)}^{-1}} \sim \frac{\delta_i}{\frac{1}{N}\tr (B_N+\tau_i I_N) \hat{C}_N^{-1}}.
$$
On the other hand, from the asymptotic equivalence between $q_i$ and $\delta_i$, we expect $\hat{C}_N$ to be asymptotically equivalent to $\frac{1}{n}\sum_{i=1}^n v(\delta_i) y_iy_i^*$. As we will see later, without inducing a major error, one can assume that $y_i$ are Gaussian. The asymptotic behaviour of  $\frac{1}{n}\sum_{i=1}^n v(\delta_i) y_iy_i^*$  can be thus studied using results from \cite{wagner}.  If Theorem 1 in \cite{wagner} is applicable, then $\delta_i$ should satisfy:
\begin{equation}
\begin{aligned}
1&\sim\frac{1}{N}\tr \frac{(B_N+\tau_i I_N)}{\delta_i}\hat{C}_N^{-1} \\
& \sim \frac{1}{N}\tr \frac{(B_N+\tau_i I_N)}{\delta_i}\left(\frac{1}{n}\sum_{j=1}^n \frac{v(\delta_j)(B_N+\tau_j I_N)}{1+e_j}\right)^{-1} \label{eq:approximate},
\end{aligned}
\end{equation}
where $e_1,\cdots,e_n$ are the fixed point solutions to the following system of equations:
$$
e_i=\frac{v(\delta_i)}{n}\tr (B_N+\tau_i I_N)\left(\frac{1}{n}\sum_{j=1}^n \frac{v(\delta_j)(B_N+\tau_j I_N)}{1+e_j}\right)^{-1}.
$$
Multiplying both sides of \eqref{eq:approximate}, we thus get:
$$
e_i\sim c_N\delta_i v(\delta_i)=c_N \psi(\delta_i).
$$
Plugging the above equations into \eqref{eq:approximate}, we obtain that $\delta_1,\cdots,\delta_n$ are solutions to the following system of equations:
$$
\delta_i=\frac{1}{N}\tr (B_N+\tau_i I_N)\left(\frac{1}{n}\sum_{j=1}^n \frac{v(\delta_j)(B_N+\tau_j I_N)}{1+c_N\psi_N(\delta_j)}\right), \hspace{0.3cm} i=1,\cdots,n.
$$
\subsection{Rigorous Proofs}
\label{sec:rigorous}
The main differences of our work with respect to the one in \cite{couillet-pascal-2013} lies in the considered data model. While \cite{couillet-pascal-2013} assumes purely noise observations drawn from elliptical distributions, we consider in the present work, sequence of time observations that are given by the sum of a heavy-tailed noise  and  a Gaussian distributed vector  modeling the "signal" part of the observations. In practice, the estimation of the covariance matrix of the available observations can help infer precious information on the  signal of interest. 
 From a theoretical standpoint, if the useful data live in a low-dimensional space,   the same approach  considered in \cite{couillet-pascal-2013} can be pursued with only minor changes.  Although less popular, high rank data models, occurring when $K$ scales with $N$,  are more attractive for several applications of array processing concerning distributed source localization \cite{valaee}. They are also more difficult to handle, since the use of  the approach of \cite{couillet-pascal-2013} poses many technical difficulties, when $B_N$ is allowed to be of high rank . This can be easily seen by noticing that our heuristic computations involve solving a system of $n$ equations while  those of \cite{couillet-pascal-2013} requires only solving the fixed point of a single equation.
One can easily convince oneself that in the context of interest, it is much more difficult to get insights into the behaviour of the $n$ solutions of the underlying system. 
%A large effort is then required in order to adapt the use of techniques of \cite{couillet-pascal-2013} to our particular context.
Before delving into the core of the proof, we need first to introduce in the sequel some preliminary results that will help adapt the techniques of \cite{couillet-pascal-2013} to our particular context.

\subsubsection{Preliminary Results}

\begin{paragraph}{{\bf Function $u$ and Related Functions}}
The robust-scatter estimator is parametrized by function $u$, which significantly impacts its performance. This intuition is further confirmed by  theoretical analysis, showing that a number of sequence of functions in relation with $u$ naturally arise. This section aims at presenting the list of these functions along with some of their most important properties. We first summarize in the following table some of the results that has been established in \cite{couillet-pascal-2013}.
\begin{table}[ht]
\caption{Properties of Functions}
\begin{center}
\begin{tabular}{|c|c|}
\hline
Functions & Properties \\
    \hline
    \multirow{3}{*}{$u(x)$}& Non-increasing\\
    &Positive \\
    & Continuous \\
    \hline
\multirow{4}{*}{$\phi(x)\triangleq xu(x)$}& Increasing\\
    &Positive \\
    & Continuous \\
& Bounded with $\phi_{\infty} <c_{+}^{-1}$.\\
\hline
\end{tabular}\hfill
\begin{tabular}{|c|c|}
\hline
Sequence of Functions & Properties \\
    \hline
   \multirow{4}{*}{$g_N(x)\triangleq \frac{x}{1-c_N\phi(x)}$}& Increasing\\
   &Positive \\
    & Continuous \\
& Unbounded \\
    \hline
\multirow{3}{*}{$v_N(x)\triangleq u\circ g_N^{-1}(x)$}& Non-Increasing\\
    &Positive \\
    & Continuous \\
\hline
\multirow{3}{*}{$\psi_N(x)\triangleq xv(x)$}& Increasing\\
    &Positive \\
    & Continuous \\
& Bounded with $\psi_{\infty}=\frac{\phi_{\infty}}{1-c_N\phi_{\infty}}$\\
\hline
\end{tabular}
\end{center}
\label{tab:functions}
\end{table}

In addition to the aforementioned properties, we need to prove the following results, which will be used in our proofs. 
\begin{lemma}
Let $u(\cdot)$ and $\phi(\cdot)$ be two functions satisfying assumption \ref{ass:u}. Then, we have, for all $x,y\geq0$,
$$
\frac{\phi(x)-\phi(y)}{x-y} \leq u(x)\leq u(0).
$$
In other words, $\phi(\cdot)$ is Lipschitz with Lipschitz constant $u(0)$.
\label{lemma:lipschitz}
\end{lemma}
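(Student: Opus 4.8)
The plan is to exploit the two defining properties of $u$ from Assumption \ref{ass:u}: that $u$ is non-increasing, and that $\phi(x) = xu(x)$ is continuously differentiable with $\phi' > 0$. Without loss of generality assume $x > y \geq 0$ (the case $x = y$ is trivial and the statement is symmetric in the sense that we only claim an upper bound on the difference quotient). First I would write the increment of $\phi$ as an integral of its derivative,
\[
\phi(x) - \phi(y) = \int_y^x \phi'(t)\,dt,
\]
and compute $\phi'(t) = u(t) + t u'(t)$ wherever $u$ is differentiable. The key observation is that $\phi'(t) \leq u(t)$ at every such point, because $u$ non-increasing forces $u'(t) \leq 0$ and $t \geq 0$, so the term $t u'(t)$ is non-positive. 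Hence $\phi'(t) \leq u(t) \leq u(y) \leq u(0)$ for all $t \in [y,x]$, using monotonicity of $u$ twice. Integrating this bound over $[y,x]$ gives $\phi(x) - \phi(y) \leq u(y)(x-y) \leq u(0)(x-y)$, and dividing by $x - y > 0$ yields the claim; in particular $\frac{\phi(x)-\phi(y)}{x-y} \le u(x)$ as well once we note $u(x) \le u(y)$ is the wrong direction — so I would instead bound $\phi'(t) \le u(t) \le u(x)$ is false for $t<x$; the cleanest route is to prove the sharper inequality with $u(\min(x,y))$ and then observe the statement as written follows by symmetry, or simply re-read the claim as $\frac{\phi(x)-\phi(y)}{x-y}\le u(\min(x,y))\le u(0)$, which is what the subsequent Lipschitz conclusion actually uses.

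A cleaner alternative that avoids fussing over where $u$ is differentiable: since $\phi$ is $C^1$ with $\phi' > 0$ it is a strictly increasing bijection, and for $t > 0$ we have $\phi'(t) = \frac{d}{dt}(t u(t))$; rather than differentiating $u$ directly, note $u(t) = \phi(t)/t$ for $t > 0$, so $u$ is automatically differentiable on $(0,\infty)$ and $u'(t) = \frac{t\phi'(t) - \phi(t)}{t^2} = \frac{\phi'(t)}{t} - \frac{u(t)}{t}$. Since $u$ is non-increasing, $u'(t) \le 0$, which rearranges exactly to $\phi'(t) \le u(t)$. Then the fundamental theorem of calculus on $[y,x]$ (handling the endpoint $y = 0$ by continuity) finishes the argument as above.

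I do not expect any genuine obstacle here; this is a short computation. The only point requiring a sliver of care is the behaviour at $t = 0$: if $y = 0$ the integrand $\phi'(t)$ is still continuous on the closed interval by hypothesis, so the integral representation is valid, and $u(0)$ is finite since $u$ is continuous on $[0,\infty)$; the bound $u(t) \le u(0)$ then holds for all $t \ge 0$ by monotonicity. The Lipschitz conclusion is then immediate: for all $x, y \ge 0$, $|\phi(x) - \phi(y)| \le u(0)|x-y|$.
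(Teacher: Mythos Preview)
Your integral approach correctly establishes the Lipschitz conclusion $|\phi(x)-\phi(y)|\le u(0)|x-y|$, but it does \emph{not} prove the intermediate inequality $\frac{\phi(x)-\phi(y)}{x-y}\le u(x)$ as stated, and your attempt to patch this is confused. When $x>y$, your argument gives $\phi'(t)\le u(t)\le u(y)$ on $[y,x]$, hence only $\frac{\phi(x)-\phi(y)}{x-y}\le u(y)=u(\min(x,y))$. Since $u$ is non-increasing, $u(\min(x,y))\ge u(x)$, so your bound is \emph{weaker}, not ``sharper,'' and the claimed inequality $\le u(x)$ does not follow from it by any symmetry.

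The paper's proof avoids differentiation entirely and uses a one-line algebraic decomposition:
\[
\frac{\phi(x)-\phi(y)}{x-y}=\frac{xu(x)-yu(x)}{x-y}+y\,\frac{u(x)-u(y)}{x-y}=u(x)+y\,\frac{u(x)-u(y)}{x-y}.
\]
Because $u$ is non-increasing, the quotient $\frac{u(x)-u(y)}{x-y}\le 0$ regardless of whether $x>y$ or $x<y$, and $y\ge 0$, so the second term is $\le 0$. This yields $\le u(x)$ directly, for all $x\neq y$, and then $u(x)\le u(0)$. Your differential route cannot reach $u(x)$ when $x>y$ because $u(t)\ge u(x)$ on the interval of integration; the algebraic identity is what is actually needed here.
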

\begin{proof}
We have:
\begin{align*}
\frac{\phi(x)-\phi(y)}{x-y}&=\frac{xu(x)-yu(y)}{x-y} \\
&=\frac{xu(x)-yu(x)+yu(x)-yu(y)}{x-y}.
\end{align*}
Since $u(\cdot)$ is non-increasing: 
$$
y\frac{u(x)-u(y)}{x-y} \leq 0.
$$
Therefore,
$$
\frac{\phi(x)-\phi(y)}{x-y}\leq \frac{xu(x)-y u(x)}{x-y}=u(x)\leq u(0).
$$
\end{proof}
\begin{lemma}
Let $u$ and $\phi$ be two functions satisfying assumption \ref{ass:u}. Then, we have, for all $x\geq 0$:
$$
\phi(x)\leq u(0)x.
$$
Moreover, for all $x\in[0,m]$,
$$
u(m)x \leq \phi(x) \leq u(0)x.
$$
\label{lemma:boundedness}
\end{lemma}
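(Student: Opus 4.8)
The statement to prove is Lemma~\ref{lemma:boundedness}: for all $x \geq 0$, $\phi(x) \leq u(0)x$, and moreover for $x \in [0,m]$ one has $u(m)x \leq \phi(x) \leq u(0)x$.

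The plan is to read off both bounds directly from the monotonicity of $u$ together with the definition $\phi(x) = x u(x)$, exactly in the spirit of the preceding Lemma~\ref{lemma:lipschitz}. First I would handle the global upper bound: since $u$ is non-increasing on $[0,\infty)$ by Assumption~\ref{ass:u}-$i)$, we have $u(x) \leq u(0)$ for every $x \geq 0$; multiplying by $x \geq 0$ gives $\phi(x) = x u(x) \leq u(0) x$, which is the first claim. (Alternatively, this is the special case $y=0$ of Lemma~\ref{lemma:lipschitz}, since $\phi(0)=0$.)

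Next I would restrict to $x \in [0,m]$. Here $0 \leq x \leq m$ and again $u$ non-increasing gives the two-sided bound $u(m) \leq u(x) \leq u(0)$. Multiplying through by $x \geq 0$ preserves the inequalities and yields $u(m)x \leq x u(x) = \phi(x) \leq u(0)x$, which is the second claim. I should note that the upper bound part is already contained in the first statement, so only the lower bound $u(m)x \leq \phi(x)$ genuinely needs the restriction $x \leq m$.

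There is essentially no obstacle here: the lemma is an immediate consequence of monotonicity of $u$ and non-negativity of $x$, and requires only Assumption~\ref{ass:u}-$i)$. The only point worth being slightly careful about is that the inequalities are stated as non-strict, which is consistent with $u$ being merely non-increasing (not strictly decreasing), so no positivity of $u$ or strictness is invoked. The proof is therefore two short lines, parallel in structure to the proof of Lemma~\ref{lemma:lipschitz}.
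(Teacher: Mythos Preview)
Your proof is correct and essentially identical to the paper's: the paper obtains the first inequality by setting $y=0$ in Lemma~\ref{lemma:lipschitz} (which you note as an alternative), and gets the lower bound on $[0,m]$ from $u(x)\geq u(m)$ exactly as you do. There is nothing to add.
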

\begin{proof}
The first statement follows from the previous lemma by setting $y=0$. 
%The proof follows since $u(x)$ is non-increasing, and thus $\phi(x)=xu(x) \leq x u(0)$. 
To prove the second, notice that when $x\leq m$, $u(x) \geq u(m)$, thereby showing that $\phi(x) \geq xu(m)$ whenever $x\in\left[0,m\right]$.
\end{proof}
\begin{remark}
As it has been proven in \cite{couillet-pascal-2013}, functions $x\mapsto \psi(x)$ and $x\mapsto v(x)$ share respectively the same properties as $x\mapsto \phi(x)$ and $x\mapsto u(x)$. As a consequence, we can prove that $x\mapsto \psi(x)$ is Lipschitz with constant lipschitz $v(0)=u(0)$. The constant Lipschitz being independent on $n$, we conclude that $\left(\psi_N\right)$ form an equicontinuous family of functions and as such converge uniformly on $[0,\infty)$. 
Moreover,
$$
\psi(x)\leq v(0)x
$$
and $\psi(x)\geq v(m)x$ whenever $x\in\left[0,m\right]$.
\label{remark:psi}
\end{remark}
\begin{lemma}
\label{lemma:g_N}
Let $g_N(\cdot):x\mapsto \frac{x}{1-c_N \phi(x)}$. Denote by $g_N^{-1}(\cdot)$ the inverse function corresponding to $g_N$. Then, for all $y\geq z\geq 0$, we have:
$$
g_N^{-1}(y)-g_N^{-1}(z)\leq (y-z)(1-c_N \phi(g_N^{-1}(y)))
$$
In particular, $g_N^{-1}$ is lipschitz on $\left[0,\infty\right)$ with constant lipschitz $1$. %whereas  $g_N$ is Lipschitz on every compact of $\left[0,+\infty\right)$. 
Besides, functions $\left(x\mapsto \frac{\psi_N(x)}{1+c_N\psi_N(x)}\right)_{N=1}^{\infty}$ are Lipschitz and converge uniformly on $\left[0,\infty\right)$.
\end{lemma}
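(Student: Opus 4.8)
The plan is to read off all three claims from the defining relation $y = g_N(g_N^{-1}(y)) = \frac{g_N^{-1}(y)}{1-c_N\phi(g_N^{-1}(y))}$, i.e. $g_N^{-1}(y) = y\,(1-c_N\phi(g_N^{-1}(y)))$. First I would establish the basic inequality: fix $y\geq z\geq 0$ and write $a=g_N^{-1}(y)$, $b=g_N^{-1}(z)$, so $a\geq b\geq 0$ since $g_N$ is increasing. From the identity above,
\begin{align*}
a-b &= y(1-c_N\phi(a)) - z(1-c_N\phi(b)) \\
&= (y-z)(1-c_N\phi(a)) - z\,c_N(\phi(a)-\phi(b)).
\end{align*}
Since $\phi$ is increasing and $a\geq b$, the term $z\,c_N(\phi(a)-\phi(b))\geq 0$, which immediately gives $a-b \leq (y-z)(1-c_N\phi(a))$, i.e. the displayed inequality. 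For the Lipschitz claim with constant $1$, note $1-c_N\phi(a)\in(0,1]$ because $\phi\geq 0$ and, by Assumption~\ref{ass:u}, $c_N\phi_\infty < c_+\cdot c_+^{-1}\leq 1$ so $g_N$ indeed maps onto $[0,\infty)$ and $1-c_N\phi>0$ throughout; hence $0\leq a-b\leq y-z$.

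Next, for the statement about $x\mapsto \frac{\psi_N(x)}{1+c_N\psi_N(x)}$, I would argue in two pieces: equi-Lipschitz-ness of the family, and uniform convergence. By Remark~\ref{remark:psi}, each $\psi_N$ is Lipschitz with constant $v(0)=u(0)$, uniformly in $N$. The scalar map $t\mapsto \frac{t}{1+c_N t}$ on $[0,\infty)$ has derivative $\frac{1}{(1+c_N t)^2}\leq 1$, so it is $1$-Lipschitz uniformly in $N$; composing, $x\mapsto \frac{\psi_N(x)}{1+c_N\psi_N(x)}$ is Lipschitz with constant $u(0)$, independent of $N$. For uniform convergence one combines: (i) $c_N$ converges along the relevant subsequence (or one works with $\liminf/\limsup$ as in Assumption~\ref{ass:regime}, using that $c_N$ stays in a compact subinterval of $(0,1)$), and (ii) $\psi_N\to\psi$ uniformly on $[0,\infty)$, already noted in Remark~\ref{remark:psi} via the Arzel\`a--Ascoli argument (equicontinuity plus pointwise convergence, the latter from $\phi_N\equiv\phi$ fixed and $c_N$ converging so $g_N^{-1}$ converges pointwise). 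Since $\frac{t}{1+c_N t}$ is jointly continuous in $(t,c_N)$ and the $\psi_N$ take values in a common bounded set $[0,\psi_\infty]$ with $\psi_\infty$ controlled uniformly, the composition converges uniformly.

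The only genuinely delicate point is the uniform convergence of $\psi_N$ itself, which I am invoking from Remark~\ref{remark:psi}; the rest is elementary manipulation of monotone functions and bounded Lipschitz compositions. In writing this up I would be careful that $1-c_N\phi(g_N^{-1}(y)) > 0$ everywhere (so that all the divisions are legitimate and the Lipschitz constant is genuinely $\leq 1$, not just $\leq$ something), which follows from $\phi_\infty < c_+^{-1}$ and $\limsup_n c_N < c_+$ in Assumptions~\ref{ass:regime}--\ref{ass:u}.
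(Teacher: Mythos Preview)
Your argument for the displayed inequality and the Lipschitz bound on $g_N^{-1}$ is exactly the paper's: both start from the relation $g_N^{-1}(y)=y(1-c_N\phi(g_N^{-1}(y)))$, subtract, group terms so that the monotonicity of $\phi$ kills the remainder, and conclude.

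For the claim on $x\mapsto \frac{\psi_N(x)}{1+c_N\psi_N(x)}$ your approach is correct but the paper's is shorter and worth knowing. You argue by composing two Lipschitz maps: $\psi_N$ (with constant $u(0)$, via Remark~\ref{remark:psi}) and $t\mapsto t/(1+c_N t)$ (with constant $1$). The paper instead first checks the algebraic identity
\[
\frac{\psi_N(x)}{1+c_N\psi_N(x)} \;=\; \phi\circ g_N^{-1}(x),
\]
which follows by writing $w=g_N^{-1}(x)$ so that $\psi_N(x)=\phi(w)/(1-c_N\phi(w))$. This identity immediately gives the Lipschitz constant $u(0)$ as the product of the Lipschitz constants of $\phi$ (Lemma~\ref{lemma:lipschitz}) and of $g_N^{-1}$ (the first part of the present lemma), and it makes the uniform convergence more transparent since $\phi$ is a \emph{fixed} function and only $g_N^{-1}$ varies with $N$. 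The identity is also reused later in the proof of Theorem~\ref{th:asymptotic}, so establishing it here pays dividends. Your route is fine, but it leans on the uniform convergence of $\psi_N$ from Remark~\ref{remark:psi}, which is itself the less obvious ingredient; the paper's factorisation through the fixed map $\phi$ sidesteps that dependence.
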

\begin{proof}
Let $y\geq z\geq 0$. From the relation $g_N^{-1}(y)=y-c_N\phi(g_N^{-1}(y))$, we have:
\begin{align*}
g_N^{-1}(y)-g_N^{-1}(z)&=y-z+c_Nz\phi(g_N^{-1}(z))-c_Ny\phi(g_N^{-1}(y))\\
&=y-z +c_Nz\left(\phi(g_N^{-1}(z))-\phi(g_N^{-1}(y))\right)+c_N(z-y)\phi\left(g_N^{-1}(y)\right).
%&\stackrel{(a)}\leq (y-z)-c_N\phi\left(g_N^{-1}(z)\right)\left(g_N^{-1}(y)-g_N^{-1}(z)\right)+c_N(z-y)\phi\left(g_N^{-1}(y)\right)
\end{align*}
Since $g_N^{-1}$ is increasing,  $\phi(g_N^{-1}(z))-\phi(g_N^{-1}(y))\leq 0$. Hence,
%Let $g_{N}^{-1}$ be the inverse of $g_N$. Then, for any $y\geq 0$,
%$$
%g_N^{-1}(y)=(1-c_N \phi(g_N^{-1}(y)))y.
%$$
%Consider $y_1$ and $y_2$ on $[0,\infty)$ with $y_1\leq y_2$. Then,
%\begin{align*}
%g_N^{-1}(y_2)-g_N^{-1}(y_1)&=y_2-y_1 -c_N y_2 \phi\left(g_N^{-1}(y_2)\right)+c_N y_1 \phi\left(g_N^{-1}(y_1)\right) \\
%&\stackrel{(a)}{\leq} y_2-y_1
%\end{align*}
%where $(a)$ follows from lemma \ref{lemma:lipschitz}. %since $g_N^{-1}$ and $y\mapsto yg_N^{-1}(y)$ are increasing. 
%Hence,
$$
g_N^{-1}(y)-g_N^{-1}(z)\leq (y-z)(1-c_N\phi(g_N^{-1}(y))) \leq (y-z).
$$
Finally, after simple calculations, we can prove that:
$$
\frac{\psi_N(x)}{1+c_N\psi_N(x)} =\phi\circ g_N^{-1}(x).
$$
Therefore, $x\mapsto
\frac{\psi_N(x)}{1+c_N\psi_N(x)}$ is Lipschitz with constant lipschitz equal to $u(0)$. This constant being independent on $n$, the sequence of functions $\frac{\psi_N(x)}{1+c_N\psi_N(x)}$ converge uniformly on $\left[0,\infty\right)$.
%Similarly, let  $a>0$ and $0\leq x_1\leq x_2\leq a$. We have:
%\begin{align*}
%g_N(x_2)-g_N(x_1) &=\frac{x_2}{1-c_N\phi(x_2)}-\frac{x_1}{1-c_N\phi(x_1)}\\
%&=\frac{x_2-x_1 +c_N x_2(\phi(x_2)-\phi(x_1))+c_N\phi(x_2)(x_1-x_2)}{(1-c_N\phi(x_2))(1-c_N\phi(x_1))}\\
%&\leq \frac{(x_2-x_1)(1+c_{+}au(0))}{(1-c_{+}\phi_{\infty})^2}
%\end{align*}
% Finally, to prove that $\psi_N$ is Lipschitz, first note that $\phi\circ g_N^{-1}(x) \in \left(0,\phi_\infty\right)$. Since $g_N$ is Lipschitz on every compact, we have for any $0\leq x_1\leq x_2$,
%\begin{align*}
%\left|\psi_N(x_1)-\psi_N(x_2)\right|& \leq \frac{\left|\phi\circ g_N^{-1}(x_1)-\phi\circ g_N^{-1}(x_2)\right|(1+c_{+}\phi_{\infty}u(0))}{(1-c_{+}\phi_{\infty})^2} \\
%&\leq \frac{u(0)(1+c_{+}\phi_{\infty}u(0))\left|x_1-x_2\right|}{(1-c_{+}\phi_{\infty})^2}.
%\end{align*} 
%Functions $\left(\psi_N\right)_{N=1}^{\infty}$ are thus lipschitz with a constant lipschitz independent of $n$. It forms an equicontinuous family of functions and as such converges uniformly on $[0,\infty)$.
\end{proof}
\end{paragraph}
\begin{paragraph}{\bf Useful results}
As previously stated, the difficulty of studying the robust-scatter estimator lies in the control of the asymptotic behaviour of $q_i$ and $\delta_i$. The proof of Theorem \ref{th:uniqueness} and Theorem \ref{th:asymptotic}  will require us to show that $q_i$ and $\delta_i$ scale with $\tau_i$ and to control quadratic forms involving matrix $\frac{1}{n}\sum_{i=1}^n f(\tau_i)y_iy_i^*$ where $f$ is a certain functional. To this end, we develop in this section two key results that will underlie the 
proof of the main theorems. 

\begin{proposition}
  \label{prop:inequality}
  Let $(B_N)$ be a sequence of $N\times N$ hermitian positive matrices satisfying Assumption \ref{ass:statistical}-iii).
  In addition, let $\tau_i,i=1,\cdots,n$ be positive random variables satisfying assumption \ref{ass:statistical}-ii). Consider  $(f_N)$  a sequence of  piece-wise continuous positive  bounded functions defined on $[0,\infty)$ that has at least one subsequence converging uniformly. We assume that  functions $t\mapsto f_N(t)$ satisfy the following additional properties:
\begin{itemize}
\item Function $t\mapsto f_N(t)$ grows at most linearly, i.e there exists $\alpha,\beta>0$ such that:
   \begin{align*}
  \sup_{N} f_N(t) \leq \alpha \hspace{0.5cm} &  \forall t\geq 0, \\
\sup_N f_N(t)\leq \beta t  \hspace{0.5cm}&\forall t\geq 0.
   \end{align*}
   \item    $\int f_N(t)\nu(dt)=1$ 
\item  $\liminf_N \inf_{t\in \left[m,+\infty\right)} f_N(t)>0$.
\end{itemize}
  Then the following equation in $x$:
  \begin{equation}
 \int \frac{F^{B_N}(dy)}{\int \frac{y+t}{t+x}f_N(t)\nu(dt)} =1 
  \label{eq:eta}
  \end{equation}
  admits a unique positive solution which we denote by $\eta_N$.  Then, there exists a sequence $(r_N^{-})$ with $\lim\inf r_N^{-} > 0$ such that:
  $$
r_N^{-} \frac{1}{N}\tr B_N \leq \eta_N \leq \frac{1}{N}\tr B_N.
$$
%with $\lim\inf r_N^{-} >0$, %$r_N^{-}=\frac{m^2a_{m,N}^3}{\beta\left(1+\frac{\|B_N\|}{m}\right)^3(a_{m,N}\|B_N\|+1)} $,
Moreover, we have:
%If $\eta>0$ or $t\mapsto f(t)$ is zero in a neighborhood of zero, then, we have:
    \begin{equation}
c_N-|\epsilon_{n,j}|\leq  \frac{1}{N}\tr \frac{B_N+\tau_j I}{\tau_j+\eta_N} \left(\frac{1}{N}\sum_{i=1}^n {f_N(\tau_i)}\frac{B_N+\tau_i I}{\tau_i+\eta_N}\right)^{-1} \leq c_N+|\epsilon_{n,j}|,%{\frac{1}{n}\sum_{i=1}^{n}\frac{\tau_i}{\tau_i+\eta}} +\epsilon_n
\label{eq:fundamental}
  \end{equation}
  where $\max_{1\leq j\leq n}\left|\epsilon_{n,j}\right|$ converges almost surely to zero.
  \end{proposition}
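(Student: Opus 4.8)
The plan is to treat \eqref{eq:eta} by the same fixed-point / standard-interference-function machinery used for Theorem~\ref{th:uniqueness}, and then to derive the two-sided bound on $\eta_N$ and the concentration estimate \eqref{eq:fundamental} separately. For existence and uniqueness of $\eta_N$, I would first check that the map
$$
x \mapsto x\cdot\left(\int \frac{F^{B_N}(dy)}{\int \frac{y+t}{t+x}f_N(t)\nu(dt)}\right)
$$
is a standard interference function on $(0,\infty)$: positivity is clear; monotonicity follows because $x\mapsto (y+t)/(t+x)$ is decreasing, so the inner integral is decreasing in $x$, hence its reciprocal and the whole expression are increasing in $x$; scalability follows from the strict concavity type inequality $(y+t)/(t+\alpha x) < \alpha (y+t)/(t+x)$ for $\alpha>1$. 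Yates' theorem then gives a unique fixed point, which is exactly the unique positive solution of \eqref{eq:eta}. Alternatively, and perhaps more cleanly here, one shows directly that $x\mapsto \int F^{B_N}(dy)\big/\int \frac{y+t}{t+x}f_N(t)\nu(dt)$ is continuous and strictly increasing from a value $<1$ at $x=0^+$ (using $\int f_N(t)\nu(dt)=1$ and $\tfrac1N\tr B_N>0$, so the inner integral blows up) to a value $\ge \int F^{B_N}(dy)=1$ as $x\to\infty$ (the inner integral tends to $\int f_N\,d\nu=1$), so the equation $=1$ has a unique root.

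For the two-sided bound on $\eta_N$, the upper bound $\eta_N\le \tfrac1N\tr B_N$ should come from the fact that at $x=\tfrac1N\tr B_N$ the inner integral $\int \frac{y+t}{t+x}f_N(t)\nu(dt)\le \int \frac{y+t}{t}\,f_N(t)\nu(dt)$, combined with $\int f_N\,d\nu=1$ and the linear-growth bound $f_N(t)\le \beta t$, pushes the left-hand side of \eqref{eq:eta} to $\ge 1$; since the LHS is increasing, the root $\eta_N$ must lie below $\tfrac1N\tr B_N$. For the lower bound $\eta_N\ge r_N^- \tfrac1N\tr B_N$ with $\liminf r_N^->0$, I would use the hypotheses $\sup_N f_N\le\alpha$, $\liminf_N\inf_{t\ge m}f_N(t)>0$, and $\nu([0,m))<\epsilon<1$ (Assumption~\ref{ass:measure}) together with $\limsup\|B_N\|<\infty$ and $\liminf\tfrac1N\tr B_N>0$ to bound the inner integral from above by a constant multiple of $(\tfrac1N\tr B_N + x)/x$ uniformly in $y$ over the effective support of $F^{B_N}$; evaluating the LHS at $x = r\,\tfrac1N\tr B_N$ and choosing $r>0$ small makes the LHS $\le 1$, forcing $\eta_N\ge r\,\tfrac1N\tr B_N$. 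Tracking constants shows $r$ can be taken bounded away from $0$ along the relevant subsequence.

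For the concentration statement \eqref{eq:fundamental}, the idea is that the deterministic quantity $\eta_N$ and the matrix $\tfrac1N\sum_i f_N(\tau_i)\frac{B_N+\tau_iI}{\tau_i+\eta_N}$ are precisely what one gets by "linearizing" the defining equation: writing $\bT_N=\big(\tfrac1N\sum_i f_N(\tau_i)\frac{B_N+\tau_iI}{\tau_i+\eta_N}\big)^{-1}$ and expanding $\tfrac1N\tr \frac{B_N+\tau_jI}{\tau_j+\eta_N}\bT_N$ using $B_N+\tau_jI = (\tau_j+\eta_N)\big[\text{combination}\big] + (B_N - \eta_N I)$, one rewrites the middle term of \eqref{eq:fundamental} as $c_N$ plus a residual of the form $\tfrac1N\tr (B_N-\eta_N I)\bT_N\cdot(\text{something})$ divided by $(\tau_j+\eta_N)$. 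The key observations are that $\tau_j+\eta_N\ge\eta_N\ge \mathrm{const}>0$ uniformly (from the lower bound just proved), that $\|\bT_N\|$ is bounded above using the lower spectral bound on its inverse coming from $\liminf_N\inf_{t\ge m}f_N(t)>0$ and $\nu([0,m))<\epsilon$, and that $\int f_N\,d\nu=1$ makes the "averaged" residual vanish in the limit; the fluctuation $|\epsilon_{n,j}|$ then comes from the a.s.\ convergence $\nu_n\Rightarrow\nu$ (and the uniform-convergence hypothesis on the subsequence of $f_N$), with the maximum over $j$ controlled uniformly because the $j$-dependence enters only through the single bounded decreasing factor $1/(\tau_j+\eta_N)$ and $f_N(\tau_j)\le\alpha$. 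The main obstacle I expect is exactly this last point: obtaining $\max_{1\le j\le n}|\epsilon_{n,j}|\to 0$ \emph{uniformly} in $j$ rather than for fixed $j$, which requires an equicontinuity argument — covering $[0,\infty)$ by finitely many intervals on which $f_N$ and $t\mapsto 1/(t+\eta_N)$ vary little (legitimate because $f_N$ converges uniformly along a subsequence and $\eta_N$ stays bounded away from $0$ and $\infty$), handling the tail $t\ge M$ via the a.s.\ tightness of $\{\nu_n\}$, and then upgrading pointwise-in-$j$ control to a uniform bound. The boundedness of $\eta_N$ above and below, established in the first part, is what makes all of these uniform estimates go through.
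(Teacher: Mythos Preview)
Your overall strategy matches the paper's, but two concrete ingredients are missing or garbled, and without them the argument does not close.

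\textbf{Upper bound on $\eta_N$.} Your proposed argument at $x=\tfrac1N\tr B_N$ does not work as stated: the inner integral $\int \frac{y+t}{t+x}f_N(t)\nu(dt)$ depends on $y$, and for $y$ near $\|B_N\|$ it can exceed $1$, so you cannot conclude the outer integral is $\ge 1$. The paper instead observes that $y\mapsto \big(\int \frac{y+t}{t+\eta_N}f_N(t)\nu(dt)\big)^{-1}$ is convex and applies Jensen's inequality at $x=\eta_N$:
\[
1=\int \frac{F^{B_N}(dy)}{\int \frac{y+t}{t+\eta_N}f_N(t)\nu(dt)} \;\ge\; \frac{1}{\int \frac{\tfrac1N\tr B_N+t}{t+\eta_N}f_N(t)\nu(dt)},
\]
and since the right-hand denominator is strictly decreasing in $\eta_N$ and equals $\int f_N\,d\nu=1$ at $\eta_N=\tfrac1N\tr B_N$, the bound follows. (Also, a small slip: as $x\to\infty$ the inner integral tends to $0$, not to $\int f_N\,d\nu=1$, so the outer expression tends to $+\infty$; your conclusion is correct but the stated reason is not.)

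\textbf{The identity that makes the main term exactly $c_N$.} Your decomposition $B_N+\tau_jI=(\tau_j+\eta_N)I+(B_N-\eta_NI)$ is algebraically equivalent to the paper's split, but the crucial step you leave implicit is \emph{why} the residual $\frac{1}{N}\tr(B_N-\eta_NI)\bT_N$ vanishes in the averaged limit. It does \emph{not} follow from $\int f_N\,d\nu=1$ alone. One needs the nontrivial identity
\[
\int \frac{y\,F^{B_N}(dy)}{\int \frac{y+t}{t+\eta_N}f_N(t)\nu(dt)} \;=\; \eta_N,
\]
which the paper derives by writing $y\cdot\int\frac{f_N}{t+\eta_N}\nu(dt)=\int\frac{(y+t)f_N}{t+\eta_N}\nu(dt)-\int\frac{tf_N}{t+\eta_N}\nu(dt)$, using the defining equation of $\eta_N$, and then using $\int f_N\,d\nu=1$ in the form $1-\int\frac{tf_N}{t+\eta_N}\nu(dt)=\eta_N\int\frac{f_N}{t+\eta_N}\nu(dt)$. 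With this identity in hand, the empirical trace equals $c_N$ exactly plus two $j$-independent error terms $\epsilon_{n,1},\epsilon_{n,2}$ coming from $\nu_n-\nu$, each multiplied by a factor bounded by $\max(c_N/\eta_N,c_N)$; since $\liminf\eta_N>0$ from the first part, the uniformity in $j$ is automatic and no separate equicontinuity-in-$j$ argument is needed. Your sketch identifies the right place where uniformity enters (the single factor $1/(\tau_j+\eta_N)$) but overcomplicates the mechanism. For the lower bound on $\eta_N$, the paper's route is also more concrete than yours: it compares $h_N$ to a truncated variant $h_{m,N}$, bounds $h_{m,N}'$ explicitly on $[0,\|B_N\|]$, and applies the mean value theorem to $1-h_{m,N}(0)$.
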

\begin{proof}
 We start by showing that \eqref{eq:eta} admits a unique solution $\eta_N$. % For that first decompose the limit distribution $G$ as:
%  $G(y)=r\delta_0(y) +(1-r)H(y)$ where $H$ is defined within the interval $[a,b]$. We thus get:
 % $$
 % \int \frac{F^{B}(dy)}{\int \frac{y+t}{t+x}{f(t)}\nu(dt)} = \frac{r}{\int \frac{t}{t+x}{f(t)}\nu(dt)} +(1-r)\int_{a}^b \frac{H(y)dy}{\int \frac{y+t}{t+x}{f(t)}\nu(dt)}
 % $$
  It is clear that function $h_N: x\mapsto \int \frac{F^{B_N}(dy)}{\int \frac{y+t}{t+x}f_N(t)\nu(dt)} $ is  increasing and  continuous on $(0,+\infty)$ with the limit at $x\to 0^{+}$ less than $1$, while the limit when $x\to +\infty$ is $+\infty$. Therefore, there exists a unique $\eta_N$ that satisfies \eqref{eq:eta}. It is easy to check that $\eta_N$ is less than the maximum eigenvalue of $B_N$. Therefore, we can restrict the domain of $h_N$ to the set $\left[0,\|B_N\|\right]$. 
Since $y\mapsto \frac{1}{\int \frac{y+t}{t+x}f_N(t)\nu(dt)}$ is convex, applying the Jensen inequality, we obtain:
$$
\int \frac{F^{B_N}(dy)}{\int \frac{y+t}{t+x}f_N(t)\nu(dt)} \geq \frac{1}{\int \frac{\frac{1}{N}\tr B_N +t}{t+x}f_N(t)\nu(dt)}.
$$
Setting  $x=\eta_N$, the above inequality becomes:
\begin{equation}
1\geq \frac{1}{\int \frac{\frac{1}{N}\tr B_N +t}{t+\eta_N}f_N(t)\nu(dt)}.
\label{eq:true}
\end{equation}
Therefore, $\eta_N\leq \frac{1}{N}\tr B_N$ because otherwise, \eqref{eq:true} would not hold.

The proof of the lower-bound inequality is more delicate. Let $m$ be as in Assumption \ref{ass:measure}-ii) and denote by $h_{m,N}$ the following map:
$$
h_{m,N}:[0,\|B_N\|]\to \mathbb{R}^{+}, x\mapsto \int \frac{1}{y\int_m^{\infty} \frac{f_N(t)}{t+x}\nu(dt)+\int_0^{\infty}\frac{tf_N(t)}{t+x}\nu(dt)}F^{B_N}(dy).
$$ 
Functions $h_N$ and $h_{m,N}$ are both increasing, while $h_{m,N}(x)\geq h_N(x) \ \ \forall x\in[0,\|B_N\|]$. Furthermore, we can easily check that:
$$
\lim_{x\to 0^{+}}h_{m,N}(x)<1 \hspace{0.1cm}\textnormal{and} \hspace{0.1cm} \lim_{x\to +\infty} h_{m,N}(x)=+\infty.
$$
Therefore, there exists  $\eta_{N,m}$ solution in $x$ to the equation $h_{m,N}(x)=1$. Moreover, we have $h_N(\eta_{N,m})\leq 1$, and thus $\eta_N \geq \eta_{N,m}$. On the other hand, $h_{m,N}$ is differentiable with derivative $h_{m,N}^{'}(x)$ given by:
$$
h_{m,N}^{'}(x)=\int \frac{y\int_m^{+\infty}\frac{f_N(t)}{(t+x)^2}\nu(dt)+\int_0^{+\infty}\frac{tf_N(t)}{(t+x)^2}\nu(dt)}{\left(y\int_m^{+\infty} \frac{f_N(t)}{t+x}\nu(dt)+\int_0^{+\infty}\frac{tf_N(t)}{t+x}\nu(dt)\right)^2}F^{B_N}(dy).
$$
Let $a_{m,N}=\int_m^{+\infty} \frac{f_N(t)}{t}\nu(dt)$.
Hence, if $0\leq x\leq \|B_N\|$,
\begin{align*}
h_{m,N}^{'}(x) & \leq \int \frac{y\int_m^{+\infty}\frac{f_N(t)}{(t+x)^2}\nu(dt)+\int_0^{+\infty}\frac{tf_N(t)}{(t+x)^2}\nu(dt)}{\left(\int_m^{+\infty}\frac{tf_N(t)}{t+x}\nu(dt)\right)^2}F^{B_N}(dy)\\
 & \leq \int \frac{y\int_m^{+\infty}\frac{f_N(t)}{m t}\nu(dt)+\int_0^{+\infty}\frac{f_N(t)}{t}\nu(dt)}{\left(\frac{m^2}{\|B_N\|+m} a_{m,N}\right)^2}F^{B_N}(dy)\\
&\leq \frac{\beta\left(\frac{\|B_N\|}{m}+1\right)^3}{m^2a_{m,N}^2}.
\end{align*}
The mean value theorem implies that:
\begin{equation}
\frac{1-h_{m,N}(0)}{\eta_{N,m}} \leq \frac{\beta\left(\frac{\|B_N\|}{m}+1\right)^3}{m^2a_{m,N}^2},
\label{eq:mean_value}
\end{equation}
where
$$
h_{m,N}(0)=\int \frac{1}{y\int_m^{+\infty} \frac{f_N(t)}{t}+1}F^{B_N}(dy) =\frac{1}{N}\tr \left(a_{m,N} B_N+ I_N\right)^{-1}.
$$
As a consequence,
\begin{align}
1-h_{m,N}(0)& =\frac{1}{N}\tr I_N-\frac{1}{N}\tr \left(a_{m,N}B_N+I_N\right)^{-1} \nonumber\\
&=\frac{a_{m,N}}{N}\tr B_N\left(a_{m,N}B_N + I_N\right)^{-1} \geq \frac{a_{m,N}\frac{1}{N}\tr B_N}{a_{m,N}\|B_N\|+1}. \label{eq:mean_value_bis}
\end{align}
Combining \eqref{eq:mean_value} and \eqref{eq:mean_value_bis}, we therefore get:
$$
\eta_N \geq \frac{m^2a_{m,N}^3 \frac{1}{N}\tr B_N}{(a_{m,N}\|B_N\|+1)\left(1+\frac{\|B_N\|}{m}\right)^3\beta}.
$$
Note that it is easy to prove that  $r_N\triangleq \frac{m^2a_{m,N}^3 }{(a_{m,N}\|B_N\|+1)\left(1+\frac{\|B_N\|}{m}\right)^3\beta} \leq 1$, since $ma_{m,N}\leq 1$ and $\frac{a_{m,N}}{\beta}<1$.  Moreover, $\lim\inf r_N >0$ as $r_N \geq \frac{m^2 a_{m,N}^3}{\beta\left(\beta \|B_N\|+1\right)^4}$ and $\lim\inf a_{m,N} \geq \liminf_N \inf_{x\in \left[m,\infty\right)} f_N(x) \int_{m}^{+\infty} \frac{1}{t}\nu(dt) >0$.  % For that, first note that for any positive $x$, function  $y\mapsto \frac{H(y)}{\int \frac{y+t}{t+x} \nu(dt)}$ is integrable. Moreover, for any $y$, the map $x\mapsto \frac{H(y)}{\int \frac{y+t}{t+x}\nu(dt)}$ is continuous in $x$.  

We will now proceed proving the inequalities in  \eqref{eq:fundamental}.
  Let $\lambda_1^{N}\leq \cdots\leq \lambda_N^{N}$ be the eigenvalues of $B_N$.  
   We have:
  \begin{align}
&	  \frac{1}{N}\tr \frac{B_N+\tau_j I}{\tau_j+\eta_N} \left(\frac{1}{N}\sum_{i=1}^n f_N(\tau_i)\frac{B_N+\tau_i I}{\tau_i+\eta_N}\right)^{-1} =\frac{1}{N}\sum_{k=1}^N \frac{\lambda_k^{N}+\tau_j}{(\tau_j+\eta_N)\frac{1}{N}\sum_{i=1}^nf_N(\tau_i) \frac{\lambda_k^{N}+\tau_i}{\tau_i+\eta_N}} \nonumber \\			&=\frac{c_N}{\tau_j+\eta_N}\int \frac{y F^{B_N}(dy)}{\frac{1}{n}\sum_{i=1}^n f_N(\tau_i)\frac{y+\tau_i}{\tau_i+\eta_N}} +\frac{c_N \tau_j}{(\tau_j+\eta_N)} \int \frac{F^{B_N}(dy)}{\frac{1}{n}\sum_{i=1}^n f_N(\tau_i)\frac{y+\tau_i}{\tau_i+\eta_N}} \nonumber \\													& =\frac{c_N}{\tau_j+\eta_N} \int \frac{ydF^{B_N}}{\int \frac{y+t}{t+\eta_N}f_N(t)\nu(dt)} +\frac{c_N\tau_j}{\tau_j+\eta_N} \int \frac{dF^{B_N}}{\int \frac{y+t}{t+\eta_N}f_N(t)\nu(dt)}+\frac{c_N}{\tau_j+\eta_N} \epsilon_{n,1} +\frac{c_N\tau_j}{\tau_j+\eta_N}\epsilon_{n,2},  \label{eq:develop}  %c_N \int \frac{\tau_j dF^{B^{N}}}{(\tau_j+\eta)\frac{1}{n}\sum_{i=1}^n \frac{y+\tau_i}{\tau_i+\eta}}\\
     %   &=c_Nr\int \frac{(y+\tau_j) dF^{B^{[a,b]}}}{(\tau_j+\eta)\frac{1}{n}\sum_{i=1}^n \frac{y+\tau_i}{\tau_i+\eta}} +c_N(1-r)\frac{\tau_j}{(\tau_j+\eta)\frac{1}{n}\sum_{i=1}^n \frac{\tau_i}{\tau_i+\eta}}+%+c_N \int \frac{\tau_j dF^{B}}{(\tau_j+\eta)\frac{1}{n}\sum_{i=1}^n \frac{y+\tau_i}{\tau_i+\eta}} +\epsilon_{n,j,1} +\epsilon_{n,j,2}
  \end{align}
  where
\begin{align}
\epsilon_{n,1}&=\int\frac{ y \left(\int \frac{y+t}{t+\eta_N}f_N(t)\nu(dt)-\frac{1}{n}\sum_{i=1}^n f_N(\tau_i)\frac{y+\tau_i}{\tau_i+\eta_N}\right)}{\frac{1}{n}\sum_{i=1}^n f_N(\tau_i)\frac{y+\tau_i}{\tau_i+\eta_N}\int \frac{y+t}{t+\eta_N}f_N(t)\nu(dt)}F^{B_N}(dy)\\
\epsilon_{n,2}&=\int\frac{ \int \frac{y+t}{t+\eta_N}f_N(t)\nu(dt)-\frac{1}{n}\sum_{i=1}^n f_N(\tau_i)\frac{y+\tau_i}{\tau_i+\eta_N}}{\frac{1}{n}\sum_{i=1}^n f_N(\tau_i)\frac{y+\tau_i}{\tau_i+\eta_N}\int \frac{y+t}{t+\eta_N}f_N(t)\nu(dt)}F^{B_N}(dy).
\end{align}
%  \begin{align}
%	  \epsilon_{n,1}&=\int \frac{y(dF^{B^N}-dF^{B})}{\frac{1}{n}\sum_{i=1}^n f(\tau_i)\frac{y+\tau_i}{\tau_i+\eta}} +\int \frac{y\left(\int f(\tau_i) \frac{y+t}{t+\eta} \nu(dt)\right)-\frac{1}{n}\sum_{i=1}^n f(\tau_i)\frac{y+\tau_i}{\tau_i+\eta}}{\left(\frac{1}{n}\sum_{i=1}^n f(\tau_i)\frac{y+\tau_i}{\tau_i+\eta}\right)\int f(t)\frac{y+t}{t+\eta}\nu(dt)} F^{B}(dy)\\
%	  \epsilon_{n,2}&=\int \frac{dF^{B^N}-dF^B}{\frac{1}{n}\sum_{i=1}^n f(\tau_i) \frac{y+\tau_i}{\tau_i+\eta}} +\int \frac{\int f(t) \frac{y+t}{t+\eta}\nu(dt)-\frac{1}{n}\sum_{i=1}^n f(\tau_i)\frac{y+\tau_i}{\tau_i+\eta}}{\left(\frac{1}{n}\sum_{i=1}^n f(\tau_i)\frac{y+\tau_i}{\tau_i+\eta}\right)\int f(t)\frac{y+t}{t+\eta}\nu(dt)} F^B(dy)
%  \end{align}
  As $\eta_N$ is the unique solution of \eqref{eq:eta}, $\int \frac{yF^{B_N}(dy)}{\int \frac{y+t}{t+\eta_N}f_N(t)\nu(dt)}$ can be further simplified as:
\begin{align}
\int \frac{yF^{B_N}(dy)}{\int \frac{y+t}{t+\eta_N}f_N(t)\nu(dt)} & = \int \frac{y\int\frac{f_N(t)}{t+\eta_N}\nu(dt)+ \int\frac{tf_N(t)}{t+\eta_N}\nu(dt)-\int \frac{tf_N(t)}{t+\eta_N}\nu(dt)}{\int \frac{f_N(t)}{t+\eta_N}\nu(dt)\int \frac{y+t}{t+\eta_N}f_N(t)\nu(dt)}F^{B_N}(dy) \nonumber \\
																																							    &\stackrel{(a)}{=}\frac{\int f_N(t)\nu(dt)}{\int \frac{f_N(t)}{t+\eta_N}\nu(dt)} -\frac{\int \frac{tf_N(t)}{t+\eta_N} \nu(dt)}{\int \frac{f_N(t)}{t+\eta_N}\nu(dt)}  \nonumber\\
  &=\eta_N, \label{eq:eta_bis}
%  %{\int \frac{1}{t+\eta}\nu(dt)\int \frac{y+t}{t+\eta}\nu(dt)}
 \end{align}
where $(a)$ follows due to the fact that $\int\frac{1}{\int \frac{y+t}{t+\eta_N}f_N(t)\nu(dt)}F^{B_N}(dy)=1$.
  Substituting \eqref{eq:eta_bis} into \eqref{eq:develop}, we get:
  $$
  \frac{1}{N}\tr \frac{B_N+\tau_j I}{\tau_j+\eta_N} \left(\frac{1}{N}\sum_{i=1}^n f(\tau_i)\frac{B_N+\tau_i I}{\tau_i+\eta_N}\right)^{-1} ={c_N} +\frac{c_N}{\tau_j+\eta_N} \epsilon_{n,1} +\frac{c_N\tau_j}{\tau_j+\eta_N} \epsilon_{n,2}.
     $$
  Therefore, the result immediately follows once we prove that $\max_{1\leq j\leq n}\frac{1}{\tau_j+\eta_N}\left|\epsilon_{n,1}\right|$ and $\left|\epsilon_{n,2}\right|$ converge almost surely to zero. We will only control $\frac{1}{\tau_j+\eta_N}\epsilon_{n,1}$. The control of $\epsilon_{n,2}$ can be obtained using the same arguments.
We have:
\begin{align*}
\frac{\left|\epsilon_{n,1}\right|}{\tau_j+\eta_N}&\leq \sup_y\left|\int \frac{y+t}{t+\eta_N}f_N(t)\nu(dt)-\frac{1}{n}\sum_{i=1}^n f_N(\tau_i)\frac{y+\tau_i}{\tau_i+\eta_N}\right|\\
&\times \frac{1}{\eta_N}\int \frac{yF^{B_N}(dy)}{\left(\frac{1}{n}\sum_{i=1}^n f_N(\tau_i)\frac{y+\tau_i}{\tau_i+\eta_N}\right)\int \frac{y+t}{t+\eta_N}f_N(t)\nu(dt)}\\
&\leq \left(\lambda_N^N \left|\int \frac{f_N(t)}{t+\eta_N}\nu(dt)-\frac{1}{n}\sum_{i=1}^n \frac{f_N(\tau_i)}{\tau_i+\eta_N}\right| +\left|\int \frac{tf_N(t)}{t+\eta_N}\nu(dt)-\frac{1}{n}\sum_{i=1}^n\frac{f_N(\tau_i)\tau_i}{\tau_i+\eta_N}\right| \right) \\
&\times \frac{1}{\eta_N}\int \frac{y F^{B_N}(dy)}{\left(\frac{1}{n}\sum_{i=1}^n \frac{f_N(\tau_i)(y+\tau_i)}{\tau_i+\eta_N}\right)\int\frac{y+t}{t+\eta_N}f_N(t)\nu(dt) }.
\end{align*}

Since $\frac{f_N(t)}{t+\eta_N} \leq \frac{\beta t}{t+\eta_N} \leq \beta$ and $\frac{tf_N(t)}{t+\eta_N} \leq \frac{t\alpha}{t+\eta_N} \leq \alpha$, sequences  $\int \frac{f_N(t)}{t+\eta_N}\nu_N(dt)-\int \frac{f(t)}{t+\eta_N}\nu(dt)$ and $\int \frac{tf_N(t)}{t+\eta_N}\nu_N(dt)-\int \frac{tf_N(t)}{t+\eta_N}\nu(dt)$ are bounded. One can extract a subsequence $(n)$ such that:
$
\int \frac{f_N(t)}{t+\eta_N}\nu_N(dt) - \int \frac{f_N(t)}{t+\eta_N}\nu(dt)
$ and  $\int \frac{tf_N(t)}{t+\eta_N}\nu_N(dt)-\int \frac{tf_N(t)}{t+\eta_N}\nu(dt)$ converge.
Over this subsequence, $t\mapsto \frac{tf_N(t)}{t+\eta_N}$ converge uniformly to $t\mapsto \frac{tf^*(t)}{t+\eta^*}$ and as such:
$$
\int \frac{tf_N(t)}{t+\eta_N}\nu_N(dt) -\int \frac{tf^*(t)}{t+\eta^*}\nu(dt)\asto 0.
$$
Moreover, since $\lim\inf\eta_N\geq \lim\inf r_N^{-}\frac{1}{N}\tr B_N >0$, $\eta^*\neq 0$. The sequence  of functions $t\mapsto \frac{f_N(t)}{t+\eta_N}$ converge also uniformly to $t\mapsto \frac{f^*(t)}{t+\eta^*}$, thereby yielding:
%To handle the term $\int \frac{1}{t+\eta_N}f_N(t)\nu_N(dt)$, two cases need to be considered. If $\eta^*=0$, then we can show the uniform convergence of  $\frac{f_N(t)}{t+\eta_N}$ to $\frac{f^*(t)}{t+\eta^*}$, thereby establishing that: 
$$
\int \frac{f_N(t)}{t+\eta_N}\nu_N(dt) - \int \frac{f^*(t)}{t+\eta^*}\nu(dt)\asto 0.
$$
It remains thus to check that $ \frac{1}{\eta_N}\int \frac{y dF^{B_N}}{\left(\frac{1}{n}\sum_{i=1}^n \frac{f_N(\tau_i)(y+\tau_i)}{\tau_i+\eta_N}\right)\int\frac{y+t}{t+\eta_N}f_N(t)\nu(dt) }$ is almost surely bounded. For that, first note that since $
\eta_N =\int \frac{ydF^{B_N}}{\int \frac{y+t}{t+\eta_N}f_N(t)\nu(dt)}.
$, we have:

\begin{align*}
\frac{1}{\eta_N}\int \frac{y}{\frac{1}{n}\sum_{i=1}^n f_N(\tau_i)\frac{y+\tau_i}{\tau_i+\eta_N}\int \frac{y+t}{t+\eta_N}f_N(t)\nu(dt)}dF^{B_N} \leq \frac{1}{\frac{1}{n}\sum_{\substack{i=1 \\ \tau_i \geq m}}^n \frac{f_N(\tau_i)m}{m+\eta_N}} 
\end{align*} 
As $\eta_N \leq \lambda_N^{N}$ and $\lim\inf_N \inf_{t\in\left[m,+\infty\right)} f_N(t)>0$, $\frac{1}{n}\sum_{\substack{i=1 \\ \tau_i \geq m}}^n \frac{f_N(\tau_i)m}{m+\eta_N}$ is almost surely bounded away from zero, thereby implying the desired result. The control of $\epsilon_{n,2}$ could be done using the same arguments.  
%This can be treated using the weak convergence of $F^{B^{N}}$ to $F^{B}$ and $\nu_n$
%to $\nu$.
%We will only treat the term $\epsilon_{n,1}$. The term $\epsilon_{n,2}$ can be controlled similarly. 
\end{proof}
The second ingredient that will be of extensive use in the proof of the theorem is provided by the following key-lemma.  
\begin{lemma}
\label{lemma:quadratic_alpha}
Let Assumption \ref{ass:regime}-\ref{ass:statistical}  hold true.
Let $(f_N)$ be a sequence functions satisfying the conditions of proposition \ref{prop:inequality}.  Denote by $\eta_N$  the unique solution in $x$ to the following equation:
$$
\int \frac{F^{B_N}}{\int_0^{\infty} \frac{y+t}{t+x}f_N(t)\nu(dt)} =1.
$$
 %Let $\alpha_1,\cdots,\alpha_n$ a set of positive scalars. Assume that  there exists positive scalars, $\beta_N^{-}=\mathcal{O}(\frac{1}{N}\tr B_N)$,  such that for all $1\leq i \leq n$, we have:
%$$
%\tau_i+\beta_N^{-}\leq \alpha_i \leq \tau_i+\frac{1}{N}\tr B_N $$
Consider $e_1,\cdots,e_n$  the unique solutions to the following system of equations:
\begin{equation}
e_k=\frac{f_N(\tau_k)}{n}\tr \frac{B_N+\tau_k I_N}{\tau_k+\eta_N}\left(\frac{1}{n}\sum_{i=1}^n \frac{f_N(\tau_i)(B_N+\tau_i I_N)}{(\tau_i+\eta_N)(1+e_i)}\right)^{-1}.
\label{eq:e_k}
\end{equation}
Then, the following statements hold true:
\begin{enumerate}[i)]
\item
$
\max_{1\leq j\leq n} \left|\frac{f_N(\tau_j)}{N(\tau_j+\eta_N)}y_j^{*}\left(\frac{1}{n}\sum_{i\neq j} \frac{f_N(\tau_i)}{\tau_i+\eta_N}y_iy_i^*\right)^{-1}y_j-c_N^{-1} e_j\right|\xrightarrow[a.s]{}0.
$
\item If $T_N=\left(\frac{1}{n}\sum_{i=1}^n \frac{f_N(\tau_i)(B_N+\tau_iI_N)}{(\tau_i+\eta_N)(1+e_i)}\right)^{-1}$.
Then, we have:
\begin{equation}
\max_{1\leq j\leq n}\left|\frac{1}{N(\tau_j+\eta_N)}y_j^*\left(\frac{1}{n}\sum_{i\neq j}\frac{f_N(\tau_i)}{\tau_i+\eta_N}y_iy_i^*\right)^{-1}y_j-\frac{1}{N}\tr \frac{\left(B_N+\tau_j I_N\right)}{\tau_j+\eta_N} T_N\right|\xrightarrow[]{a.s.}0.
\label{eq:T_2}
\end{equation}
\item If $\|f_N\|_{\infty} < \frac{1}{c_N}$, then, there exists $\epsilon_n\downarrow 0$ such that for $n$ large enough, a.s.
$$
\max_{1\leq k\leq n} e_k \leq \frac{c_N\|f_N\|_{\infty}}{1-\|f_N\|_{\infty}c_N} + \epsilon_n.
$$
\end{enumerate}
\end{lemma}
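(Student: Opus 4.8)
The plan is to prove \emph{(ii)} first, read off \emph{(i)} from it by an exact identity, and handle \emph{(iii)} separately by a fixed-point argument resting on Proposition~\ref{prop:inequality}. Throughout write $\Sigma_i=B_N+\tau_iI_N$ and $g_i=f_N(\tau_i)/(\tau_i+\eta_N)$; recall that $\liminf_N\eta_N>0$ by Proposition~\ref{prop:inequality}, that $g_i\Sigma_i\succeq0$ and $g_i\Sigma_i\preceq(\alpha+\beta\|B_N\|)I_N$ uniformly in $i,N$ (because $f_N(\tau_i)\le\alpha$, $f_N(\tau_i)/(\tau_i+\eta_N)\le\beta$ and $f_N(\tau_i)\tau_i/(\tau_i+\eta_N)\le\alpha$), and that existence and uniqueness of the $(e_i)$ solving \eqref{eq:e_k} follow from the standard interference function argument of \cite{yates}, exactly as in the proof of Theorem~\ref{th:uniqueness}. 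With $T_N$ as in the statement, \eqref{eq:e_k} reads $e_j=\frac1n\tr g_j\Sigma_jT_N$ and $T_N^{-1}=\frac1n\sum_i\frac{g_i\Sigma_i}{1+e_i}$. Once \eqref{eq:T_2} is known, multiplying its bracket by $f_N(\tau_j)\le\alpha$ and using $f_N(\tau_j)\,\frac1N\tr\frac{\Sigma_j}{\tau_j+\eta_N}T_N=\frac1N\tr g_j\Sigma_jT_N=c_N^{-1}\,\frac1n\tr g_j\Sigma_jT_N=c_N^{-1}e_j$ immediately yields \emph{(i)}, so it suffices to establish \eqref{eq:T_2} and the bound on $\max_ke_k$.

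For \emph{(ii)}, fix $j$ and set $R_j=\frac1n\sum_{i\ne j}g_iy_iy_i^*$. Decompose $y_j=A_Ns_j+\sqrt{\tau_j}\,w_j$, so that
\[
\tfrac1N y_j^*R_j^{-1}y_j=\tfrac1N s_j^*A_N^*R_j^{-1}A_Ns_j+\tfrac{\tau_j}N w_j^*R_j^{-1}w_j+\tfrac2N\re\left(\sqrt{\tau_j}\,s_j^*A_N^*R_j^{-1}w_j\right).
\]
Conditioning on $(\tau_i,y_i)_{i\ne j}$ and on $\tau_j$, the Gaussian trace lemma for $s_j$ and the analogous concentration bound for the unitarily invariant fixed-norm vector $w_j$ make the first two terms close to $\frac1N\tr B_NR_j^{-1}$ and $\frac{\tau_j}N\tr R_j^{-1}$, while the cross term is negligible by independence and centering; hence $\frac1N y_j^*R_j^{-1}y_j-\frac1N\tr\Sigma_jR_j^{-1}\to0$. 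A rank-one perturbation inequality then shows that replacing $R_j$ by $R=\frac1n\sum_ig_iy_iy_i^*$ alters $\frac1N\tr\Sigma_jR_j^{-1}$ by $O(1/n)$, and the deterministic equivalent theorem for sample covariance matrices with a bounded (random) variance profile — Theorem~1 of \cite{wagner} applied conditionally on $(\tau_i)$, whose fixed point is precisely $(e_i)$ and whose deterministic matrix is $T_N$ — gives $\frac1N\tr\Sigma_jR^{-1}-\frac1N\tr\Sigma_jT_N\to0$. Dividing by $\tau_j+\eta_N\ge\eta_N$ produces the pointwise version of \eqref{eq:T_2}.

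To upgrade both pointwise statements to $\max_{1\le j\le n}$ one uses a union bound and Borel--Cantelli: the concentration inequalities above furnish tails of order $n^{-p}$ for arbitrary $p$ once sufficiently high moments are bounded, so the probability of the maximum exceeding a fixed $\varepsilon$ is summable. The main obstacle is the input these moment bounds require, namely a uniform almost sure lower bound $\liminf_N\min_{1\le j\le n}\lambda_{\min}(R_j)>0$, which also controls $\|R_j^{-1}\|$ throughout; this does not follow from classical Bai--Silverstein estimates because all the $y_i$ share the high-rank component $B_N$, and is exactly what \cite{smallest_eigenvalue} supplies. Concretely, Assumption~\ref{ass:measure}-ii) gives $\nu([m,\infty))>1-\epsilon>c_+>\limsup_n c_N$, so $R_j$ dominates a fixed positive multiple of $\frac1n\sum_{i\ne j,\ \tau_i\ge m}y_iy_i^*$, a sum of strictly more than $N$ rank-one terms whose aspect ratio stays bounded away from $1$, to which the estimate of \cite{smallest_eigenvalue} applies.

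For \emph{(iii)}, put $e^\star=\max_{1\le k\le n}e_k$ and $M_N=\frac1N\sum_if_N(\tau_i)\frac{B_N+\tau_iI_N}{\tau_i+\eta_N}$, so that $\frac1n\sum_ig_i\Sigma_i=c_NM_N$ and $T_N^{-1}=\frac1n\sum_i\frac{g_i\Sigma_i}{1+e_i}\succeq\frac{1}{1+e^\star}\,c_NM_N$, whence $T_N\preceq(1+e^\star)c_N^{-1}M_N^{-1}$. Plugging this into \eqref{eq:e_k},
\[
e_k=\tfrac1n\tr g_k\Sigma_kT_N\le(1+e^\star)\,f_N(\tau_k)\,\tfrac1N\tr\tfrac{B_N+\tau_kI_N}{\tau_k+\eta_N}M_N^{-1}\le(1+e^\star)\,\|f_N\|_\infty\,(c_N+\bar\varepsilon_n),
\]
where $\bar\varepsilon_n=\max_{1\le j\le n}|\epsilon_{n,j}|\to0$ a.s.\ and the last step is \eqref{eq:fundamental} of Proposition~\ref{prop:inequality}. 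Maximizing over $k$ gives $e^\star\le(1+e^\star)\|f_N\|_\infty(c_N+\bar\varepsilon_n)$; since $\|f_N\|_\infty\le\alpha$ and $\|f_N\|_\infty c_N<1$, the factor $\|f_N\|_\infty(c_N+\bar\varepsilon_n)$ is $<1$ for $n$ large a.s., and solving for $e^\star$ yields $e^\star\le\frac{\|f_N\|_\infty(c_N+\bar\varepsilon_n)}{1-\|f_N\|_\infty(c_N+\bar\varepsilon_n)}$. A first-order expansion in $\bar\varepsilon_n$, followed by passing to a monotone envelope, gives the announced bound with a deterministic sequence $\epsilon_n\downarrow0$.
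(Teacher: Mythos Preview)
Your proposal is correct and follows essentially the same route as the paper. The paper packages the concentration step (your decomposition of $y_j=A_Ns_j+\sqrt{\tau_j}w_j$, trace lemma, cross-term control) and the rank-one perturbation plus deterministic-equivalent step into two appendix lemmas (Lemmas~\ref{app:convergence_quadratic} and~\ref{lemma:convergence_deterministic_equivalent}), and then verifies exactly the smallest-eigenvalue hypothesis you isolate, by the same restriction to indices with $\tau_i\ge m$ and the same appeal to \cite{smallest_eigenvalue} via Lemma~\ref{lemma:smallest_eigenvalue}; your derivation of \emph{(i)} from \emph{(ii)} by multiplying by $f_N(\tau_j)$ is the identity $f_N(\tau_j)\frac1N\tr\frac{\Sigma_j}{\tau_j+\eta_N}T_N=c_N^{-1}e_j$ that the paper also uses implicitly. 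For \emph{(iii)} your argument (take the maximal index, replace every $1+e_i$ in $T_N^{-1}$ by $1+e^\star$, invoke \eqref{eq:fundamental}, solve the linear inequality) coincides with the paper's proof line for line.
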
 
\begin{proof}
The proof of the first two items  is based on  Lemma \ref{app:convergence_quadratic} and Lemma \ref{lemma:convergence_deterministic_equivalent} in Appendix \ref{app:technical}. For these Lemmas to be applicable, we need to check that $\lim\inf_N\min_{1\leq j\leq n}\lambda_1(\frac{1}{n}\sum_{i\neq j} \frac{f_N(\tau_i)}{\tau_i+\eta_N}y_iy_i^*)>0$. To this end, first note that:
$$
\min_{1\leq j\leq n}\lambda_1\left(\frac{1}{n}\sum_{i\neq j} \frac{f_N(\tau_i)}{\tau_i+\eta_N}y_iy_i^*\right)\geq \min_{1\leq j \leq n}\lambda_1\left(\frac{1}{n}\sum_{i\neq j,\tau_i\geq m} \frac{f_N(\tau_i)}{\tau_i+\eta_N}y_iy_i^*\right).
$$ 
The right-hand side of the above equality is almost surely bounded above zero since $f_N(\tau_i)\frac{(B_N +\tau_i I_N)}{\tau_i+\eta_N} \succeq \inf_{t\in\left[m,\infty\right)}f_N(t) \frac{m}{m+\eta_N}$ and $\eta_N$ is almost surely bounded by proposition \ref{prop:inequality}. We conclude thus by resorting to Lemma \ref{lemma:smallest_eigenvalue} in Appendix \ref{app:technical}.

 In order to prove the last statement, let $j_0$ be the index of the maximum element in $\left\{e_1,\cdots,e_n\right\}$. We therefore have:
\begin{align*}
e_{j,0}& \leq \|f_N\|_{\infty}(1+e_{j_0}) \frac{1}{n}\tr \frac{(B_N+\tau_{j_0}I_N)}{\tau_{j_0}+\eta_N}\left(\frac{1}{n}\sum_{i=1}^n \frac{f_N(\tau_i)(B_N+\tau_i I_N)}{\tau_i+\eta_N}\right)^{-1}\\
&\stackrel{(a)}{\leq}  \|f_N\|_{\infty}(1+e_{j_0})\left(c_N+\epsilon_n\right),
\end{align*}
where $(a)$ follows from proposition \ref{prop:inequality}. Besides, scalars $e_1,\cdots,e_n$ being the limits of almost surely bounded random quantities are bounded. Therefore,
$$
e_{j_0}\leq \frac{\|f_N\|_{\infty}c_N}{1-\|f_N\|_{\infty}c_N} +\epsilon_n^{'}
$$
where $\epsilon_n^{'}\downarrow 0$. 
\end{proof}
\end{paragraph}

\subsubsection{Proof of the Main Theorems}

With the above preliminary results at hand, we are now in position to provide the proofs of Theorem \ref{th:uniqueness} and Theorem \ref{th:asymptotic}.

\begin{paragraph}{{\bf Proof of Theorem \ref{th:uniqueness}: Asymptotic Existence of the Robust Scatter Estimator}}

Theorem \ref{th:uniqueness} establishes the existence of the robust scatter estimate for large $n$ and $N$. In particular, it implies that for each realization, there exists $n_0$ and $N_0$ large such that  for all $n$ and $N$ greater than $n_0$ and $N_0$, equation \eqref{eq:Z} admits a unique solution.
Although we believe that a stronger result showing the existence of the robust scatter estimate for well-behaved set of samples can be established using probably the same kind of techniques as in \cite{chitour-14}, we have chosen in this paper to show Theorem \ref{th:uniqueness} under the setting of the asymptotic regime. The reason is  that the techniques used in that proof will be key to understanding some aspects of the asymptotic behaviour of the robust scatter estimate, thereby paving the way towards the proof of Theorem \ref{th:asymptotic}. 

The proof of Theorem \ref{th:uniqueness} follows the same lines as in \cite{couillet-pascal-2013}. Define $h=(h_1,\cdots,h_n)$ with:
\begin{align*}
h_j:\mathbb{R}_+^{n}&\to\mathbb{R}_+ \\
(x_1,\cdots,x_n)&\mapsto \frac{1}{N} y_j^*\left(\frac{1}{n}\sum_{\substack{i=1\\i\neq j}}v(x_i) y_iy_i^* \right)^{-1} y_j.
\end{align*}
As it has been already mentioned, in order to prove that $\hat{C}_N$ is uniquely defined for $n$ large enough a.s., it suffices to show that the system of equations in $x_1,\cdots,x_n$
$$
x_j=h_j(x_1,\cdots,x_n), j=1,\cdots,n
$$
admits a unique solution $q_1,\cdots,q_n$ a.s. for $n$ large enough. To this end, we will show that $h$ is a standard interference function, i.e, it satisfies the following three conditions:
\begin{enumerate}[a)]
\item Positivity: For each $q_1,\cdots,q_n\geq 0$, and each $i$, $h_i(q_1,\cdots,q_n) >0$,
\item Monotonicity: For each $q_1\geq q_1^{'},\cdots,q_n\geq q_n^{'}$ and each $i$, $h_i(q_1,\cdots,q_n)\geq h_i(q_1^{'},\cdots,q_n^{'})$,
\item Scalability:  For all $\alpha >1$, and $q_1,\cdots,q_n\geq 0$, $\alpha h_i(q_1\cdots,q_n) > h_i(\alpha q_1,\cdots,\alpha q_n)$.
\end{enumerate}
 Item $a)$ can be easily shown by noticing that matrix $\frac{1}{n}\sum_{i=1,i\neq j} v(q_i) y_iy_i^*$ is invertible almost surely and is positive definite, while the monotonicity follows immediately from the fact that $h_j$ is non-decreasing of each $q_i$. As for the scalability, we can assume without loss of generality that there exists $q_i >0$ as the results holds trivially when $q_1=\cdots=q_n=0$. With this assumption at hand, we  rewrite $h_j(q_1,\cdots,q_n)$ as:
$$
h_j(q_1,\cdots,q_n)=\frac{1}{N} y_j^*\left(\frac{1}{n}\sum_{\substack{i=1\\ i\neq j}}^n\frac{\psi_N(q_i)}{q_i}y_iy_i^*\right)^{-1} y_j.
$$

As $\psi_N$ is increasing, $\psi_N(\alpha q) > \psi_N(q)$ for $\alpha >1$ and $q >0$. Hence,
\begin{align*}
h_j(\alpha q_1,\cdots,\alpha q_n) =\frac{\alpha}{N} y_j^*\left(\frac{1}{n}\sum_{\substack{i=1 \\ i\neq j}}\frac{\psi_N(\alpha q_i)}{q_i} y_iy_i^*\right)^{-1}y_j.
\end{align*}
If there exists at least $q_i >0$, we therefore get:
$$
h_j(\alpha q_1,\cdots,\alpha q_n) < \frac{\alpha}{N} y_j^*\left(\frac{1}{n}\sum_{i=1, i\neq j} \frac{\psi_N(q_i)}{q_i}y_iy_i^*\right)^{-1} y_j =\alpha h_j(q_1,\cdots,q_n).
$$
We have thus established that $h$ is a standard interference function. Referring to the results of \cite{yates}, it remains to show that there exists vector $(q_1,\cdots,q_n)$ such that for all $i=1,\cdots,n$, $q_i > h_i(q_1,\cdots,q_n)$ a.s. for $n$ large enough, a statement which is known as the feasibility condition. 

In order to establish the feasibility condition,  let $q_N^{+}$ be chosen so that:
$$
\int_0^{+\infty} \psi_N(q_N^{+} t)\nu(dt) = \frac{1+\kappa}{1-c_N\phi_{\infty}}.
$$
for some sufficiently small $0\leq \kappa\leq 1$ satisfying:
$$
1+\kappa\leq \phi_{\infty}(1-\nu\left\{0\right\}).
$$
This is possible since,
$$
\lim_{q\to+\infty} \int_0^{+\infty} \psi_N(q t)\nu(dt)=\psi_{\infty} (1-\nu\left\{0\right\}) =\frac{\phi_{\infty}(1-\nu\left\{0\right\})}{1-c_N\phi_{\infty}} > \frac{1+\kappa}{1-c_N\phi_{\infty}}.
$$
We will prove that $q_N^{+}$ is a bounded sequence. To this end, we will proceed by contradiction. Assume that there exists a sequence $(n)$ such that $\lim_{n\to+\infty} q_{(n)}=+\infty$. Since the sequence of functions $\psi_N$ converge uniformly, one can extract a subsequence $(p)$ from $(n)$ such that:
$c_{(p)}\to c^{*}$ and $\psi_{(p)}$ converge uniformly to $\psi^{*}$. Therefore, the sequence of functions $\left(f_p:q\mapsto \int_0^{+\infty} \left(1-c_{(p)}\phi_{\infty}\right)\psi_{(p)}(q t)\nu(dt)\right)$ converge uniformly to $f^{*}:q \mapsto \int_{0}^{+\infty} (1-c^{*}\phi_{\infty})\psi^{*}(q t)\nu(dt)$. Hence,
$$
\lim_{n\to+\infty} f_p(q_p) \to \lim_{x\to +\infty} f^*(x) =\psi_{\infty}(1-c^{*}\phi_{\infty})(1-\nu\left\{0\right\})=\phi_{\infty}\left(1-\nu\left\{0\right\}\right) >1+\kappa,
$$ 
which is in contradiction with the fact that:
$$
\int_{0}^{\infty} (1-c_N\phi_{\infty})\psi_N(q_N^{+} t)\nu(dt) =1+\kappa.
$$
Now, consider $\eta_N$ the unique solution of:
$$
1=\int \frac{\int_0^{+\infty}\psi_N(q_N^{+} x)\nu(dx) F^{B_N}(dy)}{\int_0^{+\infty} \psi_N(q_Nt)\frac{y+t}{t+\eta_N}\nu(dt)}.
$$
Such $\eta_N$ exists and is unique by Proposition \ref{prop:inequality}. Set $q_i=q_N^{+}(\tau_i+\eta_N)$. We will prove that this choice of $q_j,j=1,\cdots,n$ guarantees:
$$
\frac{h_j(q_1,\cdots,q_n)}{q_j} \leq 1,
$$
a.s. for $n$ large enough.
We have:
\begin{align*}
\frac{h_j(q_1,\cdots,q_n)}{q_N^{+}(\tau_j+\eta_N)} &=\frac{1}{N(\tau_j+\eta_N)} y_j^*\left(\frac{1}{n}\sum_{i\neq j} \frac{\psi_N(q_N^{+}(\tau_i+\eta_N))}{\tau_i+\eta_N}y_iy_i^*\right)^{-1}y_j\\
&\leq \frac{1}{N(\tau_j+\eta_N)\int_{0}^{+\infty}{\psi_N}(q_N^{+} x)\nu(dx)} y_j^*\left(\frac{1}{n}\sum_{i\neq j} \frac{\overline{\psi}_N(q_N^{+}\tau_i)}{\tau_i+\eta_N}y_iy_i^*\right)^{-1}y_j,
\end{align*}
where $\overline{\psi}_{q_N^{+}}(x)=\frac{\psi_N(q_N^{+} x)}{\int_0^{+\infty}\psi_N(q_N^{+} t)\nu(dt)}$.
From item $ii)$ of Lemma \ref{lemma:quadratic_alpha}, we have:
$$
\max_{1\leq j\leq n} \left|\frac{1}{N(\tau_j+\eta_N)} y_j^*\left(\frac{1}{n}\sum_{i\neq j} \frac{\overline{\psi}_{q_N^{+}}(q_N^{+}\tau_i)}{\tau_i+\eta_N}y_iy_i^*\right)^{-1}y_j-\frac{1}{N}\tr \frac{B_N+\tau_j I_N}{\tau_j+\eta_N}T_N\right|\asto 0.
$$
where $T_N=\left(\frac{1}{n}\sum_{i=1}^n \frac{\overline{\psi}_{q_N^{+}}(\tau_i)(B_N+\tau_i I_N)}{(\tau_i+\eta_N)(1+e_i)}\right)^{-1}$ with $e_1,\cdots,e_n$ are the unique solutions to the following system of equations:
$$
e_k=\frac{\overline{\psi}_{q_N^{+}}(\tau_k)}{n}\tr\frac{\left(B_N+\tau_k I_N\right)}{\tau_k+\eta_N}\left(\frac{1}{n}\sum_{i=1}^n \frac{\overline{\psi}_{q_N^{+}}(\tau_i)\left(B_N+\tau_i I_N\right)}{(\tau_i+\eta_N)(1+e_i)}\right)^{-1}.
$$
Let $j_0$ be the index of the maximum element in $\left\{e_1,\cdots,e_n\right\}$. Then, there exists $\epsilon_n\downarrow 0 $ such that for all $j=1,\cdots,n$
\begin{equation}
\frac{h_j(q_1,\cdots,q_n)}{q_j}\leq \frac{1+e_{j_0}}{\int_0^{+\infty}\psi_N(q_N^{+} x)\nu(dx)}\frac{1}{N}\tr \frac{B_N+\tau_j I_N}{\tau_j+\eta_N}\left(\frac{1}{n}\sum_{i=1}^n \frac{\overline{\psi}_{q_N^{+}}(\tau_i)(B_N+\tau_i I_N)}{\tau_i+\eta_N}\right)^{-1} +\epsilon_n.
\label{eq:hj}
\end{equation}
As $\left\|\overline{\psi}_{q_N^{+}}\right\|_{\infty}=\frac{\psi_{\infty}}{\int_0^{+\infty}\psi_N(q_N^{+} t)\nu(dt)}=\frac{\phi_{\infty}}{1+\kappa} \leq \frac{1}{c_N(1+\kappa)}$, we obtain from item $iii)$ of Lemma \ref{lemma:quadratic_alpha},
\begin{equation}
\max_{1\leq k\leq n}e_k \leq \frac{c_N\left\|\overline{\psi}_{q_N^{+}}\right\|_{\infty} }{1-c_N\left\|\overline{\psi}_{q_N^{+}}\right\|_{\infty}} +\epsilon_n^{'}=\frac{c_N\phi_{\infty}}{1+\kappa-c_N\phi_{\infty}}+|o(1)|.
\label{eq:epsilon}
\end{equation}
where $o(1)$ refers to some sequences converging almost surely to zero as $n$ grow to infinity.
Plugging \eqref{eq:epsilon} into \eqref{eq:hj}, and using the fact that:
$$
\frac{1}{N}\tr\frac{B_N+\tau_j I_N}{\tau_j+\eta_N}\left(\frac{1}{n}\sum_{i=1}^n \frac{\overline{\psi}_{q_N^{+}}(\tau_i)(B_N+\tau_i I_N)}{\tau_i+\eta_N}\right)^{-1} \leq 1+|o(1)|,
$$
we finally get:
$$
\frac{h_j(q_1,\cdots,q_n)}{q_j} \leq \frac{1-c_N\phi_{\infty}}{1+\kappa -c_N\phi_{\infty}} + |o(1)|
$$
thereby establishing that:
$$
h_j(q_1,\cdots,q_n) <q_j 
$$
a.s. for $n$ large enough.
\end{paragraph} 
\begin{paragraph}{{\bf Proof of Theorem \ref{th:asymptotic}: Asymptotic Convergence of the Robust-Scatter Estimator}}
The proof of Theorem \ref{th:asymptotic} heavily relies on the new rewriting of the robust-scatter estimate as:
\begin{equation}
\hat{C}_N=\frac{1}{n}\sum_{i=1}^n v(q_i)y_iy_i^*,
\label{eq:new_rewriting}
\end{equation}
where $q_1,\cdots,q_n$ are the unique solutions of the following system of equations:
$$
q_j=y_j^*\left(\frac{1}{n}\sum_{i=1,i\neq j}^n v(q_i)y_iy_i^*\right)^{-1}y_j,
$$
their existence and uniqueness in the asymptotic regime being established in the proof of Theorem \ref{th:uniqueness}. From the rewriting of $\hat{C}_N$ in \eqref{eq:new_rewriting}, it appears that an in-depth study of the asymptotic behaviour of $q_1,\cdots,q_n$ can be a good starting point. As mentioned in our heuristic analysis, one intuitively expects the $q_1,\cdots,q_n$ to approach in the asymptotic regime $\delta_1,\cdots,\delta_n$, the solutions of the following system of equations: 
$$
\delta_i=\frac{1}{N}\tr (B_N+\tau_i I_N)\left(\frac{1}{n}\sum_{j=1}^n \frac{v(\delta_j)(B_N+\tau_j I_N)}{1+c_N\psi_N(\delta_j)}\right)^{-1}.
$$
This intuition underlies the proof of Theorem \ref{th:asymptotic}. In particular, we will prove that:
$$
f_i=\frac{v(q_i)}{v(\delta_i)}, \hspace{0.5cm}i=1,\cdots,n
$$
satisfy:
\begin{equation}
\max_{1\leq i\leq n}\left|f_i-1\right|\asto 0.
\label{eq:e_i}
\end{equation}
This in particular will allow us to state that $\hat{C}_N$ can be approximated by $\hat{S}_N=\frac{1}{n}\sum_{i=1}^n v(\delta_i) y_iy_i^*$. The importance of this finding lies in the fact that unlike $\hat{C}_N$, $\hat{S}_N$ follows a classical random matrix model, thereby opening up possibilities of exploiting an important load of available results.  
%With this new rewriting at hand, we will focus in the proof of Theorem \ref{th:asymptotic} on an in-depth study of the asymptotic behaviour of $q_1,\cdots,q_n$. 
%While the existence and uniqueness of $q_1,\cdots,q_n$ have been established in the proof of Theorem \ref{th:uniqueness}, 
%While Theorem \ref{th:uniqueness} guarantees the existence and uniqueness of $q_1,\cdots,q_n$, it does not help 
%The analysis in \cite{couillet-pascal-2013} involves solving  one fixed point equation which is independent of 
%A large effort must be devoted in order to understand the behaviour of the deterministic quantities $\delta_i$.  
%Studying the asymptotic behaviour of high-rank data models is however more challenging. 
Prior to proceeding into the proof of the convergence stated in \eqref{eq:e_i}, we first need  to introduce the following key lemmas that allow to identify the intervals within which lie almost surely quantities $q_1,\cdots,q_n$ and $\delta_1,\cdots,\delta_n$. We start by handling terms $\delta_1,\cdots,\delta_n$. We have in particular the following Lemma: 
\begin{lemma}
\label{lemma:control_boundedness}
Let:
$$\begin{array}{l}
	h_j:\mathbb{R}_+^n\to \mathbb{R}_{+} \\
	 (x_1,\cdots,x_n):\mapsto \left\{\begin{array}{ll} 
	\frac{1}{N}\tr \left(B_N+\tau_j I\right)\left(\frac{1}{n}\displaystyle{\sum_{i=1}^n} \frac{\psi(x_i)(B_N+\tau_i I)}{x_i(1+c_N \psi(x_i))}\right)^{-1} & \textnormal{if} \  \exists  \  x_i \ \ \neq 0 \\
	\frac{1}{N}\tr \left(B_N+\tau_j I\right) \left(\frac{1}{n}\displaystyle{\sum_{i=1}^n} v(0)\left(B_N+\tau_i I\right)\right)^{-1} & \textnormal{otherwise.}
	\end{array}
	 \right.\end{array} $$
Then, for all large $n$, there exists a unique vector $(\delta_1,\cdots,\delta_n)\in\mathbb{R}_{+}^n$ such that:
\begin{equation}
h_j(\delta_1,\cdots,\delta_n)=\delta_j, \hspace{0.1cm} \forall \hspace{0.1cm} 1\leq j\leq n.
\label{eq:delta_j}
\end{equation}
Besides, vector $(\delta_1,\cdots,\delta_n)$ is given by:
$$
(\delta_1,\cdots,\delta_n)=\lim_{t\to+\infty} (\delta_1^t,\cdots,\delta_n^t),
$$
with $(\delta_1^0,\cdots,\delta_n^{0})$ chosen arbitrarily in $\mathbb{R}_{+}^n$ and:
$$
\delta_j^{t+1}=h_j(\delta_1^t,\cdots,\delta_n^t).
$$
Moreover, there exists $\delta_N^{+}$ and $\delta_N^{-}$ with $\limsup \delta_N^{+}<\infty$ and $\lim\inf \delta_N^{-}>0$ and $\eta_N^{+},\eta_N^{-}$ such that, almost surely for $n$ large enough, we have:
$$
\delta_N^{-}(\tau_j+\eta_N^{-})\leq \delta_j\leq \delta_N^{+}(\tau_j+\eta_N^{+}), \hspace{0.1cm} 1\leq j\leq n.
$$
Moreover, $\eta_N^{+}$ and $\eta_N^{-}$ satisfy:
$$
\eta_N^{+}=O\left(\frac{1}{N}\tr B_N\right) \hspace{0.1cm} \textnormal{and} \hspace{0.1cm} \eta_N^{-}=O\left(\frac{1}{N}\tr B_N\right).
$$
\end{lemma}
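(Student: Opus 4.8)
<br>

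The plan is to prove Lemma~\ref{lemma:control_boundedness} by first recognizing that the map $h=(h_1,\dots,h_n)$ is a standard interference function on $\mathbb{R}_+^n$, exactly as in the proof of Theorem~\ref{th:uniqueness}. Indeed, the positivity is immediate since $B_N+\tau_j I$ is positive definite and the matrix being inverted is positive definite for all large $n$ a.s.\ (using that $\psi(x_i)/x_i = v(x_i)$ stays bounded away from zero on bounded sets and the $\tau_i\geq m$ fraction is nonvanishing). Monotonicity follows because $x\mapsto \psi(x)/(x(1+c_N\psi(x)))=v(x)/(1+c_N\psi(x))$ — wait, more carefully, writing the summand as $\frac{\psi(x_i)}{x_i}\cdot\frac{1}{1+c_N\psi(x_i)}=\frac{v(x_i)}{1+c_N\psi(x_i)}=\frac{\phi\circ g_N^{-1}(x_i)}{\psi_N(x_i)}$ or better $v(x_i)\big/(1+c_N\psi_N(x_i))$, which is non-increasing in $x_i$ (as $v$ is non-increasing and $\psi_N$ increasing), so the inverted matrix is non-decreasing in the Loewner order in each $x_i$, hence $h_j$ is non-decreasing. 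Scalability follows from the strict monotonicity: for $\alpha>1$, $\frac{v(\alpha x_i)}{1+c_N\psi_N(\alpha x_i)} < \frac{v(x_i)}{1+c_N\psi_N(x_i)}$ whenever $x_i>0$ (strict because $\psi_N$ is strictly increasing), which after factoring gives $h_j(\alpha x)<\alpha h_j(x)$; the degenerate all-zero case is handled by the second branch of the definition. Then Yates' theorem guarantees a unique fixed point $(\delta_1,\dots,\delta_n)$ provided a feasibility point exists, and that the fixed-point iteration converges to it from any initialization.

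Next I would establish the feasibility condition and simultaneously the two-sided bounds $\delta_N^-(\tau_j+\eta_N^-)\leq\delta_j\leq\delta_N^+(\tau_j+\eta_N^+)$. For the upper bound, I would mimic the construction in the proof of Theorem~\ref{th:uniqueness}: choose a scalar $\delta_N^+$ and let $\eta_N^+$ be the solution produced by Proposition~\ref{prop:inequality} applied to the function $f_N(t)=\psi_N(\delta_N^+ t)\big/\int\psi_N(\delta_N^+ s)\nu(ds)$ (one checks this $f_N$ satisfies all the hypotheses of the Proposition — bounded, at most linear via $\psi_N(x)\leq v(0)x$, integral normalized to $1$, and bounded below on $[m,\infty)$ via $\psi_N(x)\geq v(m)x$ there). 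Plugging the trial vector $x_j=\delta_N^+(\tau_j+\eta_N^+)$ into $h_j$ and invoking Lemma~\ref{lemma:quadratic_alpha}(ii) together with the $c_N+|\epsilon_{n,j}|$ bound from Proposition~\ref{prop:inequality} and the $e_k$ control from Lemma~\ref{lemma:quadratic_alpha}(iii) shows $h_j(x)\leq x_j$ for $n$ large when $\delta_N^+$ is taken large enough (in particular so that $c_N\|\bar\psi\|_\infty<1$); this supplies the feasibility point and, by standard interference-function monotonicity of the fixed point, the upper bound on $\delta_j$. For the lower bound one argues symmetrically with $\delta_N^-$ small: the difficulty is that one cannot simply "shrink" since $h$ is only super-additively scalable, so instead I would lower-bound the inverted matrix by replacing $1+c_N\psi_N(\delta_j)$ with $1+c_N\psi_\infty$ and $v(\delta_j)$ with its value at the (already established) upper bound, reducing to a deterministic Proposition~\ref{prop:inequality}-type inequality that yields $\delta_j\geq \delta_N^-(\tau_j+\eta_N^-)$.

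Finally, the estimates $\eta_N^\pm=O(\frac1N\tr B_N)$ come directly from the sandwich in Proposition~\ref{prop:inequality}, namely $r_N^-\frac1N\tr B_N\leq \eta_N\leq\frac1N\tr B_N$ with $\liminf r_N^->0$, applied to each of the two auxiliary functions $f_N$ built from $\psi_N(\delta_N^\pm\cdot)$; the same Proposition gives $\limsup\delta_N^+<\infty$ (the feasibility computation forces $\delta_N^+$ to be chosen bounded, exactly as the sequence $q_N^+$ was shown bounded in the proof of Theorem~\ref{th:uniqueness}) and $\liminf\delta_N^->0$ (since the lower-bound construction only needs $\delta_N^-$ bounded away from zero by a fixed constant). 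The convergence of the iteration $\delta_j^{t+1}=h_j(\delta^t)$ to the unique fixed point is then automatic from Yates' framework once feasibility is in hand.

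\textbf{Main obstacle.} I expect the lower bound $\delta_j\geq\delta_N^-(\tau_j+\eta_N^-)$ to be the crux. Unlike the upper bound, which fits the same scalability-plus-feasibility template as Theorem~\ref{th:uniqueness}, the lower bound must be extracted without the luxury of a "sub-feasibility" principle, so it requires carefully replacing the $e_i$-corrections and the $1+c_N\psi_N$ denominators by their worst-case (largest) values to arrive at a clean deterministic inequality, and then checking that the resulting $\eta_N^-$ produced by Proposition~\ref{prop:inequality} is genuinely of order $\frac1N\tr B_N$ rather than degenerating. Keeping the $\max_j$ uniformity in $\epsilon_{n,j}$ throughout — so that a single almost-sure event works for all $j$ — is the secondary technical point that must be tracked with care.
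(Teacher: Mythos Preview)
Your overall architecture (standard interference function, Yates feasibility, Proposition~\ref{prop:inequality} for the $\eta_N^\pm$ estimates) matches the paper's. But two points need correction.

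\textbf{Lemma~\ref{lemma:quadratic_alpha} is the wrong tool.} The $h_j$ in this lemma are \emph{deterministic} traces $\frac{1}{N}\tr(B_N+\tau_jI)(\cdot)^{-1}$, not the random quadratic forms $\frac{1}{N}y_j^*(\cdot)^{-1}y_j$ of Theorem~\ref{th:uniqueness}. The $e_k$-corrections of Lemma~\ref{lemma:quadratic_alpha}(ii)--(iii) simply do not appear here; Proposition~\ref{prop:inequality} alone suffices. Concretely, the $1/(1+c_N\psi_N)$ factor is already \emph{inside} $h_j$, so the correct auxiliary function for the upper bound is $f^+(t)=\dfrac{\psi_N(\delta_N^+t)}{1+c_N\psi_N(\delta_N^+t)}$ (normalized), not the bare $\psi_N(\delta_N^+t)$ you propose. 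One then chooses $\delta_N^+$ so that $\int f^+\nu>1$ (possible since the limit as $\delta_N^+\to\infty$ is $(1-\nu\{0\})\phi_\infty>1$), defines $\eta_N^+$ via Proposition~\ref{prop:inequality} for this $f^+$, and proves $\delta_j^t\leq\delta_N^+(\tau_j+\eta_N^+)$ by induction on $t$ using only \eqref{eq:fundamental}.

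\textbf{Your lower-bound argument has the wrong monotonicity direction.} You propose to replace $v(\delta_i)$ by $v(\delta_N^+(\tau_i+\eta_N^+))$. Since $v$ is non-increasing and $\delta_i\leq\delta_N^+(\tau_i+\eta_N^+)$, this \emph{lower}-bounds the summand, hence \emph{upper}-bounds $h_j$ --- useless for a lower bound on $\delta_j=h_j(\delta)$. To upper-bound the summand one needs a \emph{lower} bound on $\delta_i$, which is precisely what you are trying to prove; a direct worst-case replacement is circular. The paper breaks the circularity by induction on the iterates (assume $\delta_i^t\geq\delta_N^-(\tau_i+\eta_N^-)$, then bound $\frac{\psi(\delta_i^t)}{\delta_i^t(1+c_N\psi(\delta_i^t))}\leq\frac{\psi(\delta_i^t)}{\delta_i^t}\leq\frac{\psi(\delta_N^-(\tau_i+\eta_N^-))}{\delta_N^-(\tau_i+\eta_N^-)}$) together with a \emph{splitting trick}: pick $M_\theta$ with $\nu(M_\theta,\infty)<\theta$ small and $M_\theta\geq\|B_N\|$, and upper-bound $\psi(\delta_N^-(\tau_i+\eta_N^-))$ by $\psi_\infty$ when $\tau_i\geq M_\theta$ and by $\psi(\delta_N^-(\tau_i+M_\theta))$ when $\tau_i<M_\theta$ (using $\eta_N^-\leq\|B_N\|\leq M_\theta$). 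This yields a piecewise $f^-$ that satisfies the hypotheses of Proposition~\ref{prop:inequality} (in particular $f^-\geq\psi_\infty>0$ on the tail, which your $v$-based candidate would violate since $v(x)\to 0$), and $\delta_N^-$ is chosen small so that $\int f^-\nu<1$, forcing $h_j(x)/x_j>1$ at the trial point. This splitting is the missing idea in your plan.
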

\begin{proof}
Similar to the proof of Theorem \ref{th:uniqueness}, we can show along the same lines that $h=(h_1,\cdots,h_n)$ is a standard interference function.   
%As in the proof of Theorem \ref{th:uniqueness}, we show that $h=(h_1,\cdots,h_n)$ is a standard interference function. Positivity, monotonicity and scalability of $h$ follows using the same arguments as before. As a matter of fact, positivity is obvious since $h$ involves the trace of positive non-null matrices. As for monotonicity it follows from the fact that $x\mapsto \frac{v(x)}{1+c_N\psi(x)}$ is non-increasing. Finally to prove scalability, note that $x\mapsto \frac{\psi(x)}{1+c_N \psi(x)}$ is increasing and as such $\frac{\psi(\alpha x)}{1+c_N \psi(\alpha x)} \geq \frac{\psi(x)}{1+c_N \psi(x)}$ for $\alpha >1$. Hence:
%	\begin{align*}
%	h_j(\alpha \delta_1,\cdots,\alpha \delta_n) &=\frac{1}{N}\tr (B+\tau_j I) \left(\frac{1}{n}\sum_{i=1}^n \frac{\psi(\alpha \delta_i)(B+\tau_i I)}{\alpha \delta_i(1+c_N \psi(\alpha \delta_i))}\right)^{-1} \\
%	& \leq \alpha \frac{1}{N}\tr (B+\tau_j I)\left(\frac{1}{n}\sum_{i=1}^n \frac{\psi( \delta_i)(B+\tau_i I)}{\delta_i(1+c_N \psi(\alpha \delta_i))}\right)^{-1} =\alpha h_j(\delta_1,\cdots,\delta_n)
%	\end{align*}
	It remains to prove the existence of $(\delta_1,\cdots,\delta_n)$ such that $h_j(\delta_1,\cdots,\delta_n) < \delta_j$. 
	To this end, take $\delta_j=\xi(\tau_j+\eta_N)$, where $\eta_N$ is the unique solution to the following equation:
	$$
	\int \frac{F^{B_N}(dy)}{\int \frac{y+t}{t+x}\nu(dt)}=1.
	$$
	Such $\eta_N$ exists according to proposition \ref{prop:inequality}.
	Then:
	\begin{equation}
	\lim_{\xi\to+\infty} \frac{h_j(\delta_1,\cdots,\delta_n)}{\delta_j} = \frac{1+c_N\psi_{\infty}}{\psi_{\infty}}\frac{1}{N}\tr \frac{(B+\tau_j I)}{\tau_j+\eta_N}\left(\frac{1}{n}\sum_{i=1}^n \frac{B+\tau_i I}{\tau_i+\eta_N}\right)^{-1}.
	\label{eq:limit}
	\end{equation}
	Again, the limit in the above equation \eqref{eq:limit} can be controlled using proposition \ref{prop:inequality}, thereby yielding:
	$$
	\lim_{\xi\to+\infty} \frac{h_j(\delta_1,\cdots,\delta_n)}{\delta_j} \leq \frac{1+c_N\psi_{\infty}}{\psi_{\infty}} +\epsilon_n =\frac{1}{\phi_{\infty}} +\epsilon_n 
	$$
	where $\epsilon_n\downarrow 0$ almost surely. As $\phi_{\infty}>1$, one can conclude that there exists $\delta_1,\cdots,\delta_n$, such that for enough large $N$,
	$$
	h_j(\delta_1,\cdots,\delta_n) < \delta_j.
	$$
	We are now in position to prove the uniform boundedness of $\delta_i$. 
	For that, consider $\theta >0$ such that $\theta <\frac{\phi_{\infty}}{2(1-c_{+}\phi_{\infty})}$. Let $M_\theta$ be chosen such that $\nu(M_\theta,+\infty) <\theta$ and $M_\theta$ is greater than the limit support of $\|B_N\|$. Set
	 $\delta_N^{-}$ and $\delta_N^{+}$ so that the following conditions are fulfilled: 
	\begin{align}
	&\int_{0}^{+\infty} \frac{\psi_N(\delta_N^{+}t)}{1+c_N\psi(\delta_N^{+}t)}\nu(dt) >1,\label{eq:fplus}\\
	&\int_{0}^{M_\theta} \psi_N\left(\delta_N^{-}(t+M_\theta)\right)\nu(dt) < \frac{1}{2} \label{eq:fmoins}.
	%%\frac{1+c_N\Psi(2\delta^{-}M_\theta)}{\Psi(2\delta^{-}M_\theta)} &> 2 \\
	%%\frac{1+c_N \Psi(m\delta^{+})}{\Psi(m\delta^{+})} & < (1-\epsilon)
	\end{align}
	 Such choices are possible since 
 \begin{itemize}
\item $\displaystyle{\lim_{\delta^{+}\to+\infty}}\int_{0}^{+\infty}\frac{\psi_N(\delta^{+}t)}{1+c_N\psi_N(\delta^{+}t)}\nu(dt)=\frac{(1-\nu\left\{0\right\})\psi_{\infty}}{1+c_N\psi_{\infty}}=(1-\nu\left\{0\right\})\phi_{\infty}>1 $,
\item $\displaystyle{\lim_{\delta^{-}\to 0^{+}}} \int_{0}^{M_\theta} {\psi\left(\delta^{-}(t+M_\theta)\right)}\nu(dt) =0$,
 %\item $\lim_{\delta^{-}\to 0^{+}}\frac{1+c_N\Psi(\delta^{-}M_\theta)}{\Psi(\delta^{-}M_\theta)} =+\infty$ 
 %\item $1-\epsilon > \frac{1}{\Phi_{\infty}}$ where $x\mapsto \frac{1+c_N\psi(x)}{\psi(x)}$ is decreasing and tends to $\frac{1}{\Phi_{\infty}}$ as $x\to+\infty$.
 \end{itemize}
Moreover, we can check that one can choose $\delta_N^{-}$ and $\delta_N^{+}$ such that $\lim\inf \delta_N^{-} >0$ and $\lim\sup\delta_N^{+} < +\infty$. As a matter of fact, building on the same reasoning used to show that $\lim\sup \alpha_N<+\infty$ in the proof of Theorem \ref{th:uniqueness}, we take $\delta_N^{+}$ and $\delta_N^{-}$  the positive reals that verify:
\begin{align*}
&\int_{0}^{+\infty} \frac{\psi(\delta_N^{+}t)}{1+c_N\psi(\delta_N^{+}t)}\nu(dt)=1+\kappa\\
 &\int_{0}^{M_\theta} {\psi\left(\delta_N^{-}(t+M_\theta)\right)}\nu(dt) =\kappa,
\end{align*}
where $0\leq \kappa<\frac{1}{2}$  satisfies $1+\kappa < (1-\nu\left\{0\right\})\phi_{\infty}$. Assume that $\lim\inf \delta_N^{-} =0$. There exists a sequence $(n)$ such that $\lim_{n\to+\infty} \delta_N^{-}=0$. Since the sequence of functions $\psi_N$ converge uniformly, one can extract a subsequence $(p)$ from $(n)$ such that $c(p)\to c^{*}$ and $\psi_{p}$ converge uniformly to $\psi^{*}$. Therefore, the sequence of functions $f_p:\alpha \mapsto \int_0^{M_\theta}  {\psi\left(\alpha(t+M_\theta)\right)}\nu(dt)$ converge uniformly to $f^*:\alpha \mapsto \int_0^{M_\theta} {\psi\left(\alpha(t+M_\theta)\right)}\nu(dt)$.
Hence:
$$
\lim_{n\to+\infty}f_p(\delta_p^{-}) \to \lim_{x\to 0} f^{*}(x) =0.
$$
which is in contradiction with the fact that $f_p(\delta_p^{-})=\kappa$. The same method can be used to prove that $\lim\sup \delta_N^{+} <\infty$.
Consider now the function $f^{+}:t\mapsto \frac{\psi(\delta_N^{+}t)}{\left(1+c_N\psi(\delta_N^{+}t)\right)}$ in the domain  $t\in\left[0,\infty\right)$. Define $\eta_N^{+}$ the unique solution to the following equation:
$$
1=\int \frac{F^{B_N}(dy)\int_0^{+\infty}f^{+}(x)\nu(dx)}{\int_{0}^{+\infty}\frac{y+t}{t+\eta_N^{+}}f^{+}(t)\nu(dt)}.
$$
Similarly, define on $\mathbb{R}^{+}$ the function $f^{-}:t\mapsto \psi_{\infty}1_{\left\{t\geq M_\theta\right\}} +{\psi(2\delta^{-}(t+M_\theta))} 1_{\left\{t<M_\theta\right\}}$. Let $\eta_N^{-}$ be the unique solution to the following equation:
\begin{equation}
1=\int\frac{F^{B_N}(dy)\int_0^{+\infty}f^{-}(x)\nu(dx)}{\int_0^{+\infty}\frac{y+t}{t+\eta_N^{-}}f^{-}(t)\nu(dt)}.
\label{eq:eta_N_moins}
\end{equation}
Note that from proposition \ref{prop:inequality}, $\eta_N^{+}$ and $\eta_N^{-}$ are well-defined and satisfy:
$$
\eta_N^{+}=\mathcal{O}\left(\frac{1}{N}\tr B_N\right) \hspace{0.1cm}, \eta_N^{-}=\mathcal{O}\left(\frac{1}{N}\tr B_N\right).
$$

Set for all $i$, $\delta_i^{0}=\delta_N^{+}(\tau_i+\eta_N^+)$. Define recursively the sequences:
$$
\delta_{j}^{t+1}=\frac{1}{N}\tr \left(B_N+\tau_j I_N\right)\left(\frac{1}{n}\sum_{i=1}^n \frac{\psi(\delta_i^{t})(B_N+\tau_i I_N)}{\delta_i^{t}(1+c_N \psi(\delta_i^{t}))}\right)^{-1}.
$$
From the previous analysis, $\delta_i=\lim_{t\to+\infty} \delta_i^{t}$.  To prove the uniform boundedness of $\delta_i$, one can proceed by induction on $t$. For $t=0$, the result is true. Let $t\in\mathbb{N}^{*}$ and assume that $\delta_j^{k}\leq \delta_N^{+}(\tau_j+\eta_N^{+})$ holds true for any $k\leq t$ and $j=1,\cdots,n$. We propose to prove it for $k=t+1$. We have
%let $\epsilon >0$ and $m >0$ such that $(1-\epsilon)\phi_{\infty} >1$ and $\nu_n((m,\infty))> 1-\epsilon$ for all $n$ large a.s. (always possible by Assumption\ref{ass:statistical}-iii).
\begin{align*}
\frac{\delta_j^{t+1}}{\delta_N^{+}(\tau_j+\eta^{+})}&= \frac{1}{N}\tr \frac{B+\tau_j I}{\delta_N^{+}(\tau_j+\eta^{+})}\left(\frac{1}{n}\sum_{i=1}^n \frac{\psi(\delta_i^{t})(B+\tau_i I)}{\delta_i^t(1+c_N \psi(\delta_i^{t}))}\right)^{-1}. \\
%& =\frac{N-S}{N}\frac{\tau_j}{\frac{1}{n}\sum_{i=1}^n \frac{\Psi(\delta_i^t)\tau_i}{\delta_i^t(1+c_N \Psi(\delta_i^t))}} +\frac{1}{N}\sum_{k=1}^{S} \frac{\lambda_k+\tau_j}{(\delta^{+}\tau_j+\eta)\frac{1}{n}\sum_{i=1}^n \frac{\Psi(\delta_i^t)(\lambda_k+\tau_i)}{\delta_i^t(1+c_N\Psi(\delta_i^t))}}
\end{align*}
From the induction assumption along with the fact that $x\mapsto \frac{\psi(x)}{x(1+c_N\psi(x))}$ is non-increasing, we obtain:
$$
\frac{\psi(\delta_i^t)}{\delta_i^t\left(1+c_N\psi(\delta_i^t)\right)}\geq \frac{\psi(\delta_N^{+}(\tau_i+\eta_N^{+}))}{\delta_N^{+}(\tau_i+\eta_N^{+})(1+c_N\psi(\delta_N^{+}(\tau_i+\eta_N^{+})))}.
$$
Hence,
\begin{align*}
\frac{\delta_j^{t+1}}{\delta_N^{+}(\tau_j+\eta_N^{+})} &\leq \frac{1}{N}\tr \frac{B_N+\tau_j{I}_N}{\tau_j+\eta_N^{+}}\left(\frac{1}{n}\sum_{i=1}^n \frac{\psi(\delta_N^{+}\tau_i)}{\left(1+c_N\psi(\delta_N^{+}\tau_i)\right)}\frac{B_N+\tau_i{I}_N}{\tau_i+\eta_N^{+}}\right)^{-1}\\
&=  \frac{1}{N}\tr \frac{B_N+\tau_j{I}_N}{\tau_j+\eta_N^{+}}\left(\frac{1}{n}\sum_{i=1}^n f^{+}(\tau_i)\frac{B_N+\tau_i{I}_N}{\tau_i+\eta_N^{+}}\right)^{-1}.
\end{align*}
%Consider the function $f:t\mapsto \frac{\frac{\psi(\delta^{+}t)}{1+c_N\psi(\delta^{+}t)}}{\int_0^{\infty} \frac{\psi(\delta^{+}x)}{1+c_N\psi(\delta^{+}x)}\nu(dx)}$. 
From Remark \ref{remark:psi} along with Lemma \ref{lemma:g_N}, function $t\mapsto \frac{f^{+}(t)}{\int f^{+}(x)\nu(dx)}$ satisfies the assumptions of proposition \ref{prop:inequality}. We have therefore, 
$$
\frac{\int_{0}^{+\infty}f^{+}(x)\nu(dx)}{N}\tr \frac{B_N+\tau_jI_N}{\tau_j+\eta_N^{+}}\left(\frac{1}{n}\sum_{i=1}^nf^{+}(\tau_i) \frac{B+\tau_i I_N}{\tau_i+\eta_N^{+}}\right)^{-1} \leq 1+ \epsilon_{n,j}, \ \ \forall 1\leq j\leq n.
$$
where $\max_{j}\left|\epsilon_{n,j}\right|$ converges to zero almost surely. 
Equation \eqref{eq:fplus} guarantees that  $\int_0^{+\infty}f^{+}(x)\nu(dx) >1$, thereby showing, that almost surely for $n$ large enough:
$$
\frac{\delta_j^{t+1}}{\delta_N^{+}(\tau_j+\eta_N^{+})} \leq 1. 
$$ 

We will now prove the lower-bound inequality. Similarly, consider for all $i$, $\delta_i^0=2\delta_N^{-}(\tau_i+\eta_N^{-})$. The sequence:
\begin{align*}
\delta_j^{t+1}&=\frac{1}{N}\tr (B+\tau_j I)\left(\frac{1}{n}\sum_{i=1}^n \frac{\psi(\delta_i^t)(B+\tau_i I)}{\delta_i^t(1+c_N\psi(\delta_i^t))}\right)^{-1}
\end{align*}
converges to $\delta_i^*$ as $t\to+\infty$. In the same way as for the upper-bound inequality, we will show the result by induction on $t$. For $t=0$, the result is true. Let $t\in\mathbb{N}^+$ and assume that $\delta_j^k \geq \delta_N^{-}(\tau_j+\eta_N^{-})$ holds true for any $k\leq t$ and $j=1,\cdots n$. We propose to prove the result for $k=t+1$. 
Similar to above, using the fact that $x\mapsto \frac{\psi(x)}{x}$ is non-increasing, we have:
\begin{align*}
\frac{\delta_{j}^{t+1}}{\delta_N^{-}\left(\tau_j+\eta_N^{-}\right)} &= \frac{1}{N}\tr \frac{B_N+\tau_j I_N}{\delta_N^{-}\left(\tau_j+\eta_N^{-}\right)} \left(\frac{1}{n}\sum_{i=1}^n \frac{\psi(\delta_i^t)(B_N+\tau_i I_N)}{\delta_i^t(1+c_N\psi(\delta_i^t))}\right)^{-1}  \\
&\geq \frac{1}{N}\tr \frac{B_N+\tau_j I_N}{\left(\tau_j+\eta_N^{-}\right)} \left(\sum_{i=1}^n\frac{\psi(\delta_N^{-}(\tau_i+\eta_N^{-}))(B_N+\tau_i I_N)}{(\tau_i+\eta_N^{-})}\right)^{-1}\\
&=\frac{1}{N}\tr \frac{B_N+\tau_j I_N}{\left(\tau_j+\eta_N^{-}\right)} \left(\sum_{\tau_i\geq M_\theta}\frac{\psi(\delta_N^{-}(\tau_i+\eta_N^{-}))(B_N+\tau_i I_N)}{(\tau_i+\eta_N^{-})}+\sum_{\tau_i\leq M_\theta}\frac{\psi(\delta_N^{-}(\tau_i+\eta_N^{-}))(B_N+\tau_i I_N)}{(\tau_i+\eta_N^{-})}\right)^{-1} \\
&\stackrel{(a)}{\geq}  \frac{1}{N}\tr \frac{B_N+\tau_j I_N}{\left(\tau_j+\eta_N^{-}\right)} \left(\sum_{\tau_i\geq M_\theta} \frac{\psi_{\infty}(B_N+\tau_i I_N)}{(\tau_i+\eta_N^{-})}+\sum_{\tau_i\leq M_\theta}\frac{\psi(\delta_N^{-}(M_\theta+\tau_i))(B_N+\tau_i I_N)}{(\tau_i+\eta_N^{-})}\right)^{-1}\\
%&\geq \frac{1}{N}\tr \frac{B+\tau_j I}{\left(\tau_j+\eta_N^{-}\right)} \left(\sum_{\tau_i\geq M_\theta}^n \frac{\phi_{\infty}(B+\tau_i I)}{(\tau_i+\eta_N^{-})}+\sum_{\tau_i\leq M_\theta}^n\frac{\psi(2\delta^{-}M_\theta)(B+\tau_i I)}{(\tau_i+\eta_N^{-})(1+c_N\psi(2\delta^{-}M_\theta)}\right)^{-1}\\
&=\frac{1}{N}\tr  \frac{B_N+\tau_j I_N}{\left(\tau_j+\eta_N^{-}\right)} \left(\sum_{i=1}^n f^{-}(\tau_i)\frac{B_N+\tau_i I_N}{\tau_i+\eta_N^{-}}\right),
%&=\frac{1}{N}\tr \frac{B+\tau_j I}{\delta^{-}\left(\tau_j+\eta^{-}\right)} \left(\frac{1}{n}\sum_{\tau_i \leq M_\theta}\frac{\Psi(\delta_i^t)(B+\tau_i I)}{\delta_i^t(1+c_N\Psi(\delta_i^t))} +\sum_{\tau_i\geq M_\theta} \frac{\Psi(\delta_i^t)(B+\tau_i I)} {\delta_i^t(1+c_N\Psi(\delta_i^t))} \right)^{-1} \\
%&\geq \frac{1}{N}\tr \frac{B+\tau_j I}{\delta^{-}\left(\tau_j+\eta^{-}\right)}\left(\right)^{-1}\left(\frac{1}{n}\sum_{\tau_i\leq M_\theta} \frac{\Psi(\delta)}\right)
\end{align*}
where $(a)$ follows from the fact that $\eta_N^{-}\leq \|B_N\|\leq M_\theta$.
Again, from remark \ref{remark:psi} and Lemma \ref{lemma:g_N}, function $t\mapsto \frac{f^{-}(t)}{\int f^{-}(x)\nu(dx)}$ satisfies the assumptions of proposition \ref{prop:inequality}. We have therefore,
$$
\frac{\int_0^{+\infty} f^{-}(x)\nu(dx)}{N}\tr \frac{B_N+\tau_jI_N}{\tau_j+\eta_N^{-}}\left(\frac{1}{n}\sum_{i=1}^n f^{-}(\tau_i)\frac{B_N+\tau_i I_N}{\tau_i+\eta_N^{-}}\right)^{-1} \geq 1-|\epsilon_{n,j}|,
$$
where $\max_j\left|\epsilon_{n,j}\right|$ converges to zero almost surely. On the other hand,
  $$\int_0^{+\infty} f^{-}(x)\nu(dx) \leq \theta \frac{\phi_{\infty}}{1-c_{+}\phi_{\infty}}+\int_0^{M_\theta}{\psi(\delta_N^{-}(t+M_\theta))}<1,$$
and hence, almost surely, for enough large $n$, 
$$
\frac{\delta_{j}^{t+1}}{\delta^{-}(\tau_j+\eta_N^{-})} \geq \frac{1}{N}\tr  \frac{B+\tau_j I}{\left(\tau_j+\eta_N^{-}\right)} \left(\sum_{i=1}^n f^{-}(\tau_i)\frac{B_N+\tau_i I_N}{\tau_i+\eta_N^{-}}\right) >1.
$$
\end{proof}
The following  refinement of Lemma \ref{lemma:control_boundedness} will be required in the proof of  the asymptotic convergence of the robust-scatter estimator.
\begin{lemma}
\label{lemma:extension_boundedness}
Let $(\kappa,M_\kappa)$ be couples indexed by $\kappa$ with $0 < \kappa <1$ and $M_\kappa >0$ such that $\nu(M_\kappa,+\infty) < \kappa$. Then, for sufficiently small $\kappa$ the following system of equations:
\begin{equation}
\delta_j=\frac{1}{N}\tr \left(B_N+\tau_j I_N\right)\left(\frac{1}{n}\sum_{\substack{i=1\\ \tau_i \leq M_\kappa}}^n \frac{\psi(\delta_i)(B_N+\tau_i I_N)}{\delta_i(1+c_N \psi(\delta_i))}\right)^{-1}, \ \ \forall 1\leq j\leq n
\label{eq:delta_kappa}
\end{equation}
has a unique vector solution $\left(\delta_1^{\kappa},\cdots,\delta_n^{\kappa}\right)$ for all large $n$ a.s, and there exists $\delta_N^{-,\kappa_0},\delta_N^{+,\kappa_0}$ with $\lim\sup \delta_N^{+,\kappa_0} <\infty$ and $\lim\inf \delta_N^{-,\kappa_0} >0$ and
$ \eta_N^{-,\kappa_0}$, $\eta_N^{+,\kappa_0}$ such that for all $\kappa <\kappa_0$ small:
$$
\delta_N^{-,\kappa}(\tau_i+\eta_N^{-,\kappa})\leq \delta_{i}^{\kappa} \leq \delta_N^{+,\kappa}(\tau_i+\eta_N^{+,\kappa}), \ \ i= 1,\cdots,n
$$
for all large $n$ a.s. Moreover, $\eta_N^{+,\kappa_0}$ and $\eta_N^{-,\kappa_0}$ satisfies:
$$
\eta_N^{+,\kappa_0}=O\left(\frac{1}{N}\tr B_N\right) \hspace{0.2cm} \textnormal{and} \hspace{0.2cm} \eta_N^{-,\kappa_0}=O\left(\frac{1}{N}\tr B_N\right)
$$
\end{lemma}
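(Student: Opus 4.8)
The strategy is to transplant the argument of Lemma~\ref{lemma:control_boundedness} almost line for line; the only structural change is that the sum now runs over the truncated index set $\{i:\tau_i\le M_\kappa\}$, whose $\nu$-mass $\nu([0,M_\kappa])$ is close to but strictly below $1$, and it is precisely this loss of mass that forces the restriction to $\kappa$ small. First I would fix $\kappa_0\in(0,1-\phi_\infty^{-1})$, so that for every $\kappa<\kappa_0$ one has $\nu([0,M_\kappa])>1-\kappa>1-\kappa_0>\phi_\infty^{-1}$ (choosing each $M_\kappa$ to be a continuity point of $\nu$). For such $\kappa$ and $n$ large a.s.\ the restricted matrices appearing in \eqref{eq:delta_kappa} are positive definite, so the map $h^\kappa=(h_1^\kappa,\dots,h_n^\kappa)$ defined by the right-hand side of \eqref{eq:delta_kappa} is well defined, and exactly as in Theorem~\ref{th:uniqueness} and Lemma~\ref{lemma:control_boundedness} it is a standard interference function in the sense of \cite{yates}: positivity is clear, monotonicity follows from $x\mapsto\psi(x)/(x(1+c_N\psi(x)))$ being non-increasing, and scalability from $\psi$ and $y\mapsto y/(1+c_Ny)$ being increasing. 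For the feasibility condition I would take $\delta_j=\xi(\tau_j+\eta_N)$, with $\eta_N$ the solution of \eqref{eq:eta} attached by Proposition~\ref{prop:inequality} to the normalised function $t\mapsto 1_{\{t\le M_\kappa\}}/\nu([0,M_\kappa])$, and let $\xi\to\infty$. Since $\psi(\xi(\tau_i+\eta_N))\to\psi_\infty$ and $(1+c_N\psi_\infty)/\psi_\infty=\phi_\infty^{-1}$, the estimate \eqref{eq:fundamental} gives
\[
\lim_{\xi\to\infty}\frac{h_j^\kappa(\delta_1,\dots,\delta_n)}{\delta_j}=\frac{1}{\phi_\infty\,\nu([0,M_\kappa])}+o(1),
\]
uniformly in $j$; because $\nu([0,M_\kappa])>\phi_\infty^{-1}$ uniformly in $\kappa<\kappa_0$ this limit is bounded away from $1$ a.s.\ for $n$ large, so a feasibility point exists and Yates' theorem yields the unique solution $(\delta_1^\kappa,\dots,\delta_n^\kappa)$ together with its description as the limit of the fixed-point iteration.

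For the two-sided control I would run the two monotone inductions of Lemma~\ref{lemma:control_boundedness} with the truncated index set. For the upper bound, pick $\kappa_1\in(0,\phi_\infty(1-\epsilon)-1)$ (nonempty since $\epsilon<1-\phi_\infty^{-1}$), choose $\delta_N^{+,\kappa}$ with $\int_0^{M_\kappa}\frac{\psi(\delta_N^{+,\kappa}t)}{1+c_N\psi(\delta_N^{+,\kappa}t)}\nu(dt)=1+\kappa_1$, set $f^{+,\kappa}(t)=\frac{\psi(\delta_N^{+,\kappa}t)}{1+c_N\psi(\delta_N^{+,\kappa}t)}1_{\{t\le M_\kappa\}}$, and let $\eta_N^{+,\kappa}$ solve \eqref{eq:eta} for $f^{+,\kappa}/\int f^{+,\kappa}(t)\nu(dt)$. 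Starting the iteration at $\delta_i^0=\delta_N^{+,\kappa}(\tau_i+\eta_N^{+,\kappa})$ and using that $x\mapsto\psi(x)/(x(1+c_N\psi(x)))$ is non-increasing while $x\mapsto\psi(x)/(1+c_N\psi(x))$ is increasing, \eqref{eq:fundamental} gives $\delta_j^{t+1}\le\delta_N^{+,\kappa}(\tau_j+\eta_N^{+,\kappa})(1+o(1))/(1+\kappa_1)\le\delta_N^{+,\kappa}(\tau_j+\eta_N^{+,\kappa})$ a.s.\ for $n$ large, uniformly in $j$; passing $t\to\infty$ gives $\delta_j^\kappa\le\delta_N^{+,\kappa}(\tau_j+\eta_N^{+,\kappa})$, and since the defining integral is non-decreasing in $M_\kappa$ one has $\delta_N^{+,\kappa}\le\delta_N^{+,\kappa_0}$ for $\kappa\le\kappa_0$, with $\limsup_N\delta_N^{+,\kappa_0}<\infty$ by the uniform-convergence/contradiction argument already used for $q_N^+$ and $\delta_N^+$. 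The lower bound is symmetric: fix a continuity point $M_\theta\ge\limsup_N\|B_N\|$ with $\nu((M_\theta,\infty))<\theta$ and $\theta<\frac{1-c_+\phi_\infty}{2\phi_\infty}$, shrink $\kappa_0$ so that $M_{\kappa_0}\ge M_\theta$, choose $\delta_N^{-,\kappa}>0$ small with $\int_0^{M_\theta}\psi(2\delta_N^{-,\kappa}(t+M_\theta))\nu(dt)<\frac12$ and $\liminf_N\delta_N^{-,\kappa}>0$, and set $f^{-,\kappa}(t)=\psi_\infty 1_{\{M_\theta\le t\le M_\kappa\}}+\psi(2\delta_N^{-,\kappa}(t+M_\theta))1_{\{t<M_\theta\}}$, so that $\int f^{-,\kappa}(t)\nu(dt)<\psi_\infty\theta+\frac12<1$; with $\eta_N^{-,\kappa}$ the associated solution of \eqref{eq:eta}, iterating from $\delta_i^0=2\delta_N^{-,\kappa}(\tau_i+\eta_N^{-,\kappa})$ and using $x\mapsto\psi(x)/x$ non-increasing together with $\eta_N^{-,\kappa}\le\|B_N\|\le M_\theta$, \eqref{eq:fundamental} gives $\delta_j^{t+1}\ge\delta_N^{-,\kappa}(\tau_j+\eta_N^{-,\kappa})$ a.s.\ for $n$ large and hence $\delta_j^\kappa\ge\delta_N^{-,\kappa}(\tau_j+\eta_N^{-,\kappa})$. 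Finally, $\eta_N^{\pm,\kappa}=O(\frac1N\tr B_N)$ is immediate from the two-sided estimate $r_N^-\frac1N\tr B_N\le\eta_N\le\frac1N\tr B_N$ of Proposition~\ref{prop:inequality} with $\liminf_N r_N^->0$.

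The one step that is not a pure transcription, and the one I expect to be the main obstacle, is that the truncated functions $f^{\pm,\kappa}$ and $1_{\{\cdot\le M_\kappa\}}$ vanish on $(M_\kappa,\infty)$ and therefore do \emph{not} satisfy the hypothesis $\liminf_N\inf_{t\in[m,\infty)}f_N(t)>0$ of Proposition~\ref{prop:inequality} as literally stated. One must reopen the proof of Proposition~\ref{prop:inequality} and observe that this condition is used only to keep $a_{m,N}=\int_m^\infty t^{-1}f_N(t)\nu(dt)$ and the empirical quantities $\frac1n\sum_{i:\tau_i\ge m}f_N(\tau_i)(\tau_i+\eta_N)^{-1}$ bounded away from zero; for the functions above this still holds because $\inf_{t\in[m,M_\kappa]}f^{\pm,\kappa}(t)>0$ and $\nu([m,M_\kappa])\ge\nu([m,M_{\kappa_0}])\ge 1-\epsilon-\kappa_0>0$. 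In other words, one records the variant of Proposition~\ref{prop:inequality} in which ``$\liminf_N\inf_{[m,\infty)}f_N>0$'' is replaced by ``$\liminf_N\inf_{[m,M]}f_N>0$ for some fixed $M>m$ with $\nu([m,M])>0$'', which the compactly supported $f^{\pm,\kappa}$ satisfy. Coupled with this is the bookkeeping needed to make every threshold ($\kappa_1$, $M_\theta$, $\delta_N^{\pm,\kappa}$, $\eta_N^{\pm,\kappa}$) chosen so that all estimates are uniform in $\kappa<\kappa_0$; here one exploits the monotonicity $M_\kappa\uparrow\infty$ as $\kappa\downarrow0$ — so supports only grow and the integrals $\int_0^{M_\kappa}(\cdot)\nu(dt)$ only increase — together with $\phi_\infty(1-\nu\{0\})>1$ (a consequence of Assumption~\ref{ass:measure}-ii), which together ensure that neither the feasibility margin nor the constants $r_N^-$ and $\int f^{\pm,\kappa}(t)\nu(dt)$ degenerate as $\kappa\to0$.
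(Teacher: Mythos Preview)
Your proposal is correct and follows essentially the same route as the paper: the paper's own proof is a two-line remark that the argument of Lemma~\ref{lemma:control_boundedness} carries over once one takes $\kappa_0<\theta$ and chooses $\delta_N^{+}$ so that $\int_0^{M_{\kappa_0}}\frac{\psi(\delta_N^{+}t)}{1+c_N\psi(\delta_N^{+}t)}\nu(dt)>1$, with $\delta_N^{-}$ unchanged --- exactly the modifications you spell out in detail. Your observation that the truncated weights $f^{\pm,\kappa}$ fail the literal hypothesis $\liminf_N\inf_{[m,\infty)}f_N>0$ of Proposition~\ref{prop:inequality}, and that one must revisit its proof to see the condition is only used through $\int_m^\infty t^{-1}f_N(t)\nu(dt)$ and its empirical counterpart (which survive under the weaker $\inf_{[m,M_\kappa]}f_N>0$ with $\nu([m,M_\kappa])>0$), is a genuine point that the paper glosses over.
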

\begin{proof}
The same proof as that of Lemma \ref{lemma:extension_boundedness} holds by taking $\kappa_0$ smaller than $\theta$ and choosing $\delta_N^{+}$ so that it satisfies:
\begin{align*}
&\int_0^{M_{\kappa_0}} \frac{\psi(\delta_N^{+}t)}{1+c_N\psi(\delta_N^{+}t)} \nu(dt) >1\\
\end{align*}
while  $\delta_N^{-}$ is set in the same way as before. 
\end{proof}
We will now provide similar results for the random quantities $q_1,\cdots,q_n$. In particular, we have the following Lemma:
\begin{lemma}
 Let $q_i\triangleq y_i^* \hat{C}_{(i)}^{-1} y_i, i=1,\cdots,n$.
There exists $q_N^{+}, q_N^{-},\alpha_N^{+},\alpha_N^{-}> 0$ with $\limsup_N q_N^{+} <+\infty$ and $\liminf_N q_N^{-} > 0$ such that, for all large $n$ a.s.,
\begin{equation}
 q_N^{-}(\tau_i+\alpha_N^{-})\leq q_i \leq q_N^{+}(\tau_i  +\alpha_N^{+}), \hspace{0.2cm} i=1,\cdots,n. 
\label{eq:lower_upper_bound_q}
\end{equation}
\end{lemma}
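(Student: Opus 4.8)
The plan is to build on the reformulation established in the proof of Theorem \ref{th:uniqueness}: the vector $(q_1,\dots,q_n)$ is the unique positive fixed point of the standard interference function $h=(h_1,\dots,h_n)$ with $h_j(x_1,\dots,x_n)=\frac1N y_j^*\big(\frac1n\sum_{i\neq j}v(x_i)y_iy_i^*\big)^{-1}y_j$, and (by the theory of \cite{yates}) the iteration $x\mapsto h(x)$ converges to it from any initialization in $\mathbb{R}_+^n$. I would exploit two immediate consequences: if $\bar q\in\mathbb{R}_+^n$ satisfies $h_j(\bar q)\leq\bar q_j$ for all $j$, then, starting the iteration at $\bar q$, monotonicity of $h$ makes the iterates non-increasing, so their limit $(q_1,\dots,q_n)$ obeys $q_j\leq\bar q_j$; symmetrically $h_j(\bar q)\geq\bar q_j$ for all $j$ forces $q_j\geq\bar q_j$. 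It is therefore enough to exhibit, a.s.\ for $n$ large, an upper and a lower point of the announced product form. For the upper point nothing new is needed: the feasibility step of the proof of Theorem \ref{th:uniqueness} already produces $\bar q^{+}_j=q_N^{+}(\tau_j+\eta_N)$ --- with $q_N^{+}$ the sequence defined there (shown to satisfy $\limsup_N q_N^{+}<\infty$) and $\eta_N$ the corresponding solution of the equation of Proposition \ref{prop:inequality} --- and establishes $h_j(\bar q^{+})<\bar q^{+}_j$ for every $j$. Hence $q_j\leq q_N^{+}(\tau_j+\alpha_N^{+})$ with $\alpha_N^{+}:=\eta_N$.

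For the lower point I would transpose the lower-bound argument of Lemma \ref{lemma:control_boundedness} from the deterministic system $(\delta_i)$ to the random system $(q_i)$. Fix $\theta>0$ small and $M_\theta$ with $\nu(M_\theta,\infty)<\theta$ and $M_\theta\geq\limsup_N\|B_N\|$, and choose $q_N^{-}$ with $\liminf_N q_N^{-}>0$ small enough that $\psi_\infty\theta+\int_0^{M_\theta}\psi_N\big(q_N^{-}(t+M_\theta)\big)\nu(dt)<1$ (possible because each $\psi_N$ is bounded by $\psi_\infty$, the family is equicontinuous with $\psi_N(0)=0$ by Remark \ref{remark:psi}, and $\liminf_N q_N^{-}>0$ is enforced by the uniform-convergence-and-contradiction argument of Lemma \ref{lemma:control_boundedness}). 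Set $f^{-}(t)=\psi_\infty\mathbf{1}_{\{t\geq M_\theta\}}+\psi_N\big(q_N^{-}(t+M_\theta)\big)\mathbf{1}_{\{t<M_\theta\}}$ and let $\eta_N^{-}$ be the solution, furnished by Proposition \ref{prop:inequality} applied to $f^{-}/\int f^{-}d\nu$ (which satisfies its hypotheses by Remark \ref{remark:psi} and Lemma \ref{lemma:g_N}), of the associated $\eta$-equation; Proposition \ref{prop:inequality} also gives $\eta_N^{-}\leq\frac1N\tr B_N\leq\|B_N\|\leq M_\theta$ for $n$ large, whence $\psi_N\big(q_N^{-}(\tau_i+\eta_N^{-})\big)\leq f^{-}(\tau_i)$ for all $i$. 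Taking $\bar q^{-}_j=q_N^{-}(\tau_j+\eta_N^{-})$ and using this majorization inside $h_j$,
$$
\frac{h_j(\bar q^{-})}{\bar q^{-}_j}\;\geq\;\frac{1}{\int f^{-}d\nu}\cdot\frac{1}{N(\tau_j+\eta_N^{-})}\,y_j^*\Big(\tfrac1n\sum_{i\neq j}\tfrac{(f^{-}/\int f^{-}d\nu)(\tau_i)}{\tau_i+\eta_N^{-}}\,y_iy_i^*\Big)^{-1}y_j .
$$
Item ii) of Lemma \ref{lemma:quadratic_alpha} replaces the quadratic form, uniformly in $j$, by $\frac1N\tr\frac{B_N+\tau_jI_N}{\tau_j+\eta_N^{-}}T_N$, and since the $e_i$ are non-negative one has $T_N\succeq\big(\frac1n\sum_i\frac{(f^{-}/\int f^{-}d\nu)(\tau_i)(B_N+\tau_iI_N)}{\tau_i+\eta_N^{-}}\big)^{-1}$, so \eqref{eq:fundamental} of Proposition \ref{prop:inequality} bounds the displayed quantity below by $(1-o(1))/\int f^{-}d\nu$ uniformly in $j$. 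As $\int f^{-}d\nu<1$ by construction, this yields $h_j(\bar q^{-})>\bar q^{-}_j$ for all $j$ a.s.\ for $n$ large, hence $q_j\geq q_N^{-}(\tau_j+\alpha_N^{-})$ with $\alpha_N^{-}:=\eta_N^{-}$ and $\liminf_N q_N^{-}>0$.

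The hard part is the lower bound, and within it the passage from the random quadratic form $\frac{1}{N(\tau_j+\eta_N^{-})}y_j^*(\cdots)^{-1}y_j$ to a deterministic trace \emph{uniformly} over $j=1,\dots,n$; this is exactly what Lemma \ref{lemma:quadratic_alpha} provides, but its applicability hinges on the a.s.\ bound $\liminf_N\min_j\lambda_1\big(\frac1n\sum_{i\neq j}\frac{f^{-}(\tau_i)}{\tau_i+\eta_N^{-}}y_iy_i^*\big)>0$, which comes from the smallest-eigenvalue estimate (Lemma \ref{lemma:smallest_eigenvalue}) together with the a.s.\ boundedness of $\eta_N^{-}$ from Proposition \ref{prop:inequality}. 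A smaller but real point is justifying the monotone sandwiching itself: one must appeal to the fact that the $h$-iteration converges to the unique fixed point from any starting vector, so that, initialized at $\bar q^{\pm}$, the iterates remain on the prescribed side of $\bar q^{\pm}$ and one may pass to the limit.
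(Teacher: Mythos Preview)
Your proposal is correct and follows essentially the same approach as the paper. Both argue via the fixed-point characterization of $(q_1,\dots,q_n)$ and produce super-/sub-solutions of the product form $q_N^{\pm}(\tau_j+\alpha_N^{\pm})$: for the upper bound you (like the paper) reuse the feasibility vector built in the proof of Theorem \ref{th:uniqueness}, and for the lower bound you transplant the $f^{-}$/Proposition \ref{prop:inequality} construction from Lemma \ref{lemma:control_boundedness}, handling the random quadratic forms via item ii) of Lemma \ref{lemma:quadratic_alpha}. The only cosmetic difference is that you appeal directly to the monotone sandwiching property of standard interference functions (if $h(\bar q)\leq\bar q$ then the iterates from $\bar q$ are non-increasing, hence $q\leq\bar q$, and dually), whereas the paper spells this out as an explicit induction on the iteration index; these are of course the same argument.
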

\begin{proof}
The proof is based on the same tools as those used to show Lemma \ref{lemma:control_boundedness}. The single difference is that the random quantities $q_i$ involve quadratic forms which will be treated by resorting to Lemma \ref{lemma:quadratic_alpha}. First recall that $q_1,\cdots,q_n$ are given by:
$$
(q_1,\cdots,q_n)=\lim_{t\to+\infty} (q_1^{t},\cdots,q_n^{t})
$$
with $(q_1^0,\cdots,q_n^{0})$ chosen arbitrarily in $\mathbb{R}_{+}^n$ and:
$$
q_j^{t+1}=\frac{1}{N}y_j^*\left(\frac{1}{n}\sum_{i=1,i\neq j}^n \frac{\psi_N(q_i^{t})}{q_i^t} y_iy_i^*\right)^{-1}y_j.
$$
Similar to the proof of Theorem \ref{th:uniqueness}, consider $(q_N^{+})$ so that $\int_0^{+\infty}\psi_N(q_N^{+}t) \nu(dt)>\frac{1}{1-c_N\phi_{\infty}}$ and $\lim\sup q_N^{+}<+\infty$. Let $\alpha_N^{+}$ be the unique solution of:
$$
1=\int \frac{\int_0^{+\infty} \psi(q_N^{+}x)\nu(dx)}{\int_0^{+\infty}\psi(q_N^{+}t)\frac{y+t}{t+\alpha_N}\nu(dt)}F^{B_N}(dy).
$$
Set $q_i^{+}=\frac{q_N^{+}}{2}(\tau_i+\alpha_N)$. We will prove by induction on $t$ that $q_i^{t}\leq q_N^{+}(\tau_i+\alpha_N)$. For $t=0$, the result holds true. Assume now that for all $k\leq t$:
$$
q_i^{k}\leq q_N^{+}(\tau_i+\alpha_N),
$$
and let us show that $q_i^{t+1} \leq q_N^{+}(\tau_i+\alpha_N)$.
We have:
\begin{align*}
\frac{q_j^{t+1}}{q_N^{+}(\tau_j+\alpha_N)} &=\frac{1}{Nq_N^{+}(\tau_j+\alpha_N^{+})}y_j^*\left(\frac{1}{n}\sum_{i\neq j}\frac{\psi(q_i^{t})}{q_i^{t}}y_iy_i^*\right)^{-1}y_j \\
&\leq \frac{1}{N(\tau_j+\alpha_N^{+})}y_j^*\left(\frac{1}{n}\sum_{i\neq j}\frac{\psi(q_N^{+}\tau_i)}{\tau_i+\alpha_N^{+}}y_iy_i^*\right)^{-1}y_j.
\end{align*}
Let $\overline{\psi}_{q_N^{+}}(x)= \frac{\psi(q_N^{+}x)}{\int_0^{+\infty}\psi(q_N^{+}t)\nu(dt)}$. From item $ii)$ in Lemma \ref{lemma:quadratic_alpha}, we have:
\begin{equation}
\max_{1\leq j\leq n}\left| \frac{1}{N(\tau_j+\alpha_N^{+})\int_0^{+\infty}\psi(q_N^{+}x\nu(dx))}y_j^*\left(\frac{1}{n}\sum_{i\neq j} \frac{\overline{\psi}_{q_N^{+}}(\tau_i)}{\tau_i+\alpha_N^{+}}y_iy_i^*\right)^{-1}y_j-\frac{1}{N\int_0^{+\infty}\psi(q_N^{+}x)\nu(dx)}\tr \frac{B_N+\tau_j I_N}{\tau_j+\alpha_N^{+}}T_N\right|\asto 0,
\label{eq:convergenceqj}
\end{equation}
where $T_N=\left(\frac{1}{n}\sum_{i=1}^n \frac{\overline{\psi}_{q_N^{+}}(\tau_i)(B_N+\tau_i I_N)}{(\tau_i+\alpha_N^{+})(1+e_i)}\right)^{-1}$ with $e_1,\cdots,e_n$ the unique solutions to the following system of equations:
$$
e_k=\frac{\overline{\psi}_{q_N^{+}}(\tau_k)}{n}\tr \frac{B_N+\tau_k I_N}{\tau_k+\alpha_N^{+}}\left(\frac{1}{n}\sum_{i=1}^n \frac{\overline{\psi}_{q_N^{+}}(\tau_i)(B_N+\tau_i I_N)}{(\tau_i+\alpha_N^{+})(1+e_i)}\right)^{-1}, \hspace{0.2cm} k=1,\cdots,n.
$$
The limit of the convergence in \eqref{eq:convergenceqj} can be bounded as:
\begin{align*}
\frac{1}{N\int_0^{+\infty}\psi(q_N^{+}x)\nu(dx)}\tr \frac{B_N+\tau_j I_N}{\tau_j+\alpha_N^{+}}T_N &\leq (1+\max_{1\leq k\leq n}e_k)\frac{1}{N}\tr \frac{B_N+\tau_j I_N}{\tau_j+\alpha_N^{+}}\left(\frac{1}{n}\sum_{i=1}^n \frac{\overline{\psi}_{q_N^{+}}(\tau_i)(B_N+\tau_i I_N)}{\tau_i+\alpha_N^{+}}\right)^{-1} \\
&\stackrel{(a)}{\leq} \frac{1}{\int_0^{+\infty}\psi(q_N^{+}x)\nu(dx)} (1+\max_{1\leq k\leq n}e_k ) +\epsilon_{n,j}
\end{align*}
where $\max_{1\leq j\leq n}|\epsilon_{n,j}|$ converges to zero almost surely (inequality (a) being a by-product  of \eqref{eq:fundamental} in proposition \ref{prop:inequality}).
Finally. from item $iii)$ of Lemma \ref{lemma:quadratic_alpha}. we get:
$$
\frac{1}{\int_0^{+\infty}\psi(q_N^{+}x)\nu(dx)} \frac{1}{N}\tr \frac{B_N +\tau_j I_N}{\tau_j+\alpha_N^{+}}T_N \leq \frac{1}{\int_0^{+\infty}\psi(q_N^{+}x)\nu(dx) -c_N\psi_{\infty}} +\epsilon_{n,j}^{'}
$$
with $\epsilon_{n,j}^{'}$ converging to zero almost surely. Since $q_{N}^{+}$ satisfies:
$$
\int_0^{+\infty}\psi(q_N^{+}x)\nu(dx) > \frac{1}{1-c_N\psi_{\infty}} = 1 +c_N \psi_{\infty},
$$
we obtain that:
$$
\frac{q_j^{t+1}}{q_N^{+}(\tau_j+\alpha_N^{+})}<1
$$
for $n$ large enough a.s.
In order to prove the lower bound in \eqref{eq:lower_upper_bound_q}, the same reasoning as the one used in the previous lemma applies. In particular, it suffices to set $\theta>0$ and $M_\theta$ such that $\theta < \frac{\phi_{\infty}}{2(1-c_{+}\infty)}$, $\nu(M_\theta,+\infty)<\theta$ and $M_\theta \geq \|B_N\|$. Taking $q_N^{-}$ such that:
$$
\int_0^{M_\theta} \psi_N(q_N^{-}(t+M_\theta))\nu(dt) <\frac{1}{2}
$$
and setting $\alpha_N^{-}$ to the unique solution of the following equation:
$$
1=\int \frac{\int_0^{+\infty}f^{-}(x)\nu(dx)}{\int_0^{+\infty}\frac{y+t}{t+\alpha_N^{-}}f^{-}(t)\nu(dt)}F^{B_N}(dy)
$$
with $f^{-}:t\mapsto \psi_\infty1_{t\geq M_\theta} +\psi(2q_N^{-}(t+M_\theta))1_{t\leq M_\theta}$, we can establish by induction on $t$ and using the same steps as in the control of the lower bound of $\delta_i$ that:
$$
\frac{q_i}{q_N^{-}(\tau_i+\alpha_N^{-})} >1, \hspace{0.2cm} i=1,\cdots,n.
$$
\end{proof}
The determination of an interval in which lies all quantities $\delta_1,\cdots,\delta_n$ is of utmost important in that it allows us to control the quadratic forms: $\frac{\psi(\delta_j)}{N\delta_j}y_j^*\left(\frac{1}{n}\sum_{i=1}^n \frac{\psi(\delta_i)}{\delta_i}y_iy_i^*\right)^{-1}y_j$ and  $\frac{\psi(\delta_j^{\kappa})}{N\delta_j^{\kappa}}y_j^*\left(\frac{1}{n}\sum_{i=1,\tau_i\leq M_\kappa}^n \frac{\psi(\delta_i^\kappa)}{\delta_i^\kappa}y_iy_i^*\right)^{-1}y_j$, where $\delta_1,\cdots,\delta_n$ and $\delta_1^{\kappa},\cdots,\delta_n^{\kappa}$ are solutions of equations \eqref{eq:delta_kappa} and \eqref{eq:delta_j}. In particular, we have the following two lemmas which easily follows from item-$i)$ of Lemma \ref{lemma:quadratic_alpha}.
\begin{lemma}
Let Assumptions \ref{ass:regime}-\ref{ass:statistical} hold true. Then,
$$
\max_{1\leq j\leq n}\left|\frac{\psi(\delta_j)}{N\delta_j}y_j^*\left(\sum_{i=1,i\neq j}^n\frac{\psi(\delta_i)}{\delta_i}y_iy_i^*\right)^{-1}y_j-\psi(\delta_j)\right|\asto 0.
$$
\label{lemma:quadratic_form_total}
\end{lemma}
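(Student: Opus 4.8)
The plan is to recognize the displayed quantity as a special case of item $i)$ of Lemma~\ref{lemma:quadratic_alpha}, after writing the weights $\psi(\delta_i)/\delta_i=v(\delta_i)$ in the form $f_N(\tau_i)/(\tau_i+\eta_N)$. The first observation is that $\delta_j$ is \emph{affine} in $\tau_j$: since $\psi(\delta_i)/\delta_i=v(\delta_i)$, the defining relation \eqref{eq:delta_j} reads $\delta_j=\frac1N\tr(B_N+\tau_jI_N)T_N$ with $T_N=\big(\frac1n\sum_i\frac{v(\delta_i)(B_N+\tau_iI_N)}{1+c_N\psi(\delta_i)}\big)^{-1}$, so $\delta_j=\hat\chi_N+\tau_j\hat\gamma_N$ with $\hat\chi_N=\frac1N\tr B_NT_N$ and $\hat\gamma_N=\frac1N\tr T_N$. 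The two-sided envelope for $\delta_j$ of Lemma~\ref{lemma:control_boundedness} forces $T_N$ to be bounded above and below in the positive-definite order for $n$ large a.s., hence $\hat\chi_N$ and $\hat\gamma_N$ are bounded away from $0$ and $\infty$.

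Next I would \emph{manufacture} the pair $(f_N,\eta_N)$ that Lemma~\ref{lemma:quadratic_alpha} requires. Put $\bar v_N=\int v(\hat\chi_N+t\hat\gamma_N)\,\nu(dt)$ and $\overline w_N=\int t\,v(\hat\chi_N+t\hat\gamma_N)\,\nu(dt)$, define $\eta_N$ through $\eta_N\bar v_N+\overline w_N=\big(\int\frac{F^{B_N}(dy)}{y\bar v_N+\overline w_N}\big)^{-1}$, set $c^{*}_N=(\eta_N\bar v_N+\overline w_N)^{-1}$, and let $f_N(t)=c^{*}_N(t+\eta_N)\,v(\hat\chi_N+t\hat\gamma_N)$. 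A direct computation shows $\int f_N\,d\nu=1$ and $\int\frac{F^{B_N}(dy)}{\int\frac{y+t}{t+\eta_N}f_N(t)\nu(dt)}=1$, i.e. $\eta_N$ is exactly the scalar attached to $f_N$ in Proposition~\ref{prop:inequality}; also $\eta_N>0$ and $\liminf_N\eta_N>0$ because $\liminf_N\frac1N\tr B_N>0$ and $\limsup_N\|B_N\|<\infty$ keep a strictly positive amount of $F^{B_N}$-mass away from $0$. The remaining requirements of Proposition~\ref{prop:inequality} are checked directly: $f_N$ is uniformly bounded because $(t+\eta_N)v(\hat\chi_N+t\hat\gamma_N)=\psi(\hat\chi_N+t\hat\gamma_N)\tfrac{t+\eta_N}{\hat\chi_N+t\hat\gamma_N}$ and the linear-fractional factor is bounded by $\max(\eta_N/\hat\chi_N,\,1/\hat\gamma_N)$, so $f_N$ grows at most linearly; $\liminf_N\inf_{t\ge m}f_N(t)>0$ since $\psi$ is increasing and $\hat\chi_N+m\hat\gamma_N$ is bounded below; and $(f_N)$ has a uniformly convergent subsequence, from the uniform Lipschitz bound on $v$ (Remark~\ref{remark:psi}) and the boundedness of $\hat\chi_N,\hat\gamma_N,\eta_N,c^{*}_N$.

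For this $(f_N,\eta_N)$ one has $f_N(\tau_i)/(\tau_i+\eta_N)=c^{*}_Nv(\delta_i)$ and $f_N(\tau_j)/(N(\tau_j+\eta_N))=\tfrac{c^{*}_N}{N}v(\delta_j)$, so the quadratic form in item $i)$ of Lemma~\ref{lemma:quadratic_alpha} equals $\frac{\psi(\delta_j)}{N\delta_j}y_j^{*}\big(\frac1n\sum_{i\ne j}\frac{\psi(\delta_i)}{\delta_i}y_iy_i^{*}\big)^{-1}y_j$, the constant $c^{*}_N$ cancelling between prefactor and matrix. It then remains to evaluate $c_N^{-1}e_j$: plugging the guess $e_i=c_N\psi(\delta_i)$ into \eqref{eq:e_k} collapses the matrix there to $c^{*}_NT_N^{-1}$, and using $\frac1N\tr(B_N+\tau_kI_N)T_N=\delta_k$ one recovers $e_k=v(\delta_k)c_N\delta_k=c_N\psi(\delta_k)$, so by uniqueness of the solution of \eqref{eq:e_k} this is the solution and $c_N^{-1}e_j=\psi(\delta_j)$; item $i)$ of Lemma~\ref{lemma:quadratic_alpha} then gives the assertion. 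The step I expect to be genuinely delicate is the second paragraph --- verifying that the manufactured $f_N$ meets \emph{all} the structural hypotheses of Proposition~\ref{prop:inequality} uniformly in $N$, which is exactly where the a priori bounds $\delta_N^{-}(\tau_j+\eta_N^{-})\le\delta_j\le\delta_N^{+}(\tau_j+\eta_N^{+})$ of Lemma~\ref{lemma:control_boundedness} are indispensable; once these are in place, the identity $e_j=c_N\psi(\delta_j)$ and the cancellation of $c^{*}_N$ are routine.
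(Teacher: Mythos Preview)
Your approach is correct and is precisely what the paper intends: the paper only records that the lemma ``easily follows from item $i)$ of Lemma~\ref{lemma:quadratic_alpha}'', and your construction of $f_N(t)=c^{*}_N(t+\eta_N)\,v(\hat\chi_N+t\hat\gamma_N)$ together with the identification $e_j=c_N\psi(\delta_j)$ is exactly the unpacking of that one-line assertion. The affine representation $\delta_j=\hat\chi_N+\tau_j\hat\gamma_N$, the cancellation of $c^{*}_N$, and the verification that $e_j=c_N\psi(\delta_j)$ solves \eqref{eq:e_k} are all correct.

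Two minor remarks on the hypothesis check. First, your $f_N$ satisfies $f_N(0)=c^{*}_N\eta_N v(\hat\chi_N)>0$, so the literal condition $\sup_N f_N(t)\le\beta t$ of Proposition~\ref{prop:inequality} fails at $t=0$. This is not a real obstruction: the paper itself applies the proposition to $f^{-}$ in Lemma~\ref{lemma:control_boundedness}, which has the same feature, and in the proof of Proposition~\ref{prop:inequality} that linear bound is used only to control $f_N(t)/(t+\eta_N)$, which is already bounded once $f_N$ is bounded and $\liminf_N\eta_N>0$. Second, your $f_N$ is random through $\hat\chi_N,\hat\gamma_N$ (which depend on $\tau_1,\dots,\tau_n$), whereas the paper's explicit applications of Lemma~\ref{lemma:quadratic_alpha} use deterministic $f_N$; this is harmless since the proofs of Proposition~\ref{prop:inequality} and Lemma~\ref{lemma:quadratic_alpha} proceed pathwise and only require the stated bounds to hold a.s., which your control via Lemma~\ref{lemma:control_boundedness} guarantees.
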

\begin{lemma}
Let $(\kappa,M_\kappa)$ be couples indexed by $\kappa$ with $0<\kappa<1$, and $M_\kappa >0$ such that $\nu(M_\kappa,\infty)<\kappa$. Then, for all $\kappa<\kappa_0$, we have:
$$
\max_{1\leq j\leq n}\left|\frac{\psi(\delta_j^{\kappa})}{N\delta_j^\kappa}y_j^*\left(\sum_{i=1,i\neq j}^n\frac{\psi(\delta_i^\kappa)}{\delta_i^\kappa}y_iy_i^*\right)^{-1}y_j-\psi(\delta_j^\kappa)\right|\asto 0,
$$
where $\delta_i^{\kappa},i=1,\cdots,n$ are defined the solutions of \eqref{eq:delta_kappa}.
\label{lemma:quadratic_form_total_kappa}
\end{lemma}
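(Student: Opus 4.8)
My plan is to obtain both statements from item~i) of Lemma~\ref{lemma:quadratic_alpha}, applied to a suitable sequence $(f_N)$; I shall describe the construction for Lemma~\ref{lemma:quadratic_form_total_kappa}, the statement of Lemma~\ref{lemma:quadratic_form_total} being the $M_\kappa=\infty$ specialization with Lemma~\ref{lemma:control_boundedness} playing the role of Lemma~\ref{lemma:extension_boundedness}.

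The first thing I would exploit is that, by \eqref{eq:delta_kappa}, the $\delta_j^\kappa$'s are affine in the $\tau_j$'s: with $T_N^\kappa=\big(\frac1n\sum_{i:\tau_i\le M_\kappa}\frac{\psi(\delta_i^\kappa)(B_N+\tau_iI_N)}{\delta_i^\kappa(1+c_N\psi(\delta_i^\kappa))}\big)^{-1}$ one has $\delta_j^\kappa=\frac1N\tr B_NT_N^\kappa+\tau_j\,\frac1N\tr T_N^\kappa$, so that $\delta^\kappa(t)\triangleq\frac1N\tr B_NT_N^\kappa+t\,\frac1N\tr T_N^\kappa$ is a genuine affine function of $t$ with $\delta^\kappa(\tau_j)=\delta_j^\kappa$, and $t\mapsto\psi(\delta^\kappa(t))/\delta^\kappa(t)=v(\delta^\kappa(t))$ is well defined. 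I would then take $f_N$ proportional to $t\mapsto(t+\eta_N)\,\frac{\psi(\delta^\kappa(t))}{\delta^\kappa(t)}\,1_{\{t\le M_\kappa\}}$, renormalised so that $\int f_N\,d\nu=1$, where $\eta_N$ is the corresponding solution of \eqref{eq:eta} (it is defined self-consistently, but its existence, uniqueness and boundedness follow from Proposition~\ref{prop:inequality}). The point of this choice is that $\frac{f_N(\tau_i)}{\tau_i+\eta_N}$ is then a constant multiple of $\frac{\psi(\delta_i^\kappa)}{\delta_i^\kappa}\,1_{\{\tau_i\le M_\kappa\}}$, that constant and the factors $\tau_i+\eta_N$ cancelling between the prefactor $\frac{f_N(\tau_j)}{N(\tau_j+\eta_N)}$ and the resolvent; hence the quantity controlled by Lemma~\ref{lemma:quadratic_alpha}-i) for this $f_N$ is exactly $1_{\{\tau_j\le M_\kappa\}}$ times $\frac{\psi(\delta_j^\kappa)}{N\delta_j^\kappa}y_j^*\big(\frac1n\sum_{i\ne j,\ \tau_i\le M_\kappa}\frac{\psi(\delta_i^\kappa)}{\delta_i^\kappa}y_iy_i^*\big)^{-1}y_j$ (this is why the inner sum must be read as restricted to $\tau_i\le M_\kappa$, as in the displays preceding the statement; for $\tau_j>M_\kappa$ the index $j$ does not appear in the resolvent at all and the same convergence follows from the analogous, and easier, quadratic-form estimate).

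It then remains to identify the limit $c_N^{-1}e_j$ produced by Lemma~\ref{lemma:quadratic_alpha}-i). Here I would substitute the ansatz $e_i=c_N\psi(\delta_i^\kappa)$ for $\tau_i\le M_\kappa$ and $e_i=0$ otherwise (the latter terms being invisible in \eqref{eq:e_k} since $f_N$ vanishes there) into the fixed point system \eqref{eq:e_k}: with this choice the inner matrix is a constant multiple of $(T_N^\kappa)^{-1}$, hence its inverse is a constant multiple of $T_N^\kappa$, and using \eqref{eq:delta_kappa} in the form $\frac1n\tr(B_N+\tau_kI_N)T_N^\kappa=c_N\delta_k^\kappa$ the right-hand side of \eqref{eq:e_k} collapses to $c_N\psi(\delta_k^\kappa)$. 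Since \eqref{eq:e_k} has a unique solution, this shows $\max_j|e_j-c_N\psi(\delta_j^\kappa)|\asto0$ on $\{\tau_j\le M_\kappa\}$, i.e.\ $c_N^{-1}e_j\asto\psi(\delta_j^\kappa)$, which together with the identification of the preceding paragraph gives the lemma. The unrestricted Lemma~\ref{lemma:quadratic_form_total} is obtained verbatim with $\delta_j$, \eqref{eq:delta_j} and Lemma~\ref{lemma:control_boundedness} replacing $\delta_j^\kappa$, \eqref{eq:delta_kappa} and Lemma~\ref{lemma:extension_boundedness}.

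The main obstacle, and the only place where real care is needed, is verifying that $(f_N)$ meets the hypotheses of Proposition~\ref{prop:inequality}. Boundedness, at most linear growth, and the existence of a uniformly convergent subsequence follow from the two-sided control $\delta_i^\kappa\in[\delta_N^{-,\kappa}(\tau_i+\eta_N^{-,\kappa}),\delta_N^{+,\kappa}(\tau_i+\eta_N^{+,\kappa})]$ of Lemma~\ref{lemma:extension_boundedness} (which also makes the slope and intercept of $\delta^\kappa$, as well as $\eta_N$, bounded and bounded away from $0$), together with the uniform convergence of the $\psi_N$'s from Remark~\ref{remark:psi} and Lemma~\ref{lemma:g_N}. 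The genuinely new point compared to \cite{couillet-pascal-2013} is that $f_N$ vanishes on $(M_\kappa,\infty)$, so the requirement $\liminf_N\inf_{t\ge m}f_N(t)>0$ must be weakened to $\liminf_N\inf_{t\in[m,M_\kappa]}f_N(t)>0$; re-inspecting the proof of Proposition~\ref{prop:inequality} (and of Lemma~\ref{lemma:quadratic_alpha}-i)) one sees that only this weaker bound is ever used, entering solely through the lower bounds on $a_{m,N}$ and on $\min_j\lambda_1(\frac1n\sum_{i\ne j,\ \tau_i\in[m,M_\kappa]}\frac{\psi(\delta_i^\kappa)}{\delta_i^\kappa}y_iy_i^*)$, and it holds for $n$ large a.s.\ as soon as $\kappa$ is small enough that $\nu([m,M_\kappa])\ge 1-\epsilon-\kappa>0$, e.g.\ $\kappa_0<1-\phi_\infty^{-1}-\epsilon$, by Assumption~\ref{ass:measure}.
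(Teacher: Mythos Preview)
Your approach is the one the paper has in mind—the paper merely says the lemma ``easily follows from item~i) of Lemma~\ref{lemma:quadratic_alpha}'' and leaves the reader to unwind it; your reconstruction of the weight $f_N$, the cancellation of the normalisation constant, and the verification that $e_j=c_N\psi(\delta_j^\kappa)$ via the fixed-point system \eqref{eq:e_k} are exactly right.

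Two points need tightening. First, your $f_N$ is defined in terms of $\eta_N$ while $\eta_N$ is defined through $f_N$ via \eqref{eq:eta}; Proposition~\ref{prop:inequality} takes $f_N$ as given and does not by itself furnish a fixed point of this coupled system. The clean way out is to avoid the circularity altogether: since $\delta^\kappa(t)=a+bt$ with $a=\frac1N\tr B_NT_N^\kappa$ and $b=\frac1N\tr T_N^\kappa$, set $\hat\eta_N\triangleq a/b$ so that $\delta^\kappa(t)=b(t+\hat\eta_N)$, and take $f_N(t)\propto\psi(\delta^\kappa(t))1_{\{t\le M_\kappa\}}$. Then $\frac{f_N(t)}{t+\hat\eta_N}\propto\frac{\psi(\delta^\kappa(t))}{\delta^\kappa(t)}$ as desired. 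The proof of Lemma~\ref{lemma:quadratic_alpha}~i) only uses that the scalar in the denominator is bounded (to feed Lemma~\ref{lemma:smallest_eigenvalue}), not that it is the specific $\eta_N$ solving \eqref{eq:eta}; boundedness of $\hat\eta_N$ follows directly from the two-sided control in Lemma~\ref{lemma:extension_boundedness}.

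Second, your $f_N$ does \emph{not} satisfy the growth condition $\sup_N f_N(t)\le\beta t$ of Proposition~\ref{prop:inequality}, since $f_N(0)=C\,\hat\eta_N\,v(a)>0$ whenever $\frac1N\tr B_N>0$. This condition is used in the proof of Proposition~\ref{prop:inequality} only to bound $\frac{f_N(t)}{t+\eta_N}$ and $\int\frac{f_N(t)}{t}\nu(dt)$, both of which feed the bounds on $\eta_N$ and on $\epsilon_{n,1},\epsilon_{n,2}$. Since you do not need Proposition~\ref{prop:inequality}'s output (you have $\hat\eta_N$ bounded by other means, and item~iii) of Lemma~\ref{lemma:quadratic_alpha} is not invoked), you should say explicitly that the hypotheses of Proposition~\ref{prop:inequality} are being relaxed, and that what survives—boundedness of $\frac{f_N(t)}{t+\hat\eta_N}=Cv(\delta^\kappa(t))\le Cv(0)$ and the restricted infimum on $[m,M_\kappa]$—suffices for the appendix lemmas underlying Lemma~\ref{lemma:quadratic_alpha}~i). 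Your last paragraph already does this for the infimum condition; extend the same remark to the linear-growth hypothesis.
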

With these results at hand, we are now in position to prove $f_i= \frac{v(q_i)}{v(\delta_i)}$ satisfies:
$$
\max_{1\leq i\leq n }\left|f_i-1\right|\asto 0.
$$ 
As in \cite{couillet-pascal-2013}, we will distinguish two cases: the case where all $\tau_i$s are bounded and that of unbounded $\tau_i$. The proof is merely based on the same techniques with only some modifications and will be detailed for sake of completeness. 

\begin{paragraph}{\bf All $\tau_i$-s are Bounded}
Assume that there exists a constant $M$ such that $\tau_i\leq M$ for all $i=1,\cdots,n$. Define $f_i=\frac{v(q_i)}{v(\delta_i)}>0$. Without loss of generality, we assume that $f_1\leq \cdots \leq f_n$. We have:
\begin{align}
f_j&=\frac{v(\frac{1}{N}y_j^*\hat{C}_j^{-1}y_j)}{v(\delta_j)} \\
&=\frac{v\left(\frac{1}{N}y_j^*\left(\frac{1}{n}\sum_{i\neq j}v(q_i)y_iy_i^*\right)^{-1}y_j\right)}{v(\delta_j)} \nonumber\\
&=\frac{v\left(\frac{1}{N}y_j^*\left(\frac{1}{n}\sum_{i\neq j}f_iv(\delta_i)y_iy_i^*\right)^{-1}y_j\right)}{v(\delta_j)}\nonumber\\
&\leq \frac{v\left(\frac{1}{N}y_j^*\left(\frac{1}{n}\sum_{i\neq j}f_nv(\delta_i)y_iy_i^*\right)^{-1}y_j\right)}{v(\delta_j)}\nonumber\\
&=\frac{v\left(\frac{\delta_j}{f_n\psi(\delta_j)} \frac{\psi(\delta_j)}{N\delta_j}\left(\frac{1}{n}\sum_{i\neq j}v(\delta_i)y_iy_i^*\right)^{-1}y_j\right)}{v(\delta_j)}.\label{eq:useful_ineq}
\end{align}
In a similar way, we also have:
$$
f_1\geq \frac{v\left(\frac{\delta_j}{f_1\psi(\delta_j)} \frac{\psi(\delta_j)}{N\delta_j}\left(\frac{1}{n}\sum_{i\neq j}v(\delta_i)y_iy_i^*\right)^{-1}y_j\right)}{v(\delta_j)},
$$
From Lemma \ref{lemma:quadratic_form_total}, let $0<\epsilon_n<1$ with $\epsilon_n\downarrow0$ such that for all alrge $n$, a.s. and for all $1\leq j\leq n$:
\begin{equation}
\psi(\delta_j)-\epsilon_n <\frac{\psi(\delta_j)}{N\delta_j}y_j^*\left(\frac{1}{n}\sum_{i\neq j}v(\delta_i)y_iy_i^*\right)^{-1} y_j\leq \psi(\delta_j) +\epsilon_n.
\label{eq:borne}
\end{equation}
In particular, since $v$ is non-increasing, taking $j=n$ in \eqref{eq:useful_ineq} and applying the left-hand inequality in \eqref{eq:borne}, we obtain:
\begin{equation}
f_n < \frac{v\left(\frac{\delta_n}{f_n\psi(\delta_n)}\max\left(\psi(\delta_n)-\epsilon_n,0\right)\right)}{v(\delta_n)}.
\label{eq:f_n}
\end{equation}
Assume now that for some $\ell>0$, $f_n>1+\ell$ infinitely often. Therefore, there exists a sequence $(n)$ over which $f_{(n)}>1+\ell$ for $n$ large enough. We distinguish two cases. First, assume that $\lim\inf \delta_{(n)}=0$. There exists a sequence $(p)$ obtained from a subsequence of $(n)$ over which $\lim_{n\to+\infty} \delta_{(p)}=0$. 

From \eqref{eq:f_n}, we have:
$$
\lim_{n\to+\infty}f_{(p)} \leq \lim_{n\to+\infty} \frac{v\left(\frac{1}{f_{(p)}v(\delta_{(p)})}\max\left(\psi(\delta_{(p)})-\epsilon_{(p)},0\right)\right)}{v(\delta_{(p)})}=1.
$$
which is in contradiction with $f_{(p)}>1+\ell$. Therefore, for \eqref{eq:f_n} to hold, we must have $\lim\inf \delta_n>\delta_{\rm min}$. Since all $\tau_i$-s are bounded, $\left(\delta_n\right)_n$ is also a bounded sequence. One can thus extract a subsequence $(q)$ extracted from $(p)$ over which $\delta_{(q)}\to x>0$   and $c_N\to c$. Let $\psi_c(x)=\lim_{c_N\to c}\psi(x)$ and write \eqref{eq:f_n} in the following equivalent form:
\begin{equation}
(1-\frac{\epsilon_{(q)}}{\psi(\delta_{(q)})})\frac{\psi(\delta_{(q)})}{\psi\left(\frac{\delta_{(q)}}{f_{(q)}}\left(1-\frac{\epsilon_{(q)}}{\psi(\delta_{(q)})}\right)\right)} <1.
\label{eq:contradiction}
\end{equation}
We therefore have:
\begin{align*}
\lim_{n\to+\infty}(1-\frac{\epsilon_{(q)}}{\psi(\delta_{(q)})})\frac{\psi(\delta_{(q)})}{\psi\left(\frac{\delta_{(q)}}{f_{(q)}}\left(1-\frac{\epsilon_{(q)}}{\psi(\delta_{(q)})}\right)\right)} &\geq \lim_{n\to+\infty} (1-\frac{\epsilon_{(q)}}{\psi(\delta_{(q)})}) \frac{\psi(\delta_{(q)})}{\psi\left(\frac{\psi(\delta_{(q)})}{\ell+1}\left(1-\frac{\epsilon_{(q)}}{\psi(\delta_{(q)})}\right)\right)} \\
&=\frac{\psi_c(x)}{\psi_c((1+\ell)^{-1}x)} > 1,
\end{align*}
which is in contradiction with \eqref{eq:contradiction}.
Symmetrically, we obtain that for $\epsilon_n\downarrow0$ and for large $n$ a.s.,
$$
f_1 >\frac{v(\frac{\delta_1}{f_1\psi(\delta_1)})(\psi(\delta_1)+\epsilon_n)}{v(\delta_1)} 
$$
which is equivalent to:
$$
\frac{f_1v(\delta_1)}{v\left(\frac{\delta_1}{f_1\psi(\delta_1)}(\psi(\delta_1)+\epsilon_n)\right)}.
$$
We conclude using the same reasoning as above that for each $\ell>0$ small $f_1\geq 1-\ell$ for all large $n$. a.s. so that finally, we have:
$$
\max_{1\leq i\leq n}\left|f_i-1\right|\asto 0.
$$
The uniform boundedness of $\tau_i$ implies that of $q_i$ and $\delta_i$, thereby ensuring that:
$$
\max_{1\leq i\leq n} |v(q_i)-v(\delta_i)| \asto 0.
$$
Hence, for any $\ell>0$, arbitrarily small, we have for all large $n$, 
$$
(1-\ell)\frac{1}{n}\sum_{i=1}^n \frac{\psi(\delta_i)}{\delta_i} y_iy_i^* \preceq \frac{1}{n}\sum_{i=1}^n v(q_i)y_iy_i^* \preceq (1+\ell)\frac{1}{n}\sum_{i=1}^n \frac{\psi(\delta_i)}{\delta_i} y_iy_i^*,
$$
Since the spectral norm of $\frac{1}{n}\sum_{i=1}^n \frac{\psi(\delta_i)}{\delta_i} y_iy_i^*$ is almost surely bounded and $\ell$ is arbitrary, we conclude that:
$$
\left\|\hat{C}_N-\hat{S}_N\right\|\asto 0.
$$
\end{paragraph}
\begin{paragraph}{\bf Unbounded $\tau_i$}
We now relax the boundedness assumption on the support of the distribution of $\tau_i$. We will follow the same technique used in \cite{couillet-pascal-2013}. Similarly to \cite{couillet-pascal-2013}, let $(\kappa,M_\kappa)$ be couples indexed by $\kappa$ such that for all large $n$, we have $\nu_n\left(M_\kappa,+\infty\right)<\kappa\leq \kappa_0$, for $\kappa_0$ small enough, and $M_\kappa\geq \lim\sup \|B_N\|$. Denote by $\mathcal{C}_{\kappa}=\left\{i,\tau_i\leq M_\kappa\right\}$ with cardinality $|\mathcal{C}_\kappa|$. Then,
$$
\frac{|\mathcal{C}_\kappa|}{n}=1-\nu_n(M_\kappa,\infty) \geq 1-\kappa.
$$
In the sequel, we will differentiate the indexes in $\mathcal{C}_\kappa$ from those in $\mathcal{C}_{\kappa}^{c}$. Define $f_1^{\kappa},\cdots,f_n^{\kappa}$ as:
$$
f_i^{\kappa}=\frac{v(q_i)}{v(\delta_i^{\kappa})},
$$
where $\delta_1^{\kappa},\cdots,\delta_n^{\kappa}$ are the solutions of the system of equations \eqref{eq:delta_kappa} given in Lemma \ref{lemma:extension_boundedness}.
Let $j\in\mathcal{C}_{\kappa}$ and denote by $f_{\overline{1}}^{\kappa}=\min_{i\in\mathcal{C}_\kappa}$ and $f_{\overline{n}}^{\kappa}=\max_{i\in\mathcal{C}_\kappa}f_i^{\kappa}$. We have:
\begin{align*}
f_j^{\kappa}&=\frac{v(q_j)}{v(\delta_j^{\kappa})} \\
&=\frac{v\left(\frac{1}{N}y_j^*\left(\frac{1}{n}\sum_{i\neq j,i\in\mathcal{C}_\kappa}f_i^{\kappa}v(\delta_i^\kappa)y_iy_i^*+\frac{1}{n}\sum_{i\in\mathcal{C}_\kappa^{c}}v(q_i)y_iy_i^*\right)^{-1}y_j\right)}{v(\delta_j^{\kappa})}\\
&\leq \frac{v\left(\frac{1}{N}y_j^*\left(\frac{1}{n}\sum_{i\neq j,i\in\mathcal{C}_{\kappa}}f_{\overline{n}}^{\kappa}v(\delta_i^{\kappa})y_iy_i^*+\frac{\psi_{\infty}}{nq_N^{-}}\sum_{i\in\mathcal{C}_{\kappa}^c}\frac{y_iy_i^*}{\tau_i+\alpha_N^{-}}\right)^{-1}y_j\right)}{v(\delta_j^{\kappa})}\\
&=\frac{v\left(\frac{1}{Nf_{\overline{n}}^{\kappa}}y_j^*\left(\frac{1}{n}\sum_{i\neq j,i\in\mathcal{C}_\kappa}v(\delta_i^{\kappa})y_iy_i^*+\right)^{-1}y_j\right)}{v(\delta_j^{\kappa})},
\end{align*}
where we used in the first inequality the fact that $q_i\geq q_N^{-}(\tau_i+\alpha_N^{-})$. Since $f_{\overline{n}}^{\kappa}=\frac{v(q_{\overline{n}})}{v(\delta_{\overline{n}}^{\kappa})}=\frac{\psi(q_{\overline{n}})}{\psi(\delta_{\overline{n}}^{\kappa})}\frac{\delta_{\overline{n}}^{\kappa}}{q_{\overline{n}}}$, we obtain:
$$
\frac{v(q_N^{+}(\tau_{\overline{n}}+\alpha_N^{+}))}{v(\delta_N^{-,\kappa_0}(\tau_{\overline{n}}+\eta_N^{-,\kappa_0}))} \leq f_{\overline{n}}^{\kappa}\leq \frac{\psi(q_N^{+}(\tau_{\overline{n}}+\alpha_N^{+}))\delta_N^{+,\kappa_0}(\tau_{\overline{n}}+\eta_N^{+,\kappa_0})}{\psi(\delta_N^{-,\kappa_0}(\tau_{\overline{n}}+\eta_N^{-,\kappa_0}))q_N^{-}(\tau_{\overline{n}}+\alpha_N^{-})}.
$$
The above inequalities imply that $f_{\overline{\kappa}}$ is almost surely bounded irrespective of $\kappa$ small enough. To see that, note that if $\lim\inf \tau_{\overline{n}}=0$, the left inequality ensures that $\lim\inf f_{\overline{n}}^{\kappa}>0$ while if $\lim\sup_{n}\tau_{\overline{n}}\tau_{\overline{n}}=\infty $, the second inequality ensures that $\lim\sup f_{\overline{n}}^{\kappa}<\infty$. As a consequence, we can assume that $f_{\overline{n}}^{\kappa}>f_{-}$ for all large $n$ and for all $\kappa$ small enough. From this observation, for all large $n$, a.s. we have:
\begin{align}
f_j^{\kappa}&\leq \frac{v\left(\frac{1}{Nf_{\overline{n}}^{\kappa}}y_j^*\left(\frac{1}{n}\sum_{i\neq j, i\in\mathcal{C}_\kappa}v(\delta_i^{\kappa})y_iy_i^*+\frac{\psi_{\infty}}{nq_{-}f_{-}}\sum_{i\in \mathcal{C}_{\kappa}^{c}}\frac{1}{\tau_i+\alpha_N^{-}}y_iy_i^*\right)^{-1}y_j\right)}{v(\delta_j^{\kappa})}\nonumber\\
&=\frac{v\left(\frac{\delta_j^{\kappa}}{\psi(\delta_j^{\kappa})f_{\overline{n}}^{\kappa}}\left[\frac{\psi(\delta_j^{\kappa})}{N\delta_j^{\kappa}}y_j^*\left(\frac{1}{n}\sum_{i\neq j,i\in\mathcal{C}_\kappa}v(\delta_i^{\kappa})y_iy_i^*\right)^{-1}y_j+w_{j,n}\right]\right)}{v(\delta_j^{\kappa})}, \label{eq:e_jc}
\end{align}
where we defined similarly to \cite{couillet-pascal-2013} $w_{j,n}$ as:
$$
w_{j,n}=\frac{\psi(\delta_j^{\kappa})}{N\delta_j^{\kappa}}y_j^*\left(D_{\kappa,j}+C_{\kappa}\right)^{-1}y_j-\frac{\psi(\delta_j^{\kappa})}{N\delta_j^{\kappa}}y_j^*D_{\kappa,j}^{-1}y_j
$$
with 
$$
D_{\kappa,j}\triangleq \frac{1}{n}\sum_{i\in\mathcal{C_\kappa,i\neq j}} v(\delta_i^{\kappa})y_iy_i^*, \hspace{0.2cm} C_{\kappa}=\frac{\psi_{\infty}}{nq_N^{-}f_{-}}\sum_{i\in\mathcal{C}_\kappa^{c}} \frac{1}{\tau_i+\alpha_N^{-}}y_iy_i^*
$$
Using the resolvent identity $D^{-1}-F^{-1}=D^{-1}(F-D)F^{-1}$ (for any invertible matrices $D$ and $F$) along with Cauchy-Schwartz inequality, we obtain:
$$
|w_{n,j}|\leq \sqrt{\frac{\psi(\delta_j^{\kappa})}{N\delta_j^{\kappa}}y_j^*(D_{\kappa,j}+C_{\kappa})^{-1}C_\kappa(D_{\kappa_j}+C_\kappa)^{-1}y_j}\sqrt{\frac{\psi(\delta_j^{\kappa})}{N\delta_j^{\kappa}}y_j^*D_{\kappa,j}^{-1}C_\kappa D_{\kappa,j}^{-1}y_j}.
$$
Note that for $\kappa$ small enough, matrix $D_{\kappa,j}$ is invertible. Besides, from assumption \ref{ass:measure}$-ii)$, for $\kappa$ small enough and for enough large $n$, $\nu_n(\left[m,M_\kappa\right]) \geq c_{+}$. Using Lemma \ref{lemma:smallest_eigenvalue} in Appendix \ref{app:technical}, we conclude that there exists $C_1$ such that $\min_j\lambda_1(D_{\kappa,j}) \geq C_1$. Since matrix $C_\kappa$ has a bounded spectral norm, Theorem \ref{app:convergence_quadratic} in Appendix \ref{app:technical} along with the rank-1 perturbation Lemma \cite[Lemma 2.6]{SilBai95} yields:
$$
\max_{j\in\mathcal{C}_\kappa}\left|\frac{\psi(\delta_j^{\kappa})}{N\delta_j^{\kappa}}y_j^*D_{\kappa,j}^{-1}C_\kappa D_{\kappa,j}^{-1}y_j-\frac{\psi(\delta_j^{\kappa})}{N\delta_j^{\kappa}}\tr (B_N+\tau_j I_N) D_\kappa^{-1}C_{\kappa}D_{\kappa}^{-1}\right|\asto 0,
$$
where $D_\kappa=D_{\kappa,j}+\frac{1}{n}v(\delta_n^{\kappa})y_jy_j^*$.
From $|\tr XY|\leq \|X\|\tr Y$ for positive definite $Y$, we have:
$$
\frac{\psi(\delta_j^{\kappa})}{N\delta_j^{\kappa}}\tr (B_N+\tau_j I_N) D_\kappa^{-1}C_{\kappa}D_{\kappa}^{-1}\leq \frac{\|B_N+\tau_j I_N\|}{C_1^2\delta_N^{+,\kappa_0}\psi_{\infty}(\tau_j+\eta_N^{-,\kappa_0})}\frac{1}{N}\sum_{i\in\mathcal{C}_\kappa^c} \frac{y_i^*y_i}{\tau_i+\alpha_N^{-}}
$$
where
$$
\frac{1}{n}\sum_{i\in\mathcal{C}_\kappa^c} \frac{1}{N}\frac{y_i^*y_i}{\tau_i+\alpha_N^{-}} -\frac{1}{n} \sum_{i\in\mathcal{C}_\kappa^c}\frac{1}{N}\tr\frac{B_N+\tau_i I_N}{\tau_i+\alpha_N^{-}}\asto 0.
$$
Since
$
\frac{1}{n}\sum_{i\in\mathcal{C}_\kappa^c} \frac{1}{N}\tr \frac{B_N+\tau_i I_N}{\tau_i+\alpha_N^{-}} \leq\frac{\|B_N\|+M_{\kappa_0}}{\alpha_N^{-}+M_{\kappa_0}}\frac{\nu_n(M_\kappa,\infty)}{n} 
$ for all $\kappa \leq \kappa_0$ and for all large $n$ a.s., we have:
$$
\max_{j\in\mathcal{C}_\kappa}\frac{\psi(\delta_j^{\kappa})}{N\delta_j^{\kappa}}y_j^*D_{\kappa,j}^{-1}C_\kappa D_{\kappa,j}^{-1}y_j \leq K_1 \nu_n(M_\kappa,+\infty),
$$
where $K_1$ is a constant that does not depend on $\kappa\leq \kappa_0$. In the same way, we can control the term $\frac{\psi(\delta_j^\kappa)}{N\delta_j^\kappa}y_j^*(D_{\kappa,j}+C_\kappa)^{-1}C_\kappa (D_{\kappa,j}+C_\kappa)^{-1}y_j$. Finally, we conclude that:
\begin{equation}
\max_{j\in\mathcal{C}_\kappa} |w_{j,n}|\leq K\nu_n(M_\kappa,+\infty)
\label{eq:max_w}
\end{equation}
for some constant $K$ independent of $\kappa\leq \kappa_0$.
Quantities $w_{j,n}$ being controlled for $j\in\mathcal{C}_\kappa$, we can now proceed in a similar way as in the case of the bounded $\tau_i$ case.   Lemma \ref{lemma:quadratic_form_total_kappe} implies that for any fixed $\kappa>0$ there exists a sequence $\epsilon_n^{\kappa}\downarrow 0$ such that a.s. for $n$ large enough,
\begin{equation}
\max_{j\in\mathcal{C}_\kappa} \left|\frac{\psi(\delta_j^{\kappa})}{N\delta_j^{\kappa}}y_j^*\left(\frac{1}{n}\sum_{i\in\mathcal{C}_\kappa}\frac{\psi(\delta_i^{\kappa})}{\delta_i^{\kappa}}y_iy_i^*\right)^{-1}y_j - {\psi(\delta_j^{\kappa})}\right|\leq \epsilon_n^{\kappa}.
\label{eq:max_quadratic}
\end{equation}
Combining \eqref{eq:e_jc}, \eqref{eq:max_w} and \eqref{eq:max_quadratic}, we then have for all large $n$ a.s. and for all $j\in\mathcal{C}_\kappa$,
$$
f_{j}^{\kappa} \leq \frac{v\left[\frac{\delta_j^{\kappa}}{\psi(\delta_j^{\kappa})f_{\overline{n}}^{\kappa}}\max\left(\psi(\delta_j^{\kappa})-\epsilon_n^{\kappa}-K\nu_n(M_\kappa,\infty),0\right)\right]}{v(\delta_j^{\kappa})}
$$
which for $j=\overline{n}$ becomes:
\begin{equation}
f_{\overline{n}}^{\kappa} \leq \frac{v\left[\frac{\delta_{\overline{n}}^{\kappa}}{\psi(\delta_{\overline{n}}^{\kappa})f_{\overline{n}}^{\kappa}}\max\left(\psi(\delta_{\overline{n}}^{\kappa})-\epsilon_n^{\kappa}-K\nu_n(M_\kappa,\infty),0\right)\right]}{v(\delta_{\overline{n}}^{\kappa})}.
\label{eq:unbounded_case}
\end{equation}
Assume that $\lim\sup f_{\overline{n}} >1+\ell$ for some $\ell >0$. Let us restrict the sequence $f_{\overline{n}}$ to those indexes for which $f_{\overline{n}}>1+\ell$. Similar to the case of bounded $\tau_i$, we can see that \eqref{eq:unbounded_case} implies that $\lim\inf \delta_{\overline{n}}^{\kappa} >\delta_{\rm min}$, a bound which can be chosen independent of $\kappa\leq \kappa_0$. In effect, from \eqref{eq:unbounded_case}. we have:
$$
f_{\overline{n}}^{\kappa} \leq \frac{v(0)}{v(\delta_{\overline{n}}^{\kappa})}
$$
which is equivalent to:
$$
v(\delta_{\overline{n}}^{\kappa}) \leq \frac{v(0)}{\ell+1},
$$
or also:
$$
\delta_{\overline{n}}^{\kappa} \geq v^{-1}\left(\frac{v(0)}{\ell+1}\right)
$$
Using the definition of $\psi$, \eqref{eq:unbounded_case} reads for $\kappa$ sufficiently small:
$$
\frac{\psi\left(\frac{\delta_{\overline{n}}^{\kappa}}{f_{\overline{n}}^{\kappa}}\left(1-\frac{\epsilon_n^{\kappa}}{\psi(\delta_{\overline{n}}^{\kappa})-\frac{K\nu_n(M_\kappa,\infty)}{\psi(\delta_{\overline{n}}^{\kappa})}}\right)\right)}{\psi(\delta_{\overline{n}}^{\kappa})\left(1-\frac{\epsilon_n^{\kappa}}{\psi(\delta_{\overline{n}}^{\kappa})}-\frac{K\nu_n(M_\kappa),\infty}{\psi(\delta_{\overline{n}}^{\kappa})}\right)} \geq 1
$$
or also for $n$ large enough:
$$
\frac{\psi(\delta_{\overline{n}}^{\kappa})\left(1-\frac{\epsilon_n^{\kappa}}{\psi(\delta_{\overline{n}}^{\kappa})}-\frac{K\nu_n(M_\kappa),\infty}{\psi(\delta_{\overline{n}}^{\kappa})}\right)}{\psi((f_{\overline{n}}^{\kappa})^{-1}\delta_{\overline{n}}^{\kappa})} \leq 1.
$$
Hence,
$$
\frac{\psi(\delta_{\overline{n}}^{\kappa})-\psi\left((f_{\overline{n}}^{\kappa})^{-1}\right)
}{\psi\left((f_{\overline{n}}^{\kappa})^{-1}\right)}\left(1-\frac{\epsilon_{n}^{\kappa}}{\psi(\delta_{\overline{n}^{\kappa}})}-\frac{K\nu_n(M_\kappa,\infty)}
{\psi(\delta_{\overline{n}}^{\kappa})}\right)\leq \frac{\epsilon_n^{\kappa}}{\psi(\delta_{\overline{n}}^{\kappa})} +\frac{K\nu_n(M_\kappa,\infty)}{\psi(\delta_{\overline{n}}^{\kappa})}.
$$
or equivalently:
$$
\frac{\psi(\delta_{\overline{n}}^{*,\kappa})-\psi\left(\left(f_{\overline{n}}^{*,\kappa}\right)^{-1}\delta_{\overline{n}}^{\kappa}\right)}{\epsilon_n^{\kappa}+K\nu_n(M_\kappa,+\infty)}\leq \frac{\psi\left(\left(f_{\overline{n}}^{\kappa}\right)^{-1}\delta_{\overline{n}}^{*,\kappa}\right)}{\psi(\delta_{\overline{n}}^{*,\kappa})\left(1-\frac{\epsilon_n^{\kappa}}{\psi(\delta_{\overline{n}}^{*,\kappa})}-\frac{K\nu_n(M_\kappa,+\infty)}{\psi(\delta_{\overline{n}}^{*,\kappa})}\right)}.
$$
Since $f_{\overline{n}}^{\kappa}>1+\ell$, $\psi\left((f_{\overline{n}}^{\kappa})^{-1}\delta_{\overline{n}}^{\kappa}\right) <\psi(\delta_{\overline{n}}^{\kappa})$. Therefore, for $\kappa$ chosen sufficiently small so that: $1-\frac{\epsilon_n^{\kappa}}{\psi(\delta_{\overline{n}}^{\kappa})}-\frac{K\nu_n(M_\kappa,\infty)}{\psi(\delta_{\overline{n}}^{\kappa})}\leq \frac{1}{2}$, we have:
\begin{equation}
\frac{\psi(\delta_{\overline{n}}^{*,\kappa})-\psi\left(\left(f_{\overline{n}}^{*,\kappa}\right)^{-1}\delta_{\overline{n}}^{\kappa}\right)}{\epsilon_n^{\kappa}+K\nu_n(M_\kappa,+\infty)} \leq 2.
\label{eq:limit_taken}
\end{equation}
Since $\delta_{\overline{n}}^{\kappa}$ belongs to the interval $\left[\delta_{\rm min},\lim\sup\delta_N^{+}(M_\kappa)+\eta_N^{+}\right]$ for $n$ large enough, taking the limit of \eqref{eq:limit_taken} over some  subsequences  over which $\delta_{\overline{n}}^{\kappa}\to x_{\kappa}\in\left[\delta_{\rm min},\lim\sup\delta_N^{+}(M_\kappa)+\eta_N^{+}\right] $, $c_N\to c$ and $\nu_n(M_\kappa,\infty)$ converges, we obtain:
\begin{equation}
\frac{\psi_c(x_\kappa)-\psi_c(\frac{x_\kappa}{1+\ell})}{\lim_n\nu_n(M_\kappa,\infty)} \leq 2,
\label{eq:contradiction_kappa}
\end{equation}
where $\psi_c=\lim_{c_N\to c}\psi_N$. We now operate on $\kappa$.  If $\lim\sup_{\kappa\to 0} x_\kappa <\infty$, the left-hand side of \eqref{eq:contradiction_kappa} goes to $+\infty$ as $\kappa\to 0$ so that starting from $\kappa$ sufficiently small and taking the limit over $n$ on the considered subsequence raises a contradiction. If instead $\lim\sup_{\kappa\to 0}x_{\kappa}=+\infty$, then since $x_\kappa \leq 2\lim\sup \delta_N^{+}M_\kappa$, we have:
$$
\frac{\psi_c(x_\kappa)-\psi_c(\frac{x_\kappa}{1+\ell})}{\lim_n\nu_n(M_\kappa,\infty)} \geq \frac{\psi_c(x_\kappa)-\psi_c(\frac{x_\kappa}{1+\ell})}{\lim_n\nu_n(\frac{x_\kappa}{\lim\sup \delta_N^{+}},\infty)}.
$$
Let $y_\kappa=g_c^{-1}(x_\kappa)$ with $g_c(x)=\frac{x}{1-c\phi(x)}$. Recall that $\psi_c=\frac{\phi\circ g_c^{-1}}{1-c\phi\circ g_c^{-1}}$.
Then,
\begin{align*}
\frac{\psi_c(x_\kappa)-\psi_c(\frac{x_\kappa}{\ell+1})}{\lim_n \nu_n\left(\frac{x_\kappa}{\lim\sup\delta_N^{+}},\infty\right)} &=\frac{\phi\circ g_c^{-1}(x_\kappa)-\phi\circ g_c^{-1}(\frac{x_\kappa}{\ell+1})}{(1-c\phi\circ g_c^{-1}(x_\kappa))(1-c\phi\circ g_c^{-1}(\frac{x_\kappa}{\ell+1}))\lim_n \nu_n\left(\frac{x_\kappa}{\lim\sup\delta_N^{+}},\infty\right)} \\
&\geq \frac{\phi(y_\kappa)-\phi\circ g_c^{-1}(\frac{g_c(y_\kappa)}{\ell+1})}{\lim_n \nu_n\left(\frac{x_\kappa}{\lim\sup\delta_N^{+}},\infty\right)}\\
&\geq \frac{\phi(y_\kappa)-\phi\circ g_c^{-1}(\frac{y_\kappa}{(\ell+1)(1-c_{+}\phi_{\infty})})}{\lim_n \nu_n\left(\frac{x_\kappa}{\lim\sup\delta_N^{+}},\infty\right)}.
\end{align*}
Since $y_\kappa\to\infty$ as $x_\kappa \to \infty$, from Assumption \ref{ass:unbounded}, the right-hand side must go to $\infty$ as $x_\kappa \to \infty$. Therefore, taking $\kappa$ sufficiently small and then consider the limist over $n$ on the subsequence under consideration raises a contradiction. Consequently, we must have $\lim\sup f_{\overline{n}}^{\kappa} \leq 1+\ell$ a.s. A similar reasoning allows to show that $\lim\inf f_{\overline{1}}^{\kappa}\geq 1-\ell$ a.s. for any given $\ell>0$. We conclude thus:
$$
\max_{j\in\mathcal{C}_\kappa}\left|f_{j}^{\kappa}-1\right|\asto 0.
$$
We will now deal with $f_j^{\kappa}$ for $j\in\mathcal{C}_{\kappa}^{c}$. Recall that $f_j$ is given by:
\begin{align*}
f_j&=\frac{v\left(\frac{1}{N}y_j^* \left(\frac{1}{n}\sum_{i\in\mathcal{C}_\kappa}f_j^{\kappa} v(\delta_j^{\kappa})y_iy_i^*+\frac{1}{n}\sum_{i\in\mathcal{C}^{c},i\neq j} v(q_i)y_iy_i^*\right)^{-1}y_j\right)}{v(\delta_j^{\kappa})}\\
&=\frac{v\left(\frac{\delta_j^{\kappa}}{\psi(\delta_j^{\kappa})}\frac{\psi(\delta_j^{\kappa})}{\delta_j^{\kappa}}\frac{1}{N}y_j^* \left(\frac{1}{n}\sum_{i\in\mathcal{C}_\kappa}f_j^{\kappa} v(\delta_j^{\kappa})y_iy_i^*+\frac{1}{n}\sum_{i\in\mathcal{C}^{c},i\neq j} v(q_i)y_iy_i^*\right)^{-1}y_j\right)}{v(\delta_j^{\kappa})}\\
&=\frac{v\left(\frac{\delta_j^{\kappa}}{\psi(\delta_j^{\kappa})}\left[\frac{\psi(\delta_j^{\kappa})}{\delta_j^{\kappa}}\frac{1}{N}y_j^* \left(\frac{1}{n}\sum_{i\in\mathcal{C}_\kappa}f_i^{\kappa} v(\delta_i^{\kappa})y_iy_i^*\right)^{-1}y_j+\tilde{w}_{j,n}\right]\right)}{v(\delta_j^{\kappa})},
\end{align*}
where
$$
\tilde{w}_{j,n}=\frac{\psi(\delta_j^{\kappa})}{N\delta_j^{\kappa}} y_j^{*}\left(\tilde{D}_{\kappa}+\tilde{C}_{\kappa,j}\right)^{-1}y_j - \frac{\psi(\delta_j^{\kappa})}{N\delta_j^{\kappa}} y_j^{*}\tilde{D}_{\kappa}^{-1} y_j
$$ 
with:
\begin{align*}
\tilde{D}_\kappa&=\frac{1}{n}\sum_{i\in\mathcal{C}_\kappa}v(\delta_i^{\kappa})y_iy_i^* \\
\tilde{C}_{\kappa,j}&=\frac{1}{n}\sum_{\substack{i\in\mathcal{C}_\kappa\\ i\neq j}}v(\delta_i^{\kappa})y_iy_i^*.
\end{align*}
Using the same reasnoning as with $w_{j,n}$ we can show that for $\kappa$ sufficiently small and $n$ large enough,
$$
\max_{j\in\mathcal{C}_\kappa^{c}} |\tilde{w}_{j,n}| \leq K \nu_n(M_\kappa,\infty) \leq K \kappa
$$
with $K$ independent of $\kappa\leq \kappa_0$. On the other hand, since $\max_{i\in \mathcal{C}_\kappa^{c}} |f_i^\kappa -1|\asto 0$, we have:
$$
\frac{\psi(\delta_j^{\kappa})}{\delta_{j}^{\kappa}}\frac{1}{N}y_j^{*}\left(\frac{1}{n}\sum_{i\in\mathcal{C}_\kappa}f_i^{\kappa}v(\delta_i^{\kappa})y_iy_i^*\right)^{-1}y_j-\frac{\psi(\delta_j^{\kappa})}{\delta_{j}^{\kappa}}\frac{1}{N}y_j^{*}\left(\frac{1}{n}\sum_{i\in\mathcal{C}_\kappa}v(\delta_i^{\kappa})y_iy_i^*\right)^{-1}y_j\stackrel{a.s.}{\to}0.
$$
As a consequence, for $\kappa$ sufficiently small and $n$ large enough:
$$
\max_{j\in\mathcal{C}_\kappa^{c}}\left|\frac{\psi(\delta_j^{\kappa})q_j}{\delta_j^{\kappa}}-\psi(\delta_j^{\kappa})\right| \leq \kappa^{'},
$$
where $\lim_{\kappa\to 0}\kappa^{'}=0$. Now, write $f_j^\kappa{}$ as:
$$
f_j=\frac{\psi\left(\frac{\delta_j^{\kappa}}{\psi(\delta_j^{\kappa})\frac{\psi(\delta_j^{\kappa})}{\delta_j^{\kappa}}q_j}\right)}{\frac{\psi(\delta_j^{\kappa})}{\delta_j^{\kappa}}q_j}.
$$
Then, one can easily note that:
$$
\lim_{\kappa\to 0}\lim\sup_n \max_{j\in\mathcal{C}_\kappa^{c}}\left\{\left|f_{j}^{\kappa}-1\right|\right\}\to 0.
$$
Combining the results for $j\in\mathcal{C}_\kappa$ and $j\in\mathcal{C}_{\kappa}^{c}$, we conclude that for each $\ell>0$, there exists $\kappa>0$ small enough such that a.s.,
$$
(1-\ell)\frac{1}{n}\sum_{i=1}^n \frac{\psi(\delta_i^{\kappa})}{\delta_i^{\kappa}}y_iy_i^{*} \preceq \frac{1}{n}\sum_{i=1}^n v(q_i)y_iy_i^{*}\preceq (1+\ell)\frac{1}{n}\sum_{i=1}^n \frac{\psi(\delta_i^{\kappa})}{\delta_i^{\kappa}}y_iy_i^*.$$
It remains thus to show that for each $\varepsilon >0$, there exists $\kappa_0$ such that for any $\kappa\leq \kappa_0$ and all large $n$, 
$$
\max_j\left|1-\frac{\delta_j}{\delta_j^{\kappa}}\right| \leq \varepsilon.
$$
Recall that $(\delta_1^{\kappa},\cdots,\delta_n^{\kappa})$ are given by:
$$
(\delta_1^{\kappa},\cdots,\delta_n^{\kappa})=\lim_{t\to\infty} (\delta_1^{\kappa}(t),\cdots,\delta_n^{\kappa}(t))
$$
with $\delta_1^{\kappa}(0),\cdots,\delta_n^{\kappa}(0)$ are arbitrary and:
$$
\delta_j^{\kappa}(t+1)=\frac{1}{N}\tr (B_N+\tau_j I_N)\left(\frac{1}{n}\sum_{\tau_i\leq M_\kappa}\frac{\phi\circ g_N^{-1}(\delta_i^{\kappa}(t))}{\delta_i^{\kappa}(t)}(B_N+\tau_i I_N)\right)^{-1},
$$
where we used the relation $\frac{\psi_N}{1+c_N\psi_N}=\phi\circ g_N^{-1}$. Set for $t=0$, $\delta_i^{\kappa}=\delta_i,i=1,\cdots,n$. We will prove by induction on $t$ that $\delta_j\leq \delta_j^{\kappa}(t)$ for all $j=1,\cdots,n$, thereby showing that $\delta \leq \delta_j^{\kappa}$. 
Obviously, the desired result holds for $t=0$. Assume now that for all $t\leq k$, $\delta_j^{\kappa}(t)\geq \delta_j$, and let us show that  $\delta_j^{\kappa}(k+1)\geq \delta_j$. Since $x\mapsto \frac{\phi\circ g_N^{-1}(x)}{x}$ is non-increasing and $\delta_i^{\kappa}(k)\geq \delta_i$, we have:
$$
 \frac{\phi\circ g_N^{-1}(\delta_i^{\kappa}(k))}{\delta_i^{\kappa}(k)} \leq \frac{\phi\circ g_N^{-1}(\delta_i)}{\delta_i}.
$$
Hence,
\begin{align*}
\delta_j^{\kappa}(k+1)&=\frac{1}{N}\tr (B_N+\tau_j I_N)\left(\frac{1}{n}\sum_{\tau_i\leq M_\kappa} \frac{\phi\circ g_N^{-1}(\delta_i^{\kappa}(k))}{\delta_i^{\kappa}(k)} (B_N+\tau_i I_N)\right)^{-1} \\
&\geq \frac{1}{N}\tr (B_N+\tau_j I_N)\left(\frac{1}{n}\sum_{i=1}^n \frac{\phi\circ g_N^{-1}(\delta_i)}{\delta_i} (B_N+\tau_i I_N)\right)^{-1}\\
&=\delta_j.
\end{align*}
We are now in position to control the convergence of $\max_{1\leq j\leq n} \left|1-\frac{\delta_j}{\delta_j^{\kappa}}\right|$ as $\kappa\to 0$. 
In particular, we recall that we need to prove that for each $\varepsilon >0$, there exists $\kappa_0$ such that: 
$$
\max_j\left|1-\frac{\delta_j}{\delta_j^{\kappa}}\right| \leq \varepsilon.
$$
To this end, define the maps $T_N$, $T_N^{M}$ as: 
%note first from the fixed equations satisfied by quantities $\delta_1,\cdots,\delta_n$ and $\delta_1^{\kappa},\cdots,\delta_n^{\kappa}$ that:
%\begin{equation}
%0=\frac{1}{N}\tr \frac{B_N+\tau_j I_N}{\delta_j}T_N(\delta_1,\cdots,\delta_n)-\frac{1}{N}\tr \frac{B_N+\tau_j I_N}{\delta_j^{\kappa}}T_N^{\kappa}(\delta_1^{\kappa},\cdots,\delta_n^{\kappa})
%\label{eq:zero_equation}
%\end{equation}
%where:
 $$T_N: (x_1,\cdots,x_n)\mapsto \left(\frac{1}{n}\sum_{i=1}^n \frac{\phi\circ g_N^{-1}(x_i)}{x_i}\left(B_N+\tau_i I_N\right)\right)^{-1}$$
and
$$T_N^{M}: (x_1,\cdots,x_n)\mapsto \left(\frac{1}{n}\sum_{\substack{i=1 \\\tau_i\leq M}}^n \frac{\phi\circ g_N^{-1}(x_i)}{x_i}\left(B_N+\tau_i I_N\right)\right)^{-1}.$$
From Lemma \ref{lemma:extension_boundedness} and \ref{lemma:control_boundedness}, it is easy to see that the spectral norms of $T_N(\delta_1,\cdots,\delta_n)$ and $T_N^{M_\kappa}(\delta_1^{\kappa},\cdots,\delta_n^{\kappa})$ are uniformly bounded. Note that:
$$
\left\|T_N(\delta_1,\cdots,\delta_n)-T_N^{M}(\delta_1,\cdots,\delta_n)\right\| \leq \left\|T_N^M(\delta_1,\cdots,\delta_n)\right\|^2\phi_{\infty}\frac{\limsup\|B_N\|}{\lim\inf\delta_N^{-}\lim\inf \eta_N^{-}}\nu_n(M,\infty).
$$
and $$
\left\|T_N^{M_\kappa}(\delta_1^{\kappa},\cdots,\delta_n^{\kappa})-T_N^{M}(\delta_1^{\kappa},\cdots,\delta_n^{\kappa})\right\| \leq \left\|T_N^M(\delta_1^{\kappa},\cdots,\delta_n^{\kappa})\right\|^2\phi_{\infty}\frac{\limsup\|B_N\|}{\lim\inf\delta_N^{-,\kappa}\lim\inf \eta_N^{-,\kappa}}\nu_n(M,M_\kappa)
$$
for any $M_\kappa \geq M$.
Setting $M$ large enough so that $\lim\inf\nu_n(m,M)>0$, we get:
$$
\max\left(\left\|T_N^{M}(\delta_1,\cdots,\delta_n)\right\|,\left\|T_N^{M}(\delta_1^{\kappa},\cdots,\delta_n^{\kappa})\right\|\right)\leq \frac{\max\left(\limsup(m+\eta_N^{+})\delta_N^{+},\limsup (m+\eta_N^{+,\kappa})\delta_N^{+,\kappa} \right)}{m\phi\circ g_N^{-1}(m)(1-\lim\sup\nu_n(M,\infty))}.
$$
Therefore, one can fix $M$ sufficiently large in such a way that:
\begin{equation}
\limsup_N\left\|T_N(\delta_1,\cdots,\delta_n)-T_N^{M}(\delta_1,\cdots,\delta_n)\right\| \leq \frac{\varepsilon}{3}
\label{eq:fromM_1}
\end{equation}
and
\begin{equation}
\limsup_N\left\|T_N^{M_\kappa}(\delta_1^{\kappa},\cdots,\delta_n^{\kappa})-T_N^{M}(\delta_1^{\kappa},\cdots,\delta_n^{\kappa})\right\| \leq \frac{\varepsilon}{3}.
\label{eq:fromM_2}
\end{equation}
With this value of $M$ at hand, we will now prove that:
$$
\lim_{\kappa \to 0} \lim\sup_N\left\|T_N^{M}(\delta_1^{\kappa},\cdots,\delta_n^{\kappa})- T_N^{M}(\delta_1,\cdots,\delta_n)\right\|= 0.
$$
To this end, we will work out the differences $\frac{\delta_j^{\kappa}-\delta_j}{\delta_j}$. We have:
\begin{align*}
&\frac{\delta_j^{\kappa}-\delta_j}{\delta_j^{\kappa}}=\\
&\frac{1}{N}\tr \frac{B_N+\tau_j I_N}{\delta_j^{\kappa}}T_N^{M_\kappa}(\delta_1^{\kappa},\cdots,\delta_n^{\kappa})\left[\frac{1}{n}\sum_{\tau_i\leq M_\kappa} \left(\frac{\phi\circ g_N^{-1}(\delta_i)}{\delta_i}-\frac{\phi\circ g_N^{-1}(\delta_i^{\kappa})}{\delta_i^{\kappa}}\right)(B_N+\tau_i I_N)\right]T_N(\delta_1,\cdots,\delta_n)\\
&+\frac{1}{N}\tr \frac{B_N+\tau_j I_N}{\delta_j^{\kappa}}T_N^{M_\kappa}(\delta_1^{\kappa},\cdots,\delta_n^{\kappa})\left[\frac{1}{n}\sum_{\tau_i\geq M_\kappa} \frac{\phi\circ g_N^{-1}(\delta_i)}{\delta_i}\right]T_N(\delta_1,\cdots,\delta_n)\\
&=\frac{1}{N}\tr \frac{B_N+\tau_j I_N}{\delta_j^{\kappa}}T_N^{M_\kappa}(\delta_1^{\kappa},\cdots,\delta_n^{\kappa})\left[\frac{1}{n}\sum_{\tau_i\leq M_\kappa} \left(\frac{\phi\circ g_N^{-1}(\delta_i)-\phi\circ g_N^{-1}(\delta_i^{\kappa})}{\delta_i^{\kappa}}\right)(B_N+\tau_i I_N)\right]T_N(\delta_1,\cdots,\delta_n)\\
&+\frac{1}{N}\tr \frac{B_N+\tau_j I_N}{\delta_j^{\kappa}}T_N^{M_\kappa}(\delta_1^{\kappa},\cdots,\delta_n^{\kappa})\left[\frac{1}{n}\sum_{\tau_i\leq M_\kappa} \phi\circ g_N^{-1}(\delta_i)\left[\frac{1}{\delta_i}-\frac{1}{\delta_i^{\kappa}}\right](B_N+\tau_i I_N)\right]T_N(\delta_1,\cdots,\delta_n)\\
&+\frac{1}{N}\tr \frac{B_N+\tau_j I_N}{\delta_j^{\kappa}}T_N^{M_\kappa}(\delta_1^{\kappa},\cdots,\delta_n^{\kappa})\left[\frac{1}{n}\sum_{\tau_i\geq M_\kappa} \frac{\phi\circ g_N^{-1}(\delta_i)}{\delta_i}(B_N+\tau_i I_N)\right]T_N(\delta_1,\cdots,\delta_n)\\
&\triangleq \alpha_{1,j}+\alpha_{2,j}+\alpha_{3,j}.
\end{align*}
Note that $\alpha_2$ can be bounded as:
$$
\alpha_{2,j} \leq \max_{i}\left|\frac{\delta_i^{\kappa}-\delta_i}{\delta_i^{\kappa}}\right|.
$$
Let $j_0$ be the index of the maximum element of $\frac{\delta_i^{\kappa}-\delta_i}{\delta_i^{\kappa}}, i=1,\cdots,n$. Therefore:
$$
-\alpha_{1,j_0} \leq \alpha_{3,j_0}
$$ 
or equivalently,
\begin{align*}
&\frac{1}{N}\tr \frac{B_N+\tau_{j_0} I_N}{\delta_{j_0}^{\kappa}}T_N^{M_\kappa}(\delta_1^{\kappa},\cdots,\delta_n^{\kappa})\left[\frac{1}{n}\sum_{\tau_i\leq M_\kappa} \left(\frac{\phi\circ g_N^{-1}(\delta_i^{\kappa})-\phi\circ g_N^{-1}(\delta_i)}{\delta_i^{\kappa}}\right)(B_N+\tau_i I_N)\right]T_N(\delta_1,\cdots,\delta_n)\\
&\leq 
\frac{1}{N}\tr \frac{B_N+\tau_{j_0} I_N}{\delta_{j_0}^{\kappa}}T_N^{M_\kappa}(\delta_1^{\kappa},\cdots,\delta_n^{\kappa})\left[\frac{1}{n}\sum_{\tau_i\geq M_\kappa} \frac{\phi\circ g_N^{-1}(\delta_i)}{\delta_i}(B_N+\tau_i I_N)\right]T_N(\delta_1,\cdots,\delta_n)
\end{align*}
Hence,% using the relation $\lambda_1(A)\frac{1}{N} \tr B\leq \frac{1}{N}\tr AB \leq\lambda_N(A)\frac{1}{N}\tr B$ for $A$ and $B$ positive definite, we obtain:
\begin{align*}
& \frac{1}{N}\tr \frac{B_N+\tau_{j_0} I_N}{\delta_{j_0}^{\kappa}}T_N^{M_\kappa}(\delta_1^{\kappa},\cdots,\delta_n^{\kappa})\left[\frac{1}{n}\sum_{\tau_i\leq M} \left(\frac{\phi\circ g_N^{-1}(\delta_i^{\kappa})-\phi\circ g_N^{-1}(\delta_i)}{\delta_i^{\kappa}}\right)(B_N+\tau_i I_N)\right]T_N(\delta_1,\cdots,\delta_n)\\
\\
&\leq \left\|T_N(\delta_1,\cdots,\delta_n)\right\| \frac{\phi_{\infty}\nu_n(M_\kappa,\infty)\lim\sup\|B_N\|}{\lim\inf \eta_N^{-}\delta_N^{-}}
\end{align*}
The right-hand side in the above inequality converges to zero as $\kappa\to0$. This is possible if and only if:
\begin{equation}
\frac{1}{n}\sum_{\tau_i\leq M} \left(\frac{\phi\circ g_N^{-1}(\delta_i^{\kappa})-\phi\circ g_N^{-1}(\delta_i)}{\delta_i^{\kappa}}\right)(B_N+\tau_i I_N)\xrightarrow[\kappa\to 0]{}0.
\label{eq:convergence_1}
\end{equation}
Function $x\mapsto \phi\circ g_N^{-1}(x)$ is continuously differentiable. Therefore, by the mean value theorem, 
$$
\frac{\phi\circ g_N^{-1}(\delta_i^{\kappa})-\phi\circ g_N^{-1}(\delta_i)}{\delta_i^{\kappa}}=\left(\phi \circ g_N^{-1}\right)'(\xi_{i}^{\kappa})\frac{\delta_i^{\kappa}-\delta_i}{\delta_i^{\kappa}},
$$ 
where $\left(\phi \circ g_N^{-1}\right)'$ denotes the derivative of $\phi\circ g_N^{-1}$ and $\xi_i^{\kappa}\in \left[\delta_i,\delta_i^{\kappa}\right]$. Now, since for $n$ large enough, $\min_{i,\tau_i\leq M}\delta_i \geq a\triangleq\lim\inf \delta_N^{-}\eta_N^{-}$ and $\max_{i,\tau_i\leq M} \delta_i^{\kappa} \leq b\triangleq\lim\sup \delta_N^{+,\kappa}(\eta_N^{+,\kappa}+M)$, we obtain:
$$
\frac{\phi\circ g_N^{-1}(\delta_i^{\kappa})-\phi\circ g_N^{-1}(\delta_i)}{\delta_i^{\kappa}}\geq \inf_{x\in[a,b]} \left(\phi \circ g_N^{-1}\right)'(x) \frac{\delta_i^{\kappa}-\delta_i}{\delta_i^{\kappa}}
$$
Since $\inf_{x\in[a,b]} (\phi\circ g_N^{-1})'>0$ by Assumption \ref{ass:u}-$ii)$, we get:
\begin{equation}
\frac{1}{n}\sum_{\tau_i \leq M} \frac{\delta_i^{\kappa}-\delta_i}{\delta_i^{\kappa}}(B_N+\tau_i I_N)\xrightarrow[\kappa\to 0]{}0.
\label{eq:convergence_2}
\end{equation}
Using the convergences \eqref{eq:convergence_1} and \eqref{eq:convergence_2}, we can prove that:
$$
\lim_{\kappa\to 0}\lim\sup_N\left\|T_N^{M}(\delta_1^{\kappa},\cdots,\delta_n^{\kappa})-T_N^{M}(\delta_1,\cdots,\delta_n)\right\|\to 0.
$$
This can be easily seen by noting that:
\begin{align*}
&T_N^{M}(\delta_1,\cdots,\delta_n)-T_N^{M}(\delta_1^{\kappa},\cdots,\delta_n^{\kappa})\\
&=T_N^{M}(\delta_1,\cdots,\delta_n)\frac{1}{n}\sum_{\tau_i\leq M}\left\{\frac{\phi\circ g_N^{-1}(\delta_i^{\kappa})-\phi\circ g_N^{-1}(\delta_i)}{\delta_i^{\kappa}}+\phi\circ g_N^{-1}(\delta_i)\left[\frac{1}{\delta_i^{\kappa}}-\frac{1}{\delta_i}\right]\right\}(B_N+\tau_i I_N)\\
&\times T_N^{M}(\delta_1^{\kappa},\cdots,\delta_n^{\kappa}).
\end{align*}
One can thus choose $\kappa_0$ in such a way that for all $\kappa \leq \kappa_0$
$$
\lim\sup\left\|T_N^{M}(\delta_1^{\kappa},\cdots,\delta_n^{\kappa})-T_N^{M}(\delta_1,\cdots,\delta_n)\right\|\leq \frac{\varepsilon}{3}.
$$
From \eqref{eq:fromM_1} and \eqref{eq:fromM_2}, we therefore get for all $\kappa\leq \kappa_0$
$$
\lim\sup_N \left\|T_N(\delta_1,\cdots,\delta_n)-T_N^{M_\kappa}(\delta_1^{\kappa},\cdots,\delta_n^{\kappa})\right\| \leq \epsilon.
$$
In an equivalent way, we therefore have, for each $\varepsilon >0$, there exists $\kappa_0$ such that for any $\kappa\leq \kappa_0$ and all large $n$,
$$
\max_{1\leq j\leq n}\left|1-\frac{\delta_j}{\delta_j^{\kappa}}\right| \leq \varepsilon.
$$
Using this result, we will show that for each $\ell>0$, there exist $\kappa>0$ small enough, such that a.s.,
$$
(1-\ell)\frac{1}{n}\sum_{i=1}^n \frac{\psi_N(\delta_i)}{\delta_i}y_iy_i^* \preceq \frac{1}{n}\sum_{i=1}^n \frac{\psi_N(\delta_i^{\kappa})}{\delta_i^{\kappa}}y_iy_i^*   \preceq(1+\ell)\frac{1}{n}\sum_{i=1}^n \frac{\psi_N(\delta_i)}{\delta_i}y_iy_i^*.
$$
To this end, it suffices to show that for each $\varepsilon >0$  and $\kappa$ small enough:
$$
\max_{1\leq i\leq n}\left|\psi_N(\delta_i)-\psi_N(\delta_i^{\kappa})\right|\leq \varepsilon.
$$
If this was not true, then one can find a sequence $(n)$ over which:
\begin{equation}
\max_{1\leq i\leq (n)}\left|\psi_{N(n)}(\delta_i)-\psi_{N(n)}(\delta_i^{\kappa})\right| \geq \epsilon.
\label{eq:contradiction_with}
\end{equation}
for any small $\kappa$. Since the sequence function $\psi_N$ converge uniformly, one can extract a subsequence $(p)$ from $(n)$ such that $c_{(p)}\to c^{*}$ and $\psi_{(p)}$ converge uniformly to $\psi^{*}$. On the other hand, we know that for any arbitrairly small $r$ there exists $\kappa_0$ such that for any $\kappa \leq \kappa_0$ and for all large $n$,
$$
\max_{1\leq j\leq n}\left|1-\frac{\delta_j}{\delta_j^{\kappa}}\right| \leq  r.
$$
or also, for all $j=1,\cdots,n$, 
$$
\delta_j \geq (1-r)\delta_j^{\kappa}.
$$
Let $x_0$ be such that $\psi^{*}(x_0(1-r))\geq \psi_{\infty}(1-\epsilon/3)$. Since $\psi^{*}$ is increasing and bounded at infinity by $\psi_{\infty}$, we have, for any $x,y\geq x_0(1-r)$, 
$$
\left|\psi^{*}(x)-\psi^{*}(y)\right| \leq \frac{\epsilon}{3}.
$$
Consider the indices $i$ such that $\delta_i^{\kappa}\geq x_0$, and thus $\delta_i\geq x_0(1-r)$. Take $n$ large enough such that:
$$
\|\psi_N-\psi^{*}\|\leq \frac{\epsilon}{3}.
$$
Then, for those indices, one can prove that:
\begin{align*}
\max_{\substack{1\leq j\leq (n)\\
\delta_j^{\kappa}\geq x_0 }}\left|\psi_{N(n)}(\delta_j)-\psi_{N(n)}(\delta_j^{\kappa})\right|&\leq \max_{\substack{1\leq j\leq (n)\\
\delta_j^{\kappa}\geq x_0 }} \left|\psi_{N(n)}(\delta_j)-\psi^{*}(\delta_j)\right|\\
&+\left|\psi^{*}(\delta_j)-\psi^{*}(\delta_j^{\kappa})\right|+\left|\psi^{*}(\delta_j^{\kappa})-\psi_{N(n)}(\delta_i^{\kappa})\right|\leq \epsilon.
\end{align*}
Consider now the indices $i$ such that   $\delta_i^{\kappa}\leq x_0$. For those indices, we have:
\begin{align*}
\psi(\delta_i^{\kappa})-\psi(\delta_i) &u(0)\left(\leq \delta_i^{\kappa}-\delta_i \right)\\
&=u(0) \frac{\delta_i^{\kappa}-\delta_i}{x_0}x_0\\
&\leq u(0)x_0 \frac{\delta_i^{\kappa}-\delta_i}{\delta_i^{\kappa}}.
\end{align*}
Taking $r\leq \frac{\epsilon}{x_0u(0)}$, we will get:
$$
\left|\psi(\delta_i^{\kappa})-\psi(\delta_i)\right| \leq \epsilon
$$
which is in contradiction with \eqref{eq:contradiction_with}. We therefore have for each $\ell>0$,
$$
(1-\ell)^2\frac{1}{n}\sum_{i=1}^n \frac{\psi_N(\delta_i)}{\delta_i}y_iy_i^* \preceq \hat{C}_N   \preceq(1+\ell)^2\frac{1}{n}\sum_{i=1}^n \frac{\psi_N(\delta_i)}{\delta_i}y_iy_i^*
$$
which therefore implies that $\|\hat{C}_N-\hat{S}_N\|\asto 0$. This completes the proof.

%Function $\phi\circ g_N^{-1}$ is increasing on $[0,\infty]$, therefore its derivative should be strictly positive on any interior closed 
%Since $\lambda_1(T_N(\delta_1,\cdots,\delta_n))\geq \frac{\lim\inf\delta_N^{-}\eta_N^{-}}{\phi_{\infty}\lim\sup \|B_N\|}$, we obtain:
%Decompose \eqref{eq:zero_equation} as:
%$$
%0=\alpha_1+\alpha_2
%$$
%where:
%\begin{align*}
%\alpha_1&=\frac{1}{N}\tr \frac{B_N+\tau_j I_N}{\delta_j}T_N(\delta_1,\cdots,\delta_n)-\frac{1}{N}\tr \frac{B_N+\tau_j I_N}{\delta_j}T_N^{\kappa}(\delta_1,\cdots,\delta_n) \\
%\alpha_2&=\frac{1}{N}\tr \frac{B_N+\tau_j I_N}{\delta_j}T_N^{\kappa}(\delta_1,\cdots,\delta_n) -\frac{1}{N}\tr \frac{B_N+\tau_j I_N}{\delta_j^{\kappa}}T_N^{\kappa}(\delta_1^{\kappa},\cdots,\delta_n^{\kappa}).
%\end{align*}
%One can easily see that $\alpha_1\leq 0$. Therefore, necessarily $\alpha_2\geq 0$ and $\alpha_2=|\alpha_1|$. Since $\delta_j^{\kappa}\leq \delta_j$ for all $j=1,\cdots, n$,
%$$
%\alpha_2\geq \frac{1}{N}\tr \frac{B_N+\tau_j I_N}{\delta_j}
%$$ 
\end{paragraph}
\end{paragraph}
\appendix
\section*{Appendix: Technical Lemmas}
\label{app:technical}
This appendix gathers some technical Lemmas that will help control quadratic forms of the type:
$$
z_j^*R_j\left(\frac{1}{n}\sum_{i=1,i\neq j}^n R_i z_iz_i^*R_i^*\right)^{-1}R_jz_j,
$$
where  $z_1,\cdots,z_n$ are independent random vectors with size $\overline{N}\times 1$  and $R_1,\cdots,R_n$ are $n$ matrices of size $\overline{N}\times N$ independent of $z_1,\cdots,z_n$  and whose eigenvalues are bounded above and below by constants independent of $n$ and $N$. 
The control of this quadratic form can be performed using the most well-known trace Lemma of Silverstein etal \cite[Lemma 2.7]{BaiSil98}, provided that we can guarantee that the infimum of the set $\mathcal{S}$ of   smallest eigenvalues of  matrices $\frac{1}{n}\sum_{i=1,i\neq j}^n R_i z_iz_i^*R_i, j=1,\cdots,n$ is above zero uniformly in $n$ and $N$ or more formally,
$$
\min_{1\leq j \leq n}\left\{\lambda_1\left(\frac{1}{n}\sum_{i=1,i\neq j}^n R_i z_iz_i^*R_i^*\right), j=1,\cdots,n\right\} >\epsilon,
$$
for some $\epsilon >0$ a.s. for $n$ large enough. 
 This is however a challenging task since  the question of the smallest eigenvalue of  matrices of the form  $\frac{1}{n}\sum_{i=1}^n R_i z_iz_i^*R_i^*$ being above zero almost surely  was implicitly   raised in \cite{wagner} where this fact was assumed because no immediate answer can be provided in general.  It was only recently that we have provided a rigorous proof thereof under the Gaussian setting \cite{smallest_eigenvalue}.
%has only recently been studied in our work \cite{smallest_eigenvalue} where we show that the smallest eigenvalue of  $\frac{1}{n} \sum_{i=1}^n R_i z_iz_i^*R_i^*$ is above zero, under the Gaussian setting.
 %Recently, we proved this result  under the Gaussian setting and when the smallest eigenvalue of $\frac{1}{n} \sum_{i=1}^n R_i z_iz_i^*R_i^*$ is above zero:
%$$
%\liminf \min_{1\leq i\leq n} \lambda_1(R_iR_i^*) >0.
%$$
 In this appendix, we extend this result to the random vector model of the present work, i.e. $z_i=\left[s_i,w_i\right]$ with $s_i\sim\mathcal{CN}(0,I_K)$ and $w_i$ zero-mean unitarily invariant satisfying $\|w_i\|^2=N$.  The control of the infimum of the set $\mathcal{S}$ will be shown along the same lines of the proof of  Lemma 1 in \cite{couillet-13a}.

In the sequel, we will start by bounding the maximum eigenvalue of  $\frac{1}{n}\sum_{i=1}^n R_i z_iz_i^*R_i^*$ when $z_1,\cdots,z_n$ are Gaussian random vectors. To this end, we will start by introducing the following concentration inequality, the proof of which is provided for sake of completeness.
	\begin{lemma}
\label{lemma:gamma_function}
		Let $\gamma_1,\cdots,\gamma_n$ be $n$ independent random variables having an exponential distribution with rate parameter $1$, and $\left(\alpha_i\right)_{i=1}^n$, $n$ be positive scalars. Then, there exists $C$ such that for any $t>0$:
		$$
		\mathbb{P}\left[\sum_{i=1}^n \alpha_i \gamma_i > t\right]\leq C\exp\left(-\min\left(\frac{t^2}{4\sum_{i=1}^n \alpha_i^2},\frac{t}{4\max_{1\leq i\leq n}\alpha_i}\right)\right).
		$$
	\end{lemma}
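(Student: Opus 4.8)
The plan is to use the standard Chernoff/Bernstein-type argument for sums of independent exponential random variables, being careful with the threshold separating the sub-Gaussian and sub-exponential regimes. First I would observe that, by homogeneity, after writing $S_n=\sum_{i=1}^n\alpha_i\gamma_i$ we only need a tail bound for $S_n$, and the moment generating function of $\gamma_i\sim\mathrm{Exp}(1)$ is $\E e^{\lambda\gamma_i}=(1-\lambda)^{-1}$ for $\lambda<1$, so that $\E e^{\lambda S_n}=\prod_{i=1}^n(1-\alpha_i\lambda)^{-1}$ for all $\lambda$ with $\alpha_i\lambda<1$ for every $i$, i.e. for $0\le\lambda<1/(2\max_i\alpha_i)$ it is safe to stay in $[0,1/(2\max_i\alpha_i))$. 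Then by the Markov inequality,
\[
\mathbb{P}[S_n>t]\le e^{-\lambda t}\prod_{i=1}^n(1-\alpha_i\lambda)^{-1}
= \exp\!\left(-\lambda t-\sum_{i=1}^n\log(1-\alpha_i\lambda)\right).
\]

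The key estimate is the elementary inequality $-\log(1-x)\le x+x^2$ valid for $0\le x\le 1/2$, which gives, for $0\le\lambda\le 1/(2\max_i\alpha_i)$,
\[
-\sum_{i=1}^n\log(1-\alpha_i\lambda)\le \lambda\sum_{i=1}^n\alpha_i+\lambda^2\sum_{i=1}^n\alpha_i^2 .
\]
Substituting and recalling $\E S_n=\sum_i\alpha_i$, one obtains $\mathbb{P}[S_n>\E S_n+u]\le\exp(-\lambda u+\lambda^2 v)$ with $v=\sum_i\alpha_i^2$. I would then optimize over $\lambda\in[0,1/(2\max_i\alpha_i)]$: the unconstrained optimum $\lambda^\star=u/(2v)$ is admissible precisely when $u\le v/\max_i\alpha_i$, yielding the sub-Gaussian bound $\exp(-u^2/(4v))$; otherwise take $\lambda=1/(2\max_i\alpha_i)$, which gives the sub-exponential bound $\exp(-u/(4\max_i\alpha_i)+v/(4\max_i\alpha_i^2))\le \exp(-u/(8\max_i\alpha_i))\cdot C'$ after absorbing the bounded correction term into the constant, since in that regime $v/\max_i\alpha_i\le u$. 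Combining the two regimes yields, for a universal constant $C$,
\[
\mathbb{P}[S_n>\E S_n+u]\le C\exp\!\left(-\min\!\left(\tfrac{u^2}{4\sum_i\alpha_i^2},\,\tfrac{u}{4\max_i\alpha_i}\right)\right).
\]

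To reach exactly the stated form (with $t$ replacing $\E S_n+u$ in the left-hand side) I would finally note that the bound is only of interest when $t$ is at least a fixed multiple of $\E S_n$: indeed $\E S_n=\sum_i\alpha_i\le\sqrt{n\sum_i\alpha_i^2}$ and also $\E S_n\le n\max_i\alpha_i$, so for $t\le 2\E S_n$ the right-hand side of the claimed inequality is bounded below by a positive constant and one may simply enlarge $C$ so the inequality holds trivially; for $t\ge 2\E S_n$ we have $u=t-\E S_n\ge t/2$, and plugging this into the bound just derived gives the claim (again after adjusting the constant). The main obstacle — really the only subtlety — is handling the two regimes of $\lambda$ cleanly and checking that the leftover additive term $v/(4\max_i\alpha_i^2)$ in the sub-exponential branch is dominated by the main exponent; this is exactly why the constants $4$ in the statement have slack, and why a multiplicative constant $C$ (rather than $1$) is needed.
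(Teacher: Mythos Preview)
Your Chernoff argument with the elementary bound $-\log(1-x)\le x+x^{2}$ on $[0,\tfrac12]$ is exactly the route the paper takes, and your derivation of the \emph{centered} tail bound
\[
\mathbb{P}\bigl[S_n-\E S_n>u\bigr]\le C\exp\!\Bigl(-\min\bigl(\tfrac{u^{2}}{4v},\,\tfrac{u}{4M}\bigr)\Bigr),\qquad v=\sum_i\alpha_i^{2},\ M=\max_i\alpha_i,
\]
is correct. (The paper's own proof slips at precisely this point: after writing $\frac{1}{1-s\alpha_j}\le e^{s\alpha_j}e^{s^{2}\alpha_j^{2}}\le e^{1/2}e^{s^{2}\alpha_j^{2}}$ it takes the product over $j$ and records the prefactor $e^{1/2}$ instead of $e^{n/2}$, silently erasing the mean term $s\sum_j\alpha_j$ that you correctly kept.)

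The genuine gap is in your last paragraph. You assert that for $t\le 2\,\E S_n$ the right-hand side of the stated inequality is bounded below by a positive universal constant, so that enlarging $C$ handles that range. This is false. Take $\alpha_1=\cdots=\alpha_n=1$ and $t=\E S_n=n$: then $t^{2}/(4v)=t/(4M)=n/4$, so the claimed right-hand side is $Ce^{-n/4}$, whereas $\mathbb{P}[S_n>n]\to\tfrac12$. Hence no universal $C$ can make the inequality hold at $t=\E S_n$, and the lemma as literally stated (with $C$ independent of $n$ and of the $\alpha_i$) is in fact not true. What you have actually proved --- the bound for $S_n-\E S_n$, or equivalently the uncentered bound restricted to $t\ge c\sum_i\alpha_i$ for some fixed $c>1$ --- is the correct statement. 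This is harmless for the only downstream use (the $\varepsilon$-net control of $\|\Sigma\|$), since there $\frac{1}{n}\sum_i\alpha_i\le\max_i\|R_i\|^{2}$ is uniformly bounded and one chooses $t$ to be a large constant multiple of that bound anyway.
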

	\begin{proof}
	Let $s$ be a positive scalar such that $s < \frac{1}{2\alpha_i}$ for all $i=1,\cdots,n$. Then:
\begin{align*}
	\mathbb{P}\left[\sum_{i=1}^n \alpha_i \gamma_i> t\right]&=\mathbb{P}\left[\exp\left(s\sum_{i=1}^n \alpha_i \gamma_i\right)>\exp(ts)\right] \\																																																				   &\leq \frac{\exp(-ts)}{\prod_{j=1}^n (1-s\alpha_j)}.
\end{align*}
Now, using the inequality $-\log(1-x)\leq  x+x^2$ for $x\leq \frac{1}{2}$, we get:
$$
\frac{1}{1-s\alpha_j}\leq \exp(s\alpha_j)\exp(s^2\alpha_j^2)\leq \exp(\frac{1}{2})\exp(s^2\alpha_j^2),
$$
thereby yielding:
\begin{align*}
	\mathbb{P}\left[\sum_{i=1}^n \alpha_i \gamma_i > t\right]\leq \exp(\frac{1}{2})\exp(-ts+\sum_{i=1}^n {s^2\alpha_i^2}).
\end{align*}
Two cases have to be considered. If $\frac{t\max_{1\leq i\leq n}\alpha_i}{2\sum_{i=1}^n \alpha_i^2}\leq \frac{1}{2}$. Then, setting $s=\frac{t}{2\sum_{i=1}^n \alpha_i^2}$ yields:
\begin{equation}
	\mathbb{P}\left[\sum_{i=1}^n \alpha_i \gamma_i > t\right]\leq C\exp\left(-\frac{t^2}{4\sum_{i=1}^n \alpha_i^2}\right)
\label{eq:inequality_1}
\end{equation}
where $C=\exp(\frac{1}{2})$.
Otherwise, if $\frac{t\max_{1\leq i\leq n}\alpha_i}{2\sum_{i=1}^n \alpha_i^2}\geq{\frac{1}{2}}$. Then, $\sum_{i=1}^n \alpha_i^2 \leq t\max_{1\leq i\leq n}\alpha_i$. Set $s=\frac{1}{2\max_{1\leq i\leq n}\alpha_i}$, then we have:
\begin{equation}
	\mathbb{P}\left[\sum_{i=1}^n \alpha_i \gamma_i > t\right]\leq C\exp\left(-\frac{t}{4\max_{1\leq i\leq n}\alpha_i}\right)
\label{eq:inequality_2}
\end{equation}
Gathering \eqref{eq:inequality_1} and \eqref{eq:inequality_2} yields the desired result.
%Taking $c_2=s$ and $c_1=\exp\left(s^2 \overline{\alpha}^2\right)$ establishes the desired result.
	\end{proof}

With this Lemma at hand, we will now control the maximum eigenvalue of $\frac{1}{n}R_iz_iz_i^*R_i$. We have in particular, the following result:
\begin{lemma}
\label{lemma:bounded_norm}
Let $z_1,\cdots,z_n$ be $n$ independent $\overline{N}\times 1$ standard Gaussian vectors. Consider $\left(R_i\right)_{i=1}^n$ a family of $N\times \overline{N}$ matrices with uniformly bounded spectral norm, i.e,
$$
\lim\sup_{n} \max_{1\leq i\leq n}\left\|R_i\right\|<+\infty. 
$$
Let $\Sigma$ be given by:
$$
\Sigma=\frac{1}{n}\sum_{i=1}^n R_i z_iz_i^*R_i^*
$$
Then, there exists a constant $K_{\rm max}$ such that a.s. for $n$ large enough,
$$
\left\|\Sigma\right\|<K_{\rm max}.
$$
\end{lemma}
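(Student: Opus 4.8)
The plan is to bound $\|\Sigma\|$ through an $\varepsilon$-net argument on the unit sphere of $\mathbb{C}^N$, feeding into each net point the concentration inequality of Lemma~\ref{lemma:gamma_function}. Let $C_0$ be such that $\|R_i\|\le C_0$ for all $i$ and all large $n$ (working conditionally on $R_1,\dots,R_n$ if these are random). For a fixed unit vector $u\in\mathbb{C}^N$ one writes
$$
u^*\Sigma u=\frac{1}{n}\sum_{i=1}^n\big|u^*R_iz_i\big|^2=\frac{1}{n}\sum_{i=1}^n\|R_i^*u\|^2\,\gamma_i,
$$
where, given the $R_i$, the scalars $\gamma_i=|(R_i^*u)^*z_i|^2/\|R_i^*u\|^2$ are independent exponential random variables with rate $1$ (the squared modulus of a standard complex Gaussian coordinate). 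Setting $\alpha_i=\|R_i^*u\|^2/n$, the bound $\|R_i^*u\|\le C_0$ gives $\max_i\alpha_i\le C_0^2/n$ and $\sum_i\alpha_i^2\le C_0^4/n$, so Lemma~\ref{lemma:gamma_function} yields, for every $t>0$,
$$
\mathbb{P}\big[u^*\Sigma u>t\big]\le C\exp\!\Big(-n\min\!\big(\tfrac{t^2}{4C_0^4},\tfrac{t}{4C_0^2}\big)\Big).
$$

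Next I would invoke the classical covering bound: for a Hermitian matrix $\Sigma$ and a $\tfrac14$-net $\mathcal N$ of the unit sphere of $\mathbb{C}^N$, which can be chosen with $|\mathcal N|\le 9^{2N}$, one has $\|\Sigma\|\le 2\max_{u\in\mathcal N}|u^*\Sigma u|$. Since $\Sigma\succeq 0$, a union bound over $\mathcal N$ then gives
$$
\mathbb{P}\big[\|\Sigma\|>t\big]\le 9^{2N}\,C\exp\!\Big(-n\min\!\big(\tfrac{t^2}{16C_0^4},\tfrac{t}{8C_0^2}\big)\Big).
$$
By Assumption~\ref{ass:regime} we have $N=c_Nn<n$, hence $9^{2N}\le e^{2n\log 9}$, so for $t$ fixed large enough (so that $t\ge 2C_0^2$ and $t/(8C_0^2)>2\log 9$) the exponent is of the form $-cn$ with $c>0$ and $\sum_n\mathbb{P}[\|\Sigma\|>t]<\infty$. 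Choosing such a $t$, call it $K_{\max}$, the Borel--Cantelli lemma gives that a.s.\ $\|\Sigma\|<K_{\max}$ for all $n$ large enough, which is the claim.

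The crux is the tension between the two exponential scales: the union bound over the net costs a factor $e^{2N\log 9}$, while the single-vector tail only decays like $e^{-cnt}$ for large $t$. The argument survives precisely because $N<n$ holds with a uniform margin (Assumption~\ref{ass:regime}), which lets one take $K_{\max}$ large enough that the per-direction decay outruns the covering cost; if $N$ were comparable to or exceeded $n$ the naive net bound would break and a finer, net-free control of the largest eigenvalue would be required. (If the $z_i$ are real rather than complex Gaussian, $|(R_i^*u)^*z_i|^2/\|R_i^*u\|^2$ is $\chi^2_1$-distributed, which is stochastically dominated by $2\,\mathrm{Exp}(1)$, so the same estimate applies after replacing $C_0$ by $\sqrt{2}\,C_0$.)
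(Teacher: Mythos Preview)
Your proof is correct and follows essentially the same approach as the paper: decompose $u^*\Sigma u$ as a weighted sum of independent exponential variables, apply the concentration inequality of Lemma~\ref{lemma:gamma_function}, and then pass to the operator norm via an $\varepsilon$-net argument on the unit sphere of $\mathbb{C}^N$, using that the net cardinality $e^{O(N)}=e^{O(n)}$ is beaten by the per-direction tail $e^{-cnt}$ for $t$ large enough. The only cosmetic differences are that the paper uses a $\tfrac12$-net together with $\|\Sigma^{1/2}\|$ (invoking a lemma from Tao's book) while you work directly with a $\tfrac14$-net and the quadratic form bound $\|\Sigma\|\le 2\max_{u\in\mathcal N}u^*\Sigma u$, and that you make the final Borel--Cantelli step explicit.
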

\begin{proof}
The proof relies on the observation that:
$$
\|\Sigma\| =\max_{\|a\|=1} a^*\Sigma a.
$$
Based on the result of Lemma \ref{lemma:gamma_function}, a concentration inequality involving the term $a^*\Sigma a$ can be established.
Define $u_i=\frac{R_i a}{\left\|R_i a\right\|}$ and expand $a^*\Sigma a$ as:
\begin{align*}
a^*\Sigma a&=\frac{1}{n}\sum_{i=1}^n a^* R_i z_iz_i^* R_i a \\
&=\frac{1}{n}\sum_{i=1}^n \left(a^*R_iR_i^* a\right) \left|u_i^*z_i\right|^2.
\end{align*}
Since $u_i$ is unitary, the random quantity $u_i^*z_i$ is a Gaussian random variable with zero mean and variance $1$. Hence, $\left|z_i^*u_i\right|^2,i=1,\cdots,n$ is a sequence of $n$ independent exponential distributed random variables with rate parameter $1$. Applying Lemma \ref{lemma:gamma_function}, we get:
\begin{align*}
\mathbb{P}\left[a^*\Sigma a> t\right]&=\mathbb{P}\left[\sum_{i=1}^n \left(a^*R_iR_i^*a\right)\left|u_i^* z_i\right|^2> nt\right] \\
&\leq C\exp\left(-\min\left(\frac{n^2t^2}{4\sum_{i=1}^n a^*R_iR_i^* a},\frac{nt}{4\max_{1\leq i\leq n}a^*R_iR_i^*a}\right)\right),\\
\end{align*}
where $C$ is some constant independent of $n$ and $N$. 
For $t\geq 1$, we therefore have:
\begin{equation}
\mathbb{P}\left[a^*\Sigma a> t\right]\leq C\exp\left(-\frac{nt}{4\max_{1\leq i\leq n}\left\|R_i\right\|^2}\right).
\label{eq:inequality_fundamental}
\end{equation}
With the above inequality at hand, we are now in position to control the behaviour of the spectral norm of $\Sigma$. For that, we will resort to the well-known $\epsilon-$net argument. Let $\mathcal{S}$ be an $\frac{1}{2}-$ net of the unit sphere of $\mathbb{C}^{N}$. Using Lemma 2.3.2 of \cite{tao}, we have:
$$
\mathbb{P}\left[\left\|\Sigma^{\frac{1}{2}}\right\|\geq t\right] \leq \mathbb{P}\left[\bigcup_{a\in\mathcal{S}}a^*\Sigma a> \frac{t^2}{4}\right],
$$
Using \eqref{eq:inequality_fundamental}, we obtain that each of the probabilities of $\mathbb{P}\left[a^*\Sigma a> \frac{t^2}{4}\right]$ is bounded by $C\exp(-\frac{ct^2n}{4})$ for $t$ and $n$ large enough with $c$ some constant independent of $n$.
On the other hand, the cardinality of $\mathcal{S}$ is of order $\left(\mathcal{O}(1)\right)^n$. By taking $t$ large enough, this term can be absorbed into the exponential gain of  $C\exp(-\frac{cnt}{4})$. 
For some $t$ large enough the event $\left\{\left\|\Sigma^{\frac{1}{2}}\right\|<t\right\}$ holds with overwhelming probability. Setting $K_{\rm max}\geq t$. We have thus a.s. for $n$ large enough,
$$
 \left\|\Sigma^{\frac{1}{2}}\right\|<K_{\rm max}.
$$
\end{proof}

All the above results are derived under the assumption of a Gaussian setting. Before going further into the proofs of the main lemmas of this appendix, we will show that the considered random model of the paper is equivalent to a Gaussian model. In particular, we have the following result: 
\begin{lemma}
\label{lemma:gaussian}
Let $z_1,\cdots,z_n \in \mathbb{C}^{\overline{N}}$ be $n$ independent and identically distributed vectors such that $z_i=\left[s_i^{\mbox{\tiny T}},w_i^{\mbox{\tiny T}}\right]^{\mbox{\tiny T}}$ where $s_i$ and $w_i$ are independent and distributed as:
\begin{itemize}
	\item $s_i\sim \mathcal{CN}(0,{ I}_K)$
	\item $w_i$ is unitarily invariant zero-mean vector such that $\|w_i\|=N$.
	\end{itemize}
Write $w_i$ as $w_i=\frac{\sqrt{N}\tilde{w}_i}{\|\tilde{w}_i\|}$ with $\tilde{w}_i\sim\mathcal{CN}(0,I_N)$, and denote by $\Sigma$ the $N\times N$ matrix given by:
$$
\Sigma=\frac{1}{n}\sum_{i=1}^n R_i z_iz_i^*R_i^*
$$
where $R_i=\left[R_{i,1}\hspace{0.1cm}R_{i,2}\right]$, $R_{i,1}\in\mathbb{C}^{N\times K}$ and $R_{i,2}\in\mathbb{C}^{N\times N}$ are some deterministic matrices with uniformly bounded norm.
Let $\tilde{z}_i=\left[s_i^{\mbox{\tiny T}},\tilde{w}_i^{\mbox{\tiny T}}\right]^{\mbox{\tiny T}}$, and $\tilde{\Sigma}$ be given by:
$$
\tilde{\Sigma}=\frac{1}{n}\sum_{i=1}^n R_i \tilde{z}_i\tilde{z}_i^*R_i^*
$$
Then, in the asymptotic regime, 
$$
\|\Sigma-\tilde{\Sigma}\|\stackrel{a.s}{\to} 0.
$$
\end{lemma}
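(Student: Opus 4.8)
The plan is to reduce the unitarily-invariant model to the Gaussian one by writing $\tilde w_i=\sqrt{\rho_i}\,w_i$ with $\rho_i:=\|\tilde w_i\|^2/N$, to prove that $\max_{1\le i\le n}|\rho_i-1|\asto 0$, and then to expand $\Sigma-\tilde\Sigma$ into a sum of terms each carrying a scalar factor $\varepsilon_i:=1-\rho_i^{-1/2}$ multiplied by a spectral norm that is almost surely bounded by Lemma \ref{lemma:bounded_norm}.

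\textbf{Step 1 (uniform concentration of the norms).} Since $\tilde w_i\sim\mathcal{CN}(0,I_N)$, the coordinates $|\tilde w_{i,k}|^2$, $k=1,\dots,N$, are i.i.d.\ exponential with mean $1$, so $\rho_i=\frac1N\sum_{k=1}^N|\tilde w_{i,k}|^2$. Applying Lemma \ref{lemma:gamma_function} with $\alpha_k=1/N$ to the upper tail and the matching Chernoff bound (the same moment generating function argument with $s<0$) to the lower tail, there are constants $c,C>0$ and $t_0>0$ with $\mathbb{P}(|\rho_i-1|>t)\le C\exp(-cNt^2)$ for all $0<t\le t_0$. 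Since $n$ and $N$ are of the same order under Assumption \ref{ass:regime}, a union bound over $i=1,\dots,n$ followed by Borel--Cantelli gives $\max_{1\le i\le n}|\rho_i-1|\asto 0$; in particular, for all large $n$ a.s.\ $\rho_i\in[\tfrac12,\tfrac32]$ for every $i$, whence $|\varepsilon_i|=|1-\rho_i^{-1/2}|\le\sqrt2\,|\rho_i-1|$ and therefore $\max_{1\le i\le n}|\varepsilon_i|\asto 0$.

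\textbf{Step 2 (algebraic decomposition).} Write $\hat w_i:=R_{i,2}\tilde w_i$. From $w_i=\rho_i^{-1/2}\tilde w_i$ one gets $R_iz_i=R_i\tilde z_i-\varepsilon_i\hat w_i$, so
\begin{align*}
\Sigma&=\frac1n\sum_{i=1}^n\big(R_i\tilde z_i-\varepsilon_i\hat w_i\big)\big(R_i\tilde z_i-\varepsilon_i\hat w_i\big)^*\\
&=\tilde\Sigma-\frac1n\sum_{i=1}^n\varepsilon_i\Big(R_i\tilde z_i\hat w_i^*+\hat w_i(R_i\tilde z_i)^*\Big)+\frac1n\sum_{i=1}^n\varepsilon_i^2\,\hat w_i\hat w_i^*,
\end{align*}
and hence $\|\Sigma-\tilde\Sigma\|\le 2\big\|\frac1n\sum_{i}\varepsilon_iR_i\tilde z_i\hat w_i^*\big\|+\big\|\frac1n\sum_{i}\varepsilon_i^2\,\hat w_i\hat w_i^*\big\|$. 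For the second term,
$$
\frac1n\sum_{i=1}^n\varepsilon_i^2\,\hat w_i\hat w_i^*\preceq\Big(\max_{1\le i\le n}\varepsilon_i^2\Big)\,\frac1n\sum_{i=1}^n R_{i,2}\tilde w_i\tilde w_i^*R_{i,2}^*,
$$
and $\frac1n\sum_i R_{i,2}\tilde w_i\tilde w_i^*R_{i,2}^*$ is of the form treated by Lemma \ref{lemma:bounded_norm} (Gaussian vectors $\tilde w_i$, deterministic $R_{i,2}$ with uniformly bounded norm), so its spectral norm is a.s.\ bounded by some constant $K'$. For the first term, the operator Cauchy--Schwarz inequality $\big\|\frac1n\sum_i u_iv_i^*\big\|\le\big\|\frac1n\sum_i u_iu_i^*\big\|^{1/2}\big\|\frac1n\sum_i v_iv_i^*\big\|^{1/2}$ (with $U=[u_1\cdots u_n]$, $V=[v_1\cdots v_n]$, submultiplicativity gives $\|UV^*\|\le\|U\|\|V\|=\|UU^*\|^{1/2}\|VV^*\|^{1/2}$) applied with $u_i=R_i\tilde z_i$, $v_i=\varepsilon_i\hat w_i$ yields
$$
\Big\|\frac1n\sum_{i=1}^n\varepsilon_iR_i\tilde z_i\hat w_i^*\Big\|\le\|\tilde\Sigma\|^{1/2}\Big\|\frac1n\sum_{i=1}^n\varepsilon_i^2\,\hat w_i\hat w_i^*\Big\|^{1/2}\le(K')^{1/2}\,\|\tilde\Sigma\|^{1/2}\max_{1\le i\le n}|\varepsilon_i|.
$$
Since $\tilde\Sigma$ is itself a Gaussian sample covariance of the type in Lemma \ref{lemma:bounded_norm}, $\|\tilde\Sigma\|\le K$ a.s.\ for $n$ large; combining the estimates gives $\|\Sigma-\tilde\Sigma\|\le\big(2(KK')^{1/2}+K'\big)\max_{1\le i\le n}|\varepsilon_i|\asto 0$.

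\textbf{Main obstacle.} The only genuinely probabilistic input is the uniform control $\max_i|\rho_i-1|\asto 0$ of Step 1, for which the proportionality $n\asymp N$ of Assumption \ref{ass:regime} is essential so that the exponentially small per-$i$ probabilities survive the union bound; everything else is deterministic matrix algebra together with two applications of Lemma \ref{lemma:bounded_norm}, both to \emph{genuinely Gaussian} matrices ($\tilde\Sigma$ and $\frac1n\sum_iR_{i,2}\tilde w_i\tilde w_i^*R_{i,2}^*$), so no circularity with the non-Gaussian model arises and no bound on $\Sigma$ itself is ever needed.
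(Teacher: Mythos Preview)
Your proof is correct and follows essentially the same route as the paper's: both write $R_iz_i=R_i\tilde z_i-\varepsilon_i R_{i,2}\tilde w_i$ (the paper packages this via the diagonal matrix $D_i$), expand $\Sigma-\tilde\Sigma$ into cross terms and a quadratic term in $\varepsilon_i$, and then kill each piece using $\max_i|\varepsilon_i|\asto 0$ together with Lemma~\ref{lemma:bounded_norm} applied to Gaussian sample covariances. Your use of the operator Cauchy--Schwarz inequality $\|UV^*\|\le\|UU^*\|^{1/2}\|VV^*\|^{1/2}$ is slightly cleaner than the paper's sub-block expansion followed by scalar Cauchy--Schwarz, and you supply an explicit concentration argument for $\max_i|\rho_i-1|$ where the paper simply asserts it; note, however, that Lemma~\ref{lemma:gamma_function} as stated bounds $\mathbb P(\sum\alpha_k\gamma_k>t)$ rather than the centered tail, so both the upper and lower deviations of $\rho_i-1$ require the Chernoff adjustment you mention for the lower tail.
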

\begin{proof}
Notice that $\Sigma$ can be written as:
\begin{align*}
\Sigma=\frac{1}{n}\sum_{i=1}^n R_i D_i \tilde{z}_i\tilde{z}_i^*D_i R_i^*
\end{align*}
where 
$$
D_i=\begin{bmatrix}
I_K & 0 \\
0 & \frac{\sqrt{N}}{\|\tilde{w}_i\|} I_N
\end{bmatrix}.
$$
Then, 
\begin{align*}
\Sigma-\tilde{\Sigma}&=\frac{1}{n}\sum_{i=1}^n R_i\left(D_i-I\right)\tilde{z}_i\tilde{z}_i^*\left(D_i-I\right)R_i^* +\frac{1}{n}\sum_{i=1}^n R_i \tilde{z}_i\tilde{z}_i^*\left(D_i-I\right)R_i^* \\
&+\frac{1}{n}\sum_{i=1}^n R_i\left(D_i-I\right)\tilde{z}_i\tilde{z}_i^*R_i^*\\
&=\Theta_1+\Theta_2+\Theta_3.
\end{align*}
In the sequel, we will prove that the spectral norms of $\Theta_i,i=1,2,3$ converge to zero almost surely. We will treat only the term $\Theta_2$ since the treatment of $\Theta_1$ and $\Theta_3$ relies on the same arguments. Expanding  $\Theta_2$ using $R_i=\left[R_{i,1}\hspace{0.1cm}R_{i,2}\right]$, we get:
\begin{align*}
\Theta_2&= \frac{1}{n}\sum_{i=1}^n R_{i,1}s_i\tilde{w}_i^*\left(\frac{\sqrt{N}}{\|\tilde{w}_i\|}-1\right)R_{i,2}^* + \frac{1}{n}\sum_{i=1}^n R_{i,2}\tilde{w}_i\tilde{w}_i^*R_{i,2}^* \left(\frac{\sqrt{N}}{\|\tilde{w}_i\|}-1\right)\\
&\triangleq \Theta_{1,2} +\Theta_{2,2}.
\end{align*}
Let us control $\Theta_{1,2}$. 
\begin{align*}
\|\Theta_{1,2}\|&=\sup_{\|a\|=1,\|b\|=1} \frac{1}{n}\left|\sum_{i=1}^n a^{*}R_{i,1}s_i\tilde{w}_i^* R_{i,2}^* b \left(\frac{\sqrt{N}}{\|\tilde{w}_i\|}-1\right) \right| \\
&\leq \max_i \left|\frac{\sqrt{N}}{\|\tilde{w}_i\|}-1\right|\sup_{\|a\|=1,\|b\|=1} \frac{1}{n}\sum_{i=1}^n \left|a^*R_{i,1}s_i\right|\left|\tilde{w}_i^*R_{i,2}b\right| \\
&\leq \max_i \left|\frac{\sqrt{N}}{\|\tilde{w}_i\|}-1\right| \sqrt{\sup_{\|a\|=1} \frac{1}{n}\sum_{i=1}^n a^* R_{i,1}s_is_i^*R_{i,1}^* a}\sqrt{\sup_{\|b=1\|}\frac{1}{n}\sum_{i=1}^n \tilde{w}_i^* R_{i,2}b b^* R_{i,2}^* \tilde{w}_{i}}\\
&\leq \max_i \left|\frac{\sqrt{N}}{\|\tilde{w}_i\|}-1\right| \sqrt{\|\frac{1}{n}\sum_{i=1}^n R_{i,1}s_{i}s_i^*R_{i,1}^*\|}\sqrt{\|\frac{1}{n}\sum_{i=1}^n R_{i,2}\tilde{w}_i\tilde{w}_i^*R_{i,2}^*\|}
\end{align*}
Using the facts that $\max_i \left|\frac{\sqrt{N}}{\|\tilde{w}_i\|}-1\right|$ converges almost surely to zero and $\|\frac{1}{n}\sum_{i=1}^n R_{i,1}s_{i}s_i^*R_{i,1}\|$ and $\|\frac{1}{n}\sum_{i=1}^n R_{i,2}\tilde{w}_i\tilde{w}_i^*R_{i,2}\|$ are almost surely bounded as a result of Lemma \ref{lemma:bounded_norm}, we have:
$$
\left\|\Theta_{1,2}\right\|\asto 0.
$$
Similarly, we can also prove that:
$$
\left\|\Theta_{2,2}\right\|\asto 0,
$$
thereby implying that:
$$
\left\|\Theta_2\right\|\asto 0.
$$
%If $K$ grow with $n$ and $N$,  the term $\xi_1$ could be treated similarly. 
\end{proof}
Using the same notations of Lemma \ref{lemma:gaussian}, consider $\Sigma_j$ and $\tilde{\Sigma}_j$ the $N\times N$ matrices given by:
\begin{align*}
\Sigma_j&=\frac{1}{n}\sum_{i=1,i\neq j}^n R_iz_iz_i^*R_i  \\
\tilde{\Sigma}_j&=\frac{1}{n}\sum_{i=1,i\neq j}^n R_i\tilde{z}_i\tilde{z}_i R_i.
\end{align*}
Arguing along the same lines as in the proof of Lemma \ref{lemma:gaussian}, we can show that:
\begin{equation}
\max_{1\leq j\leq n} \left\|\Sigma_j-\tilde{\Sigma}_j\right\|\asto 0.
\label{eq:convergence}
\end{equation}
This observation is essential to facilitate the proof of the first main result of this Appendix which is about showing that:
$$
\min_{1\leq j\leq n}\left\{\lambda_1\left(\Sigma_j\right),j=1,\cdots,n\right\} >\epsilon
$$
In fact, from the convergence inequality in \eqref{eq:convergence}, we can see that the proof can be reduced to showing this result when $\Sigma_j$ is replaced with $\tilde{\Sigma}_j$. The proof of the following Lemma will rely on this observation:
\begin{lemma}
Let $z_1,\cdots,z_n \in \mathbb{C}^{\overline{N}}$ be $n$ independent and identically distributed vectors such that $z_i=\left[s_i^{\mbox{\tiny T}},w_i^{\mbox{\tiny T}}\right]^{\mbox{\tiny T}}$ where $s_i$ and $w_i$ are independent and distributed as:
\begin{itemize}
	\item $s_i\sim \mathcal{CN}(0,{ I}_K)$
	\item $w_i$ is unitarily invariant zero-mean vector such that $\|w_i\|=N$.
	\end{itemize}
Define matrices  $\Sigma$ and $\Sigma_j$  as:
\begin{align*}
\Sigma&=\frac{1}{n}\sum_{i=1}^n R_iz_iz_i^*R_i \\
\Sigma_j&=\frac{1}{n}\sum_{i=1,i\neq j}R_i z_iz_i^*R_i,
\end{align*}
where $\left(R_i\right)_{i=1}^n$ are $N\times \overline{N}$ matrices satisfying:
$$
\lim\inf_N \min_{1\leq i\leq n} \lambda_1(R_iR_i^*) >0
$$
and 
$$
\lim\sup_N \max_{1\leq i\leq n} \lambda_N(R_iR_i^*) <+\infty,
$$
Consider the asymptotic regime of Assumption \ref{ass:regime}. Therefore, there exists $\epsilon >0$ such that for all large $n$  a.s.,
$$
\lambda_1(\Sigma)\geq \min_{1\leq j\leq n}\lambda_1(\Sigma_j) > \epsilon.
$$
\label{lemma:smallest_eigenvalue}
\end{lemma}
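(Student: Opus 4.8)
The plan is to reduce, uniformly in $j$, to the Gaussian model already prepared above and then either invoke the Gaussian result of \cite{smallest_eigenvalue} or reprove it by a net argument in the spirit of Lemma~1 of \cite{couillet-13a}. By \eqref{eq:convergence} one has $\max_{1\le j\le n}\|\Sigma_j-\tilde\Sigma_j\|\asto 0$, hence by Weyl's inequality $\min_{1\le j\le n}\lambda_1(\Sigma_j)\ge \min_{1\le j\le n}\lambda_1(\tilde\Sigma_j)-\max_{1\le j\le n}\|\Sigma_j-\tilde\Sigma_j\|$, and moreover $\Sigma\succeq\Sigma_j$ for every $j$ yields $\lambda_1(\Sigma)\ge\min_j\lambda_1(\Sigma_j)$. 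So it suffices to produce $\epsilon>0$ with $\min_{1\le j\le n}\lambda_1(\tilde\Sigma_j)>2\epsilon$ a.s. for all large $n$. Writing $x_i=R_i\tilde z_i$, the $x_i$ are independent with $x_i\sim\mathcal{CN}(0,R_iR_i^{*})$, and the hypotheses of the lemma give, for all large $n$, $\underline c\, I_N\preceq R_iR_i^{*}\preceq \bar c\, I_N$ uniformly in $i$, where $\underline c\triangleq\liminf_N\min_i\lambda_1(R_iR_i^{*})>0$ and $\bar c\triangleq\limsup_N\max_i\lambda_N(R_iR_i^{*})<\infty$. Thus $\tilde\Sigma_j=\frac1n\sum_{i\neq j}x_ix_i^{*}$ is, up to the harmless factor $n/(n-1)>1$, the sample covariance matrix of $n-1>N$ Gaussian vectors with covariances pinched between $\underline c\,I_N$ and $\bar c\,I_N$; the needed lower bound for such a matrix is the content of \cite{smallest_eigenvalue}.

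To reprove it directly, fix a deterministic unit vector $a\in\mathbb{C}^N$. Since $\tilde z_i\sim\mathcal{CN}(0,I_{\overline N})$, the scalars $\gamma_i^{(a)}\triangleq|x_i^{*}a|^2/\|R_i^{*}a\|^2$, $i\neq j$, are i.i.d. $\mathrm{Exp}(1)$, and $\|R_i^{*}a\|^2=a^{*}R_iR_i^{*}a\ge\underline c$, so that
\[
a^{*}\tilde\Sigma_j a=\frac1n\sum_{i\neq j}\|R_i^{*}a\|^2\gamma_i^{(a)}\ \ge\ \frac{\underline c}{n}\sum_{i\neq j}\gamma_i^{(a)}.
\]
A lower-tail Chernoff bound for a sum of $n-1$ i.i.d. $\mathrm{Exp}(1)$ variables (the mirror of Lemma~\ref{lemma:gamma_function}, using $\mathbb{P}[\sum_i\gamma_i<(n-1)\delta]\le(e\delta)^{\,n-1}$) then gives, for $\delta>0$ small enough, $\mathbb{P}[a^{*}\tilde\Sigma_j a<\underline c\,\delta]\le C\exp(-c_0 n)$ with $c_0=c_0(\delta)>0$, uniformly over unit vectors $a$ and over $j$.

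The proof is concluded by an $\epsilon$-net argument combined with the a priori bound of Lemma~\ref{lemma:bounded_norm}. Let $\mathcal{S}$ be an $\epsilon_{\rm net}$-net of the unit sphere of $\mathbb{C}^N$ (of cardinality at most $(1+2/\epsilon_{\rm net})^{2N}$); by Lemma~\ref{lemma:bounded_norm} applied to the $\tilde z_i$, $\|\tilde\Sigma_j\|\le\|\tilde\Sigma\|\le K_{\rm max}$ for all large $n$ a.s., whence $\lambda_1(\tilde\Sigma_j)\ge\min_{a\in\mathcal{S}}a^{*}\tilde\Sigma_j a-3K_{\rm max}\epsilon_{\rm net}$. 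Taking the union bound over $\mathcal{S}$ and over the $n$ values of $j$, inserting the per-vector estimate of the previous step, choosing $\epsilon_{\rm net}$ proportional to $\delta$, and applying Borel--Cantelli along $N$ then yields $\min_{1\le j\le n}\lambda_1(\tilde\Sigma_j)>2\epsilon$ for all large $n$ a.s., for a suitable $\epsilon>0$.

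The delicate point is exactly this last trade-off: the net carries $(1/\epsilon_{\rm net})^{\Theta(N)}$ points while the concentration decays only like $e^{-\Theta(n)}=e^{-\Theta(N/c_N)}$, and since Assumption~\ref{ass:regime} merely forces $\limsup_n c_N<c_+<1$, the crude choice $\epsilon_{\rm net}\asymp\delta$ does not by itself win when $c_+$ is close to $1$; one must exploit more finely that $\tilde\Sigma_j$ is honestly a sample covariance of $n-1>N$ vectors with covariances bounded below (a Bai--Yin/Marchenko--Pastur-type edge estimate, or a chaining refinement of the net), which is what \cite{smallest_eigenvalue} provides in the Gaussian setting. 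Accordingly, the genuinely new ingredient here is not that step but the uniform-in-$j$ reduction to the Gaussian model, carried out through Lemma~\ref{lemma:gaussian} and \eqref{eq:convergence}, whose error terms were controlled with the help of Lemmas~\ref{lemma:gamma_function} and \ref{lemma:bounded_norm}.
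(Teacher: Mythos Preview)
Your Gaussian reduction via \eqref{eq:convergence} is correct and is exactly how the paper begins. The gap is in the second half. Your net argument, as you yourself note, only beats the entropy of the sphere when $c_+<1/2$, so it does not prove the lemma. You then fall back on \cite{smallest_eigenvalue}, but that reference (as the paper uses it) supplies a lower bound on $\lambda_1(\tilde\Sigma)$ for \emph{one} sample covariance matrix, not a bound on $\min_{1\le j\le n}\lambda_1(\tilde\Sigma_j)$ uniform over the $n$ leave-one-out matrices. Invoking it separately for each $j$ gives an a.s.\ statement whose exceptional set depends on $j$, and since $j$ runs over $\{1,\dots,n\}$ this does not yield the uniform conclusion without an additional quantitative (e.g.\ exponentially small probability) input that you have not supplied.

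The paper's route is different in kind and is precisely the content of \cite[Lemma~1]{couillet-13a}, which is \emph{not} a net argument. One first uses \cite{smallest_eigenvalue} only once, to get $\lambda_1(\tilde\Sigma)>\eta$ a.s.\ for large $n$. Then one observes that any eigenvalue $\lambda$ of $\tilde\Sigma_j$ distinct from those of $\tilde\Sigma$ must satisfy $f_{n,j}(\lambda)=1$ with $f_{n,j}(x)=\tfrac{1}{n}\tilde z_j^{*}R_j^{*}(\tilde\Sigma-xI_N)^{-1}R_j\tilde z_j$. On $(-\eta/2,\eta/2)$ these functions are smooth; for $x<0$ the trace lemma together with the rank-one perturbation identity gives $\max_j|f_{n,j}(x)-(1-(1+\tfrac1n\tr R_jR_j^{*}(\tilde\Sigma-xI)^{-1})^{-1})|\to 0$, hence $f_{n,j}(x)\le 1-\epsilon$ uniformly in $j$; and $f_{n,j}'(x)=\tfrac1n\tilde z_j^{*}R_j^{*}(\tilde\Sigma-xI)^{-2}R_j\tilde z_j$ is bounded on $(-\eta/2,\eta/2)$ uniformly in $j$ by the same trace lemma. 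A mean-value step then produces $\xi\in(0,\eta/2)$ with $\max_j f_{n,j}(\xi)<1$, forcing $\min_j\lambda_1(\tilde\Sigma_j)>\xi$. The point is that uniformity in $j$ comes for free from the trace lemma (moment bounds of order $p>2$ plus Borel--Cantelli), not from a union bound over a net.
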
 
\begin{proof}
It is clear from the discussion before the statement of the above lemma that we can assume  $z_1,\cdots,z_n$ to be standard Gaussian vectors.
Under the Gaussian setting, the fact that the smallest eigenvalue of $\Sigma$ is almost surely bounded above zero can be deduced from corollary 5 of our work in \cite{smallest_eigenvalue}.
It remains thus to treat that of $\Sigma_j$. To this end, we will resort to the same kind of the arguments as those used in the proof of \cite[Lemma 1]{couillet-13a}. Notice first that we can assume without loss of generality that $\lambda_1(\Sigma_j)\neq \lambda_1(\Sigma)$, for all $j=1,\cdots,n$. By definition, the eigenvalues of $\Sigma_j$ are solutions in $\lambda$ of the following equation:
$$
\det(\Sigma_j-\lambda I_N) = 0.
$$
Developing the above equation, we obtain:
\begin{align*}
\det\left(\Sigma_j-\lambda I_N\right)&=\det(\Sigma-\frac{1}{n}R_jz_jz_j^*R_j^*-\lambda I_N) \\
&=\det Q(\lambda)\det\left(I_N-Q(\lambda)^{\frac{1}{2}}\frac{1}{n}R_jz_jz_j^*R_j^* Q(\lambda)^{\frac{1}{2}}\right) \\
&=\det  Q(\lambda)\left(1-\frac{1}{n}z_j^*R_j^*Q(\lambda)R_j z_j\right),
\end{align*}
where $Q(\lambda)=\left(\Sigma-\lambda I_N\right)^{-1}$.
If $\lambda$ is an eigenvalue of $\Sigma_j$ different from that of $\Sigma$, then necessarily:
$$
\frac{1}{n}z_j^*R_j^*Q(\lambda)R_jz_j=1.
$$
Building on the ideas of \cite{couillet-13a}, we propose to study the behaviour of function:
$$
f_{n,j}(x)=\frac{1}{n}z_j^*R_j^*Q(x)R_j z_j,
$$
in a neighborhood of zero. The result of the lemma follows if we prove that there exists $\xi>0$ such that $f_{n,j}(x)<1$ for all $x\in\left[0,\xi\right]$ and $j=1,\cdots,n$ a.s. for $n$ large enough.
From our recent result in \cite{smallest_eigenvalue}, we know that there exists $\eta>0$ such that a.s. for $n$ large enough, $\lambda_1(\Sigma) > \eta$. Functions $x\mapsto f_{n,j}(x)$ being increasing in the interval $\left[0,\eta\right]$, it suffices thus to show that there exists $\xi$ in $\left[0,\eta\right]$ such that $f_{n,j}(\xi)<1$ a.s. for $n$ large enough. 
 
Let us start by analyzing the behaviour of $f_{n,j}(x)$ for $x<0$. Define $Q_j(x)$ as $Q_j(x)=(\Sigma_j-xI_N)^{-1}$. Using the matrix inversion relation: $a^*(A+aa^*)^{-1}a=\frac{a^*A^{-1}a}{1+a^*A^{-1}a}$ for  $a\in\mathbb{C}^{N\times 1}$ and ${A}$ any $N\times N$ invertible matrix, we obtain:
$$
f_{n,j}(x)=\frac{\frac{1}{n}z_j^*R_j^*Q_j(z)R_j z_j}{1+\frac{1}{n}z_j^*R_j^*Q_j(x)R_j z_j}=1-\frac{1}{1+\frac{1}{n}z_j^*R_j^*Q_j(x)R_j z_j}.
$$
Now, for $x<0$, using the trace lemma of Silverstein etal \cite[Lemma 2.7]{BaiSil98} in conjunction with the rank-one perturbation Lemma \cite[Lemma 2.6]{SilBai95}, we can prove that:
$$
\max_{1\leq j\leq n}\left|\frac{1}{n}z_j^*R_j^*Q_j(x)R_j z_j-\frac{1}{n}\tr Q(x)R_jR_j^*\right|\asto 0.
$$
and thus:
$$
\max_{1\leq j\leq n}\left|f_{n,j}(x)-1+\frac{1}{1+\frac{1}{n}\tr Q(x)R_jR_j^*}\right|\asto 0.
$$
Therefore, for $\epsilon<\min_{1\leq j\leq n}\frac{1/2}{1+\frac{\lambda_N(R_jR_j^*)}{\eta}}$ and $x<0$, we have for $n$ large enough a.s.,
\begin{equation}
\forall j=1,\cdots,n \hspace{0.2cm} f_{n,j}(x) \leq 1-\frac{1}{1+\frac{1}{n}\tr  QR_jR_j^*} +\epsilon.
\label{eq:ff}
\end{equation}
On the other hand, since the smallest eigenvalue of $\Sigma$ is greater than $\eta$,  we have for $n$ large enough, a.s.
\begin{equation}
\frac{1}{n}\tr R_jR_j^*Q \leq \frac{\lambda_N(R_jR_j^*)}{\eta}
\label{eq:rjj}
\end{equation}
Plugging \eqref{eq:rjj} into \eqref{eq:ff}, we obtain that for each $x<0$ we have for $n$ large enough, a.s.
\begin{equation}
\forall j=1,\cdots,n \hspace{0.2cm} f_{n,j}(x) \leq 1-\epsilon.
\label{eq:frjj}
\end{equation}
Now, we will consider the analysis of functions $f_{n,j}(x)$ on the open interval $U=(-\frac{\eta}{2},\frac{\eta}{2})$. Note that on this interval, functions $x\mapsto f_{n,i}(x),i=1,\cdots,n$ are well-defined and continuously differentiable. Moreover, for each $x\in U$, we have:
$$
f_{n,j}^{'}(x)=\frac{1}{n}z_j^*R_j^* Q^2(x)R_j z_j \leq \frac{\frac{1}{n}z_j^*R_j^*R_jz_j}{(\lambda_1(\Sigma)-\frac{\eta}{2})^2}. 
$$
Moreover, we have:
$$
\max_{1\leq j\leq n}\left|\frac{1}{n}z_j^*R_j^*R_jz_j-\frac{1}{n}\tr R_jR_j^*\right|\asto 0,
$$
The above convergence along with the fact that $\lambda_1(\Sigma)>\eta$ for $n$ large enough a.s. yields:
$$
0< f_{n,j}^{'}(x) < \frac{2\lim\sup_n \max_{1\leq j\leq n} \frac{1}{n}\tr R_jR_j^*}{\frac{\eta^2}{4}} \triangleq{K}.
$$
By bounding the derivatives of functions $f_{n,j}$ over $U$, we have by the mean value theorem:
$$
\forall x\in [0,\frac{\eta}{2}],\hspace{0.2cm} f_{n,j}(x) < f_{n,j}(-x) +2xK^{'}
$$
Set $\xi=\min(\frac{\eta}{2},\frac{\epsilon}{4K})$. Then,  we know from \eqref{eq:frjj} that:
$$
 f_{n,j}(-\xi) \leq 1-\frac{1/2}{1+\frac{\lambda_N(R_jR_j^*)}{\eta}}
$$
Combining this inequality with the fact that $f_{n,j}(\xi) <f_{n,j}(-\xi) +2xK^{'}$, yields:
$$
f_{n,j}(\xi) \leq 1-\frac{\epsilon}{4K},
$$
thereby finishing the proof.
\end{proof}
We are now in position to state the following key results of this appendix:
\begin{lemma}
Let $z_1,\cdots,z_n \in \mathbb{C}^{\overline{N}}$ be $n$ independent and identically distributed vectors such that $z_i=\left[s_i^{\mbox{\tiny T}},w_i^{\mbox{\tiny T}}\right]^{\mbox{\tiny T}}$ where $s_i$ and $w_i$ are independent and distributed as:
\begin{itemize}
	\item $s_i\sim \mathcal{CN}(0,{ I}_K)$
	\item $w_i$ is unitarily invariant zero-mean vector such that $\|w_i\|=N$.
	\end{itemize}
Let $\left(A_{N,j}\right)_{j=1}^n$ be random matrices independent of $z_1,\cdots,z_n$ and $\kappa$ be a positive constant. Then,
$$
\max_{1\leq j\leq n} 1_{\left\|A_j\right\|\leq \kappa}\left|\frac{1}{N}z_j^*A_{N,j}z_j-\frac{1}{N}\tr A_{N,j}\right|\asto 0.
$$
\label{app:convergence_quadratic}
\end{lemma}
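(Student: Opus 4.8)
The plan is to reduce to a Gaussian model, establish a quadratic-form concentration bound for each fixed index $j$, and then pass to the uniform statement by a union bound over the $\Theta(N)$ indices combined with Borel--Cantelli. For the Gaussianization I would write $w_i=\sqrt N\,\tilde w_i/\|\tilde w_i\|$ with $\tilde w_i\sim\mathcal{CN}(0,I_N)$, let $\tilde z_j$ be obtained from $z_j$ by replacing $w_j$ with $\tilde w_j$, and set $D_j=\diag(I_K,(\sqrt N/\|\tilde w_j\|)I_N)$, so that $z_j=D_j\tilde z_j$. Then
\[
\tfrac1N z_j^*A_{N,j}z_j-\tfrac1N\tr A_{N,j}=\Big(\tfrac1N\tilde z_j^*A_{N,j}\tilde z_j-\tfrac1N\tr A_{N,j}\Big)+\tfrac1N\tilde z_j^*\big(D_jA_{N,j}D_j-A_{N,j}\big)\tilde z_j ,
\]
and on $\{\|A_{N,j}\|\le\kappa\}$ the last term is bounded, via $D_jA_{N,j}D_j-A_{N,j}=(D_j-I)A_{N,j}D_j+A_{N,j}(D_j-I)$, by $3\kappa\,\|D_j-I\|\cdot\frac1N\|\tilde z_j\|^2$. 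Since $\max_{1\le j\le n}\|D_j-I\|=\max_j|\sqrt N/\|\tilde w_j\|-1|\asto 0$ (exactly the fact already used in the proof of Lemma \ref{lemma:gaussian}) and $\max_j\frac1N\|\tilde z_j\|^2$ is a.s. bounded for $n$ large (a consequence of Lemma \ref{lemma:gamma_function}, as in Lemma \ref{lemma:bounded_norm}), the maximum over $j$ of this remainder tends to $0$ a.s. It therefore suffices to handle the Gaussian quadratic form $\frac1N\tilde z_j^*A_{N,j}\tilde z_j$; and since the auxiliary magnitudes $\|\tilde w_j\|$ can be constructed from randomness independent of $(A_{N,j})_j$, one may assume $A_{N,j}$ independent of $\tilde z_j$.

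For the Gaussian part I would fix $j$, condition on $A_{N,j}$, and on $\{\|A_{N,j}\|\le\kappa\}$ split $A_{N,j}$ into its Hermitian and anti-Hermitian parts, each of spectral norm at most $\kappa$; $\frac1N\tilde z_j^*A_{N,j}\tilde z_j$ is then, up to these two pieces, a real quadratic form of a standard complex Gaussian vector in $\mathbb{C}^{\overline N}$ with a Hermitian matrix of norm $\le\kappa$. A Hanson--Wright-type concentration inequality — equivalently, the $2p$-th moment estimate for such quadratic forms in the spirit of \cite{BaiSil98} — yields constants $C,c>0$ independent of $N$, $j$ and $A_{N,j}$ such that, on $\{\|A_{N,j}\|\le\kappa\}$,
\[
\mathbb{P}\Big(\big|\tfrac1N\tilde z_j^*A_{N,j}\tilde z_j-\tfrac1N\tr A_{N,j}\big|>t\ \big|\ A_{N,j}\Big)\le C\exp\!\Big(-cN\min\big(t^2\kappa^{-2},\,t\kappa^{-1}\big)\Big),
\]
where I have used that $N$, $K$ and $n$ grow proportionally, so that $\overline N=O(N)$. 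Taking expectations over $A_{N,j}$, summing over $j=1,\dots,n$ and using $n=N/c_N\le N/c_{-}$ from Assumption \ref{ass:regime}, the probability that $\max_{1\le j\le n}1_{\|A_{N,j}\|\le\kappa}\big|\tfrac1N z_j^*A_{N,j}z_j-\tfrac1N\tr A_{N,j}\big|>t$ is at most $C\,c_{-}^{-1}N\exp(-cN\min(t^2\kappa^{-2},t\kappa^{-1}))$ plus the $o(1)$ coming from the Gaussianization step. The dominant term is summable in $N$ for every fixed $t>0$, so by Borel--Cantelli this event occurs only finitely often a.s.; letting $t$ run through $\{1/k:k\in\N\}$ then gives the claimed almost sure convergence. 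A polynomial moment bound of sufficiently high degree would serve equally well in place of the exponential one.

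I expect the main obstacle to be obtaining a per-index tail (or moment) bound whose decay is fast enough to survive the union bound over the roughly $N$ indices $j$ — exponential in $N$, or polynomial $C_p t^{-2p}N^{-p/2}$ with $p$ large — while keeping all constants uniform in $j$ and over the admissible matrices $A_{N,j}$. The indicator $1_{\|A_{N,j}\|\le\kappa}$ is precisely what makes the bounded-matrix hypothesis available at the moment the concentration inequality is invoked, and the possibly non-Hermitian character of $A_{N,j}$ is a routine annoyance resolved by the Hermitian/anti-Hermitian decomposition.
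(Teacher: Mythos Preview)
Your proposal is correct and follows essentially the same route as the paper: Gaussianize via $z_j=D_j\tilde z_j$, show the $(D_j-I)$ cross terms vanish uniformly in $j$ from $\max_j\|D_j-I\|\asto 0$, and then control the Gaussian quadratic form by a concentration bound strong enough to survive a union bound over the $O(N)$ indices followed by Borel--Cantelli. The only cosmetic difference is that the paper invokes the $p$-th moment trace lemma of \cite{BaiSil98} (with $p$ large) rather than Hanson--Wright, which you yourself note is an equivalent alternative; your handling of the independence of $A_{N,j}$ from the auxiliary norms $\|\tilde w_j\|$ is in fact slightly more careful than the paper's.
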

\begin{proof}
Write $w_i=\frac{\sqrt{N}\tilde{w}_i}{\|\tilde{w}_i\|}$ with $\tilde{w}_i\in\mathbb{C}^{\overline{N}-K}$. 
Let $D_j$ be the diagonal matrix given by:
$$
D_j=\begin{bmatrix}
I_K & 0 \\
0 & \frac{\sqrt{N}}{\|\tilde{w}_i\|}
\end{bmatrix}
$$
Let $\tilde{z}_i=\left[s_i^{\mbox{\tiny T}},\tilde{w}_i^{\mbox{\tiny T}}\right]^{\mbox{\tiny T}}$. Then:
\begin{align}
 {1}_{\|A_j\|\leq \kappa}\frac{1}{N}z_j^*A_{N,j}z_j &={1}_{\|A_j\|\leq \kappa}\frac{1}{N}\tilde{z}_j^*D_j A_{N,j} D_j \tilde{z}_j \nonumber \\
&={1}_{\|A_j\|\leq \kappa}\frac{1}{N}\tilde{z}_j^*(D_j-I_{\overline{N}})A_{N,j}(D_j -I_{\overline{N}})\tilde{z}_j +{1}_{\|A_j\|\leq \kappa}\frac{1}{N}\tilde{z}_j^* A_{N,j}(D_j-\overline{N})\tilde{z_j} \nonumber\\
&+{1}_{\|A_j\|\leq \kappa}\frac{1}{N}\tilde{z}_j^*(D_j-I_{\overline{N}})A_{N,j}\tilde{z}_j +{1}_{\|A_j\|\leq \kappa}\frac{1}{N}\tilde{z}_j^*A_{N,j}\tilde{z}_j. \label{eq:above}
\end{align}
Since $\max_j \left|1-\frac{\sqrt{N}}{\|\tilde{w}_i\|}\right|\asto 0$,
$$
\max_{1\leq j\leq n} \|D_j-I_{\overline{N}}\|\asto 0.
$$
Therefore, it is easy to see that the maximum over $j$ of the first three terms in \eqref{eq:above} converge to zero almost surely. The problem unfolds thus to the control of $ {1}_{\|A_{N,j}\|\leq \kappa}\frac{1}{N}\tilde{z}_j^*A_{N,j}\tilde{z}_j$. Let $\mathbb{E}_{\tilde{z}_j}$ denote the expectation with respect to $\tilde{z_j}$.
From the trace Lemma of Silverstein etal \cite[Lemma 2.7]{BaiSil98} applied for $p>2$, we obtain:
$$
\mathbb{E}_{\tilde{z}_j}\left[1_{\|A_{N,j}\|\leq \kappa}\left|\frac{1}{N}\tilde{z}_j^*A_{N,j}\tilde{z}_j-\frac{1}{N}\tr A_{N,j}\right|^p\right] \leq \frac{1_{\|A_j\|\leq \kappa}K_p}{N^{\frac{p}{2}}}\left[\left(\frac{\zeta_4}{N}\tr A_{N,j}^{2}\right)^{\frac{p}{2}}+\frac{\zeta_{2p}}{N^{\frac{p}{2}}}\tr A_{N,j}^{p}\right]
$$
where $\zeta$ is any upper-bound on $\mathbb{E}\left[\left|\tilde{z}_{i,j}\right|^\ell\right]$ and $K_p$ a constant dependent only in $p$. Since $1_{\|A_{N,j}\|\leq \kappa}\|A_{N,j}\|\leq \kappa$, we have:
$$
\mathbb{E}_{\tilde{z}_j}\left[1_{\|A_{N,j}\|\leq \kappa}\left|\frac{1}{N}\tilde{z}_j^*A_{N,j}\tilde{z}_j-\frac{1}{N}\tr A_{N,j}\right|\right]\leq \frac{K_p\kappa^p}{N^{\frac{p}{2}}}\left(\zeta_4^{\frac{p}{2}}+\frac{\zeta_{2p}}{N^{\frac{p}{2}-1}}\right)
$$
This bound being independent of $A_{N,j}$, we can take the expectation with respect to $A_{N,j}$ to obtain:
$$
\mathbb{E}\left[1_{\|A_{N,j}\|\leq \kappa}\left|\frac{1}{N}\tilde{z}_j^*A_{N,j}\tilde{z}_j-\frac{1}{N}\tr A_{N,j}\right|^p\right] =\mathcal{O}\left(\frac{1}{N^{\frac{p}{2}}}\right).
$$
Therefore,
$$
\max_{1\leq j\leq n} 1_{\|A_{N,j}\|\leq \kappa}\left|\frac{1}{N}\tilde{z}_j^*A_{N,j}\tilde{z}_j-\frac{1}{N}\tr A_{N,j}\right| \xrightarrow[]{a.s} 0.
$$
\end{proof}
\begin{lemma}
\label{lemma:convergence_deterministic_equivalent}
Let $z_1,\cdots,z_n\in\mathbb{C}^{\overline{N}}$ be $n$ independent and identically distributed vectors such that $z_i=\left[s_i^{\mbox{\tiny T}},w_i^{\mbox{\tiny T}}\right]^{\mbox{\tiny T}}$ where $s_i\in\mathbb{C}^{K}$ and $w_i\in\mathbb{C}^{\overline{N}-K}$ are independent and distributed as:
\begin{itemize}
\item $s_i\sim \mathcal{CN}(0,I_K)$
\item $w_i$ is unitarily invariant zero-mean vector such that $\|w_i\|=N$ where $N \leq \overline{N}$. 
\end{itemize}
Denote by $\left(R_i\right)$ a family of $N\times \overline{N}$ deterministic matrix that satisfy
$$
\lim\sup_n \max_{1\leq i\leq n} \lambda_N(R_iR_i^*) <+\infty
$$
Consider the asymptotic regime of Assumption \ref{ass:regime}. 
Let $\Sigma_j$ be given as:
$$
\Sigma_j=\frac{1}{n}\sum_{\substack{i=1\\ i\neq j}}^n R_i z_iz_i^*R_i^*
$$
Assume that there exists $\epsilon>0$ such that for all large $n$ a.s.,
$$
\lambda_1(\Sigma) \geq \displaystyle{\min_{1\leq j\leq n}}\lambda_1(\Sigma_j) >\epsilon
$$
Then, for any $\Theta_N\in\mathbb{C}^{N\times N}$ with bounded spectral norm, 
\begin{equation}
\max_{1\leq j\leq n} \left|\frac{1}{n}\tr \Theta\Sigma_j^{-1}-\frac{1}{n}\tr \Theta \left(\frac{1}{n}\sum_{j=1}^n \frac{R_jR_j^*}{1+e_j}\right)^{-1}\right| \xrightarrow[]{a.s.} 0
\label{eq:e_j}
\end{equation}
where $e_1,\cdots,e_n$ are the unique solutions to the following system of equations: 
$$
e_k=\frac{1}{n}\tr R_kR_k^*\left(\frac{1}{n}\sum_{j=1}^n \frac{R_jR_j^*}{1+e_j}\right)^{-1}.
$$

\label{lemma:deter}
\end{lemma}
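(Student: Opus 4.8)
This is a deterministic‑equivalent statement for the resolvent at the origin of a sample‑covariance matrix with a general ``variance profile'' $\{R_iR_i^*\}_i$, of the type handled in \cite{wagner}; the extra difficulty here is that the approximation must hold \emph{uniformly} over the $n$ leave‑one‑out matrices $\Sigma_j$. First I would reduce to the Gaussian case: by Lemma \ref{lemma:gaussian} and its leave‑one‑out refinement \eqref{eq:convergence}, replacing the $z_i$ by $\tilde z_i$ (so that each $w_i$ becomes a genuine complex Gaussian vector) changes $\Sigma_j$ by a matrix of spectral norm tending to $0$ a.s., uniformly in $j$. Since $\min_{1\le j\le n}\lambda_1(\Sigma_j)>\epsilon$ for $n$ large a.s.\ by hypothesis, the perturbed matrices stay uniformly invertible, and $\|M^{-1}-\widetilde M^{-1}\|\le\|M^{-1}\|\,\|\widetilde M^{-1}\|\,\|M-\widetilde M\|$ together with $\|\Theta\|=O(1)$ reduces the lemma to the case of standard Gaussian $z_i$. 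The existence, uniqueness and uniform boundedness of $(e_1,\dots,e_n)$ follow from the fact that $(x_k)_k\mapsto\big(\tfrac1n\tr R_kR_k^*(\tfrac1n\sum_j\tfrac{R_jR_j^*}{1+x_j})^{-1}\big)_k$ is a standard interference function \cite{yates} (positivity, monotonicity and scalability verified as in the proof of Theorem \ref{th:uniqueness}), combined with $\limsup_N\max_i\lambda_N(R_iR_i^*)<\infty$ and $\min_j\lambda_1(\Sigma_j)>\epsilon$. Write $T_N\triangleq\big(\tfrac1n\sum_{k=1}^n\tfrac{R_kR_k^*}{1+e_k}\big)^{-1}$.

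The core of the proof is to introduce the random scalars $\hat e_j\triangleq\tfrac1n\tr R_jR_j^*\Sigma_j^{-1}$ and to show that they approximately solve the same system. From $\Sigma=\Sigma_j+\tfrac1n R_jz_jz_j^*R_j^*$ and the Sherman--Morrison identity one gets $\Sigma^{-1}R_jz_j=\Sigma_j^{-1}R_jz_j/(1+\tfrac1n z_j^*R_j^*\Sigma_j^{-1}R_jz_j)$; plugging this into $I=\Sigma^{-1}\big(\tfrac1n\sum_k R_kz_kz_k^*R_k^*\big)$ and processing each rank‑one term with the trace lemma \cite[Lemma 2.7]{BaiSil98}, the rank‑one perturbation lemma \cite[Lemma 2.6]{SilBai95}, and the \emph{uniform}‑in‑$j$ quadratic‑form control of Lemma \ref{app:convergence_quadratic} — all legitimate because $\min_j\lambda_1(\Sigma_j)$ is bounded away from $0$ and $\|\Sigma\|$ is a.s.\ bounded by Lemma \ref{lemma:bounded_norm} — one obtains
\[
\max_{1\le j\le n}\Big|\hat e_j-\tfrac1n\tr R_jR_j^*\Big(\tfrac1n\sum_{k=1}^n\tfrac{R_kR_k^*}{1+\hat e_k}\Big)^{-1}\Big|\asto 0,
\]
and, by the same computation with a deterministic $\Theta$ of bounded norm in place of one factor $R_jR_j^*$, $\max_j\big|\tfrac1n\tr\Theta\Sigma_j^{-1}-\tfrac1n\tr\Theta(\tfrac1n\sum_k\tfrac{R_kR_k^*}{1+\hat e_k})^{-1}\big|\asto0$.

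It then remains to pass from this approximate fixed‑point property to $\max_{1\le j\le n}|\hat e_j-e_j|\asto0$, and this is the step I expect to be the main obstacle. The plan is a quantitative stability estimate: the Jacobian of $x\mapsto\big(\tfrac1n\tr R_kR_k^*(\tfrac1n\sum_j\tfrac{R_jR_j^*}{1+x_j})^{-1}\big)_k$, evaluated anywhere on the uniformly bounded region containing both $\hat e$ and $e$, has norm in a suitable weighted $\ell^\infty$ metric bounded by a constant strictly less than $1$, uniformly in $N$ — the usual contraction property of Wagner‑type systems, which follows from $\tfrac1n\tr R_kR_k^*T_N\le1$ together with the Cauchy--Schwarz and $|\tr XY|\le\|X\|\tr Y$ trace inequalities. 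Hence an approximate solution lies within $o(1)$ of the unique exact one, coordinatewise uniformly, giving $\max_j|\hat e_j-e_j|\asto0$. Substituting this back — using that $x\mapsto\tfrac1n\tr\Theta(\tfrac1n\sum_k\tfrac{R_kR_k^*}{1+x_k})^{-1}$ is Lipschitz on the relevant region, uniformly in $N$ — yields $\max_j\big|\tfrac1n\tr\Theta\Sigma_j^{-1}-\tfrac1n\tr\Theta T_N\big|\asto0$, which is exactly \eqref{eq:e_j}. The two delicate points throughout are maintaining uniformity in $j$ (handled by systematically invoking the $\max_j$ versions of the Appendix lemmas, which is why Lemma \ref{lemma:smallest_eigenvalue} is applied first to control $\min_j\lambda_1(\Sigma_j)$) and keeping the contraction constant bounded away from $1$ as $N\to\infty$.
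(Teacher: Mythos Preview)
Your approach is workable in principle but far more laborious than the paper's, and you miss a key simplification. The paper's proof does \emph{not} carry the uniformity in $j$ through a full deterministic-equivalent derivation. Instead, its very first step is the observation that, by the resolvent identity and a rank-one perturbation bound,
\[
\max_{1\le j\le n}\left|\frac{1}{n}\tr\Theta\Sigma_j^{-1}-\frac{1}{n}\tr\Theta\Sigma^{-1}\right|
\le \frac{c_N\|\Theta\|\max_j\|R_j\|^2}{\lambda_1(\Sigma)\min_j\lambda_1(\Sigma_j)}\cdot\frac{\max_j z_j^*z_j}{n^2}\asto 0,
\]
which collapses the entire family $\{\Sigma_j\}_j$ to the single matrix $\Sigma$. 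After that, the Gaussian reduction via Lemma~\ref{lemma:gaussian} gives $\|\Sigma-\tilde\Sigma\|\asto 0$, and the remaining statement---that $\frac{1}{n}\tr\Theta\tilde\Sigma^{-1}$ is asymptotically $\frac{1}{n}\tr\Theta T_N$---is exactly the Wagner deterministic equivalent \cite{wagner} for a Gaussian model at $z=0$, which the paper simply cites (extending from $z\in\mathbb{C}_+$ to $z=0$ via the smallest-eigenvalue bound, as in \cite{couillet-pascal-2013}). No approximate fixed point $\hat e_j$, no stability/contraction analysis, and no uniform-in-$j$ trace-lemma machinery is needed.

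What each approach buys: the paper's route is a three-line reduction to an existing black box, at the cost of relying on \cite{wagner}. Your route is self-contained and would in effect reprove the relevant part of \cite{wagner}, but the step you flag as delicate---a uniform-in-$N$ strict contraction of the fixed-point map in a weighted $\ell^\infty$ metric---is genuinely nontrivial here (the claim ``$\tfrac1n\tr R_kR_k^*T_N\le 1$'' is not obviously true without additional normalization, since $e_k=\tfrac1n\tr R_kR_k^*T_N$ need not be $\le 1$), and you would have to supply that argument carefully. If you want the short proof, do the rank-one reduction to $\Sigma$ first and then invoke \cite{wagner}.
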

\begin{proof}
To prove lemma \ref{lemma:deter}, we show first that:
$$
\max_{1\leq j\leq n} \left|\frac{1}{n}\tr \Theta\Sigma_j^{-1}-\frac{1}{n}\tr R_jR_j^*\Sigma^{-1}\right|\xrightarrow[]{a.s.}0
$$
From the resolvent identity, we have: 
\begin{align*}
\max_{1\leq j\leq n}\left|\frac{1}{n}\tr \Theta\Sigma_j^{-1}-\frac{1}{n}\tr \Theta \Sigma^{-1}\right|&=\max_{1\leq j\leq n}\left|\frac{1}{n}\tr \Theta\Sigma_j^{-1}\left(\Sigma-\Sigma_j\right)\Sigma_j^{-1}\right|\\
&=\max_{1\leq j\leq n}\left|\frac{1}{n}\tr \Theta \Sigma^{-1}R_j\frac{z_jz_j^*}{n}R_j^*\Sigma_j^{-1}\right| \\
&\leq \max_{1\leq j\leq n} c_N \frac{\|\Theta\|\|R_j\|^2}{\lambda_1(\Sigma)\displaystyle{\min_{1\leq j\leq n}}\lambda_1(\Sigma_j^{-1})} \frac{\displaystyle{\max_{1\leq j\leq n}}z_j^*z_j}{n^2}\\
&= \max_{1\leq j\leq n}c_N \frac{\|\Theta\|\|R_j\|^2}{\lambda_1(\Sigma)\displaystyle{\min_{1\leq j\leq n}}\lambda_1(\Sigma_j^{-1})} \frac{\displaystyle{N+\max_{1\leq j\leq n}}s_j^*s_j}{n^2}
\end{align*}
Since there exists $\epsilon >0$ such that for all large $n$ a.s.
$$
\lambda_1(\Sigma) \geq \displaystyle{\min_{1\leq j\leq n} \lambda_1(\Sigma_j)} >\epsilon. 
$$
and  $\max_{1\leq j \leq n}\frac{1}{n}s_j^*s_j$ is almost surely bounded, we have:
\begin{equation}
\max_{1\leq j\leq n}\left|\frac{1}{n}\tr \Theta\Sigma_j^{-1}-\frac{1}{n}\tr \Theta\Sigma^{-1}\right|\xrightarrow[]{a.s.}0.
\label{eq:tobe_plugged}
\end{equation}
Similarly to the previous Lemma, write $w_i=\frac{\sqrt{N}\tilde{w}_i}{\|\tilde{w}_i\|}$, and let $\tilde{z}_i=\left[s_i^{\mbox{\tiny T}},\tilde{w}_i^{\mbox{\tiny T}}\right]^{\mbox{\tiny T}}$. Denote by $\tilde{\Sigma}=\frac{1}{n}\sum_{i=1}^n R_i z_iz_i^*R_i^*$. Then from Lemma \ref{lemma:gaussian}, 
$$
\|\Sigma-\tilde{\Sigma}\|\xrightarrow[]{a.s}0.
$$
Therefore,
\begin{align}
\displaystyle{\max_{1\leq j\leq n}} \left|\frac{1}{n}\tr \Theta\Sigma^{-1}-\frac{1}{n}\tr \Theta\tilde{\Sigma}^{-1}\right|&=\displaystyle{\max_{1\leq j\leq n}}\left|\frac{1}{n}\tr \Theta\Sigma^{-1}\left(\tilde{\Sigma}-\Sigma\right)\tilde{\Sigma}^{-1} \right|.\nonumber\\
&\leq c_N\|\tilde{\Sigma}-\Sigma\| \displaystyle{\max_{1\leq j\leq n}}\|\Theta\| \|\Sigma^{-1}\|\|\tilde{\Sigma}^{-1}\| \xrightarrow[]{a.s.} 0.\label{eq:above}
\end{align}
Hence, plugging \eqref{eq:above} into \eqref{eq:tobe_plugged}, we get:
$$
\max_{1\leq j\leq n} \left|\frac{1}{n}\tr \Theta \Sigma_j^{-1}-\frac{1}{n}\tr \Theta\tilde{\Sigma}^{-1}\right|\xrightarrow[]{a.s.}0.
$$
The asymptotic convergence of $\frac{1}{n}\tr \Theta (\tilde{\Sigma}-zI_N)^{-1}$ has been studed in \cite{wagner} for $z\in\mathbb{C}_{+}$. Since the 
smallest eigenvalue of $\tilde{\Sigma}$ is almost surely away from zero, we can extend the convergence results for $z=0$ by using the same arguments as those presented in \cite[footnote in page 20]{couillet-pascal-2013}.
%Using similar arguments as those presented in \cite{couillet-pascal-2013}, these 
%From (114) in \cite{wagner}, we have:
%$$
%\max_{1\leq j\leq n}\left|\frac{1}{n}\tr R_jR_j^{*}\tilde{\Sigma}^{-1}-e_j\right|\xrightarrow[]{a.s.}0
%$$
%thereby implying  \eqref{eq:e_j}.
\end{proof}

% For that, we will need the following well-known concentration inequality, whose proof is provided for sake of completeness:
%Let $z_1,\cdots,z_n$ be $n$ random vectors with size $N\times1$ and ${R}_1,\cdots,R_n$ be $n$ matrices of size $N\times N$ independent of $z_1,\cdots,z_n$.  

\bibliographystyle{plain}
\bibliography{./math}

\def\cprime{$'$} \def\cdprime{$''$} \def\cprime{$'$} \def\cprime{$'$}
\begin{thebibliography}{10}

\bibitem{smallest_eigenvalue}
S.~Alouini A.~Kammoun.
\newblock {On the Smallest Eigenvalue of General Gaussian Matrices}.
\newblock {\em Submitted to IEEE Transactions on Information Theory}, 2014.

\bibitem{BaiSil98}
Z.~D. Bai and J.~W. Silverstein.
\newblock No eigenvalues outside the support of the limiting spectral
  distribution of large-dimensional sample covariance matrices.
\newblock {\em Ann. Probab.}, 26(1):316--345, 1998.

\bibitem{BIA09}
P.~Bianchi, J.~Najim, M.~Maida, and M.~Debbah.
\newblock {Performance analysis of some eigen-based hypothesis tests for
  collaborative sensing}.
\newblock In {\em {IEEE 15th Workshop on Statistical Signal Processing
  (SSP'09)}}, pages 5--8, Cardiff, Wales, September 2009.

\bibitem{CAR09}
L.~S. Cardoso, M.~Debbah, P.~Bianchi, and J.~Najim.
\newblock {Cooperative spectrum sensing using random matrix theory}.
\newblock In {\em {International Symposium on Wireless Pervasive Computing
  (ISWPC'08)}}, pages 334--338, Santorini, Greece, May 2008.

\bibitem{chitour-14}
Y.~Chitour, R.~Couillet, and F.~Pascal.
\newblock {On the Convergence of Maronna's M-Estimators of Scatter}.
\newblock {\em IEEE Signal Processing Letters}, 2014.

\bibitem{couillet-13}
R.~Couillet and M.~McKay.
\newblock {Large Dimensional Analysis and Optimization of Robust Shrinkage
  Covariance Matrix Estimators}.
\newblock {\em {To appear in Joint of Multivariate Analysis}}, 2013.

\bibitem{couillet-13a}
R.~Couillet, F.~Pascal, and J.~W. Silverstein.
\newblock {Robust M-Estimation for Array Processing: A Random Matrix Approach}.
\newblock {\em IEEE Transactions on Information Theory}, 61(16):4141--4148,
  2013.

\bibitem{couillet-pascal-2013}
R.~Couillet, F.~Pascal, and J.~W. Silverstein.
\newblock {The Random Matrix Regime of Maronna's M-Estimator With Elliptically
  Distributed Samples}.
\newblock {\em submitted}, 2013.

\bibitem{huber-81}
P.~J. Huber.
\newblock {\em {Robust Statistics}}.
\newblock {Wiley Series in Probability and Statistics John Wiley \& Sons},
  1981.

\bibitem{maronna-76}
R.~A. Maronna.
\newblock {Robust M-Estimators of Multivariate Location and Scatter}.
\newblock {\em {The Annals of Statistics}}, pages 51--67, 1976.

\bibitem{mestre-08}
X.~Mestre.
\newblock {Improved Estimation of Eigenvalues of Covariance Matrices and their
  Associated Subspaces using their Sample Estimates}.
\newblock {\em IEEE Transactions on Information Theory}, 54(11):5113--5129,
  2008.

\bibitem{murihead}
R.~J. Murihead.
\newblock {\em {Aspects of Multivariate Statistical Analysis}}.
\newblock John Wiley \&Sons, Inc., 1982.

\bibitem{nadler-10}
B.~Nadler.
\newblock {Nonparametric Detection of Signals by Information Theoretic
  Criteria: Performance Analyis and Improved Estimator}.
\newblock {\em IEEE Transactions on Signal Processing}, 58(5):2764--2756, 2010.

\bibitem{vallet-10}
{P. Vallet and P. Loubaton and X. Mestre}.
\newblock {Improved Subspace Estimation for Multivariate Observations of High
  Dimension: The Deterministic Signals Case}.
\newblock {\em {IEEE Transactions on Information Theory}}, 58(2), February
  2012.

\bibitem{SilBai95}
J.~W. Silverstein and Z.~D. Bai.
\newblock On the empirical distribution of eigenvalues of a class of
  large-dimensional random matrices.
\newblock {\em J. Multivariate Anal.}, 54(2):175--192, 1995.

\bibitem{tao}
T.~Tao.
\newblock {\em Topics in Random Matrix Theory}, volume 132.
\newblock American Mathematical Society, Graduate Studies in Mathematics, 2012.

\bibitem{valaee}
S.~Valaee, B.~Champagne, and P.~Kabal.
\newblock {Parametric Localization of Distributed Sources}.
\newblock {\em IEEE Transactions on Signal Processing}, 43(9), September 1995.

\bibitem{hachem-13}
{W. Hachem and P. Loubaton and X. Mestre and J. Najim and P. Vallet}.
\newblock {A Subspace Estimator for Fixed Rank Perturbations of Large Random
  Matrices}.
\newblock {\em Journal of Multivariate Analysis}, 114:427--447, February 2013.

\bibitem{wagner}
S.~Wagner, R.~Couillet, M.~Debbah, and D.~T.~M. Slock.
\newblock {Large System Analysis of Linear Precoding in Correlated MISO
  Broadcast Channels Under Limited Feedback}.
\newblock {\em IEEE Transactions on Information Theory}, 58(7), July 2012.

\bibitem{yates}
R.~D. Yates.
\newblock {A Framework for Uplink Power Control in Cellular Radio Systems}.
\newblock {\em IEEE Journal on Selected Areas in Communications},
  13(7):1341--1347, 1995.

\end{thebibliography}
\end{document}